\numberwithin{equation}{section}
\numberwithin{figure}{section}
\theoremstyle{plain}
\newtheorem{thm}{\protect\theoremname}
\theoremstyle{plain}
\newtheorem{prop}[thm]{\protect\propositionname}
\theoremstyle{remark}
\newtheorem{rem}[thm]{\protect\remarkname}
\theoremstyle{plain}
\newtheorem{cor}[thm]{\protect\corollaryname}
\theoremstyle{plain}
\newtheorem{lem}[thm]{\protect\lemmaname}
\providecommand{\corollaryname}{Corollary}
\providecommand{\lemmaname}{Lemma}
\providecommand{\propositionname}{Proposition}
\providecommand{\remarkname}{Remark}
\providecommand{\theoremname}{Theorem}
\begin{document}
\title{A Decomposition Formula for Fractional Heston Jump Diffusion Models}
\date{\today }
\author{Marc Lagunas-Merino}
\address{(\emph{Marc Lagunas-Merino}) Department of Mathematics, University
of Oslo\\
P.O. Box 1053, Blindern\\
N\textendash 0316 Oslo, Norway}
\email{marclagu@math.uio.no}
\author{Salvador Ortiz-latorre}
\address{(\emph{Salvador Ortiz-Latorre}) Department of Mathematics, University
of Oslo\\
P.O. Box 1053, Blindern\\
N\textendash 0316 Oslo, Norway}
\email{salvadoo@math.uio.no}
\thanks{We are grateful for the financial support from Department of Mathematics,
University of Oslo.}
\begin{abstract}
We present an option pricing formula for European options in a stochastic
volatility model. In particular, the volatility process is defined
using a fractional integral of a diffusion process and both the stock
price and the volatility processes have jumps in order to capture
the market effect known as leverage effect. We show how to compute
a martingale representation for the volatility process. Finally, using
It\^{o} calculus for processes with discontinuous trajectories, we develop
a first order approximation formula for option prices. There are two
main advantages in the usage of such approximating formulas to traditional
pricing methods. First, to improve computational efficiency, and second,
to have a deeper understanding of the option price changes in terms
of changes in the model parameters. 
\end{abstract}

\maketitle

\section{Introduction}

Classical stochastic volatility models, where the volatility also
follows a diffusion process, have been proven to be capable of reproducing
some important features of the implied volatility together with its
variation with respect to the strike price. These features are usually
described through the smile or skew, see \cite{Gatheral06}. One of
the main downsides from these class of models is their inability to
explain what is known as the term structure of the skew, i.e. the
dependence of the skew to the time to maturity.

For instance, it can be observed in \cite{GaJaRos18}, how a decrease
of the smile amplitude when time to maturity increases, turns out
to be much slower than it should be according to standard stochastic
volatility models. On the other hand, the observed short-time implied
volatility skew slope tends to infinity as time to maturity tends
to zero, whereas this limit is a constant under classical stochastic
volatility models.

On one hand, the long-memory features for the volatility process can
be achieved by the introduction of fractional noises with a Hurst
parameter $H>1/2$ in the volatility process, as introduced by Comte
and Renault in \cite{ComRen98} and deeply studied in \cite{AloYan17}.
This allows to endow the volatility with high persistence in the long-run,
showing the steepness of long-term volatility smiles without over
increasing the short-run persistence. On the other hand, it was proved
in \cite{AloLeoVives07} that these models fail to describe the short-time
behavior of implied volatility. In order to overcome this limitation,
we present one of the possible approaches, consisting in adding a
jump term to both the stock price and the volatility processes with
a correlation factor between both jumps in order to reproduce the
so-called leverage effect\footnote{This is a well known effect observed in most markets that shows how
most measures of volatility of an asset are negatively correlated
with the returns of that asset.}. This results into a combination between the fractional setup of
Al\`os and Yang \cite{AloYan17} and the jump diffusion framework, first
studied by Merton in \cite{Merton76}. In the present work, we build
an option price approximation methodology for a combination of the
fractional and the jump diffusion models previously mentioned, which
capture the short-time and long-time behavior of implied volatility.
Hence, we will study what we will call fractional stochastic volatility
jump diffusion (FSVJJ) models with both jumps in the price and volatility
processes.

The first stochastic volatility models with jumps were introduced
by Bates in \cite{Bates96} and were achieved by incorporating jumps
to stochastic variance processes, previously introduced by Heston
(1993) in \cite{Heston93}. In our case of study, the variance of
stock prices follows a combination of a CIR process \cite{CoxIngRoss85},
a fractional integral of the stochastic term in the CIR process and
a jump term driven by a L\'evy-type process. Adding the jump framework
to the model should improve the market fit for short-term maturity
options, overcoming the original problem of the Heston stochastic
volatility model. The last would require unrealistically high values
for the vol-of-vol parameter in order to obtain a reasonable fit of
short-term smiles. An alternative approach to model short-term smiles
is the use of rough fractional volatility models, see for instance
\cite{GaJaRos18,EuFuMaGaRo19}. 

Pricing derivatives under stochastic volatility jump diffusion models
involves, naturally, an extra degree of complexity compared to the
standard Black-Scholes pricing framework. This has motivated the development
of approximating formulas in the literature such as \cite{HullWhite87,Alos2006,Alos2012,GLMV19,MePoSoVi18}.
These formulas provide good intuition on the behavior of the smiles
and a better understanding of the effects of changes in the model
parameters onto the price of a derivative. Despite not being closed
pricing formulae, they bring clarity to the practitioner to understand
the effects of model parameters in the option price. As well as, speed
up the calibration process as proved in \cite{GLMV19}. We will use
this idea to find a general decomposition formula for a fractional
Heston model with jumps in the price and the volatility processes
under basic integrability conditions. In a recent paper \cite{merino2019decomposition},
the authors find decomposition formulas in the setup of rough volatility
models. 

This paper is organized as follows. First, we introduce in Section
\ref{sec: FSVJJ Model} a detailed description of the model that will
be used throughout the article. Section \ref{sec: Notation} is devoted
to introduce some preliminary concepts and notation needed in later
parts of our study. Then, in Section \ref{sec: General-expansion-formulas}
we present an exact expansion formula for option pricing in terms
of the Black-Scholes formula adjusted by extra terms that depend on
the future expected volatility, the jumps and correlation parameters.
We provide a martingale representation of future expected volatility
in Section \ref{sec: Martingale_Representation}, by means of the
Clark-Ocone-Haussmann formula and end the section providing its dynamics.
We conclude by developing a first order approximating formula of a
call option under our fractional Heston jump diffusion model. Finally,
we provide an appendix in Section \ref{sec:Appendix} with some additional
technical results.

\section{\label{sec: FSVJJ Model}A Fractional Heston Model with Jumps in
the Stock Log-Prices and Volatility (FSVJJ) Model}

Let $T>0$ be fixed time horizon and let $\left(\Omega,\mathcal{F},\mathbb{Q}\right)$
be a complete probability space. Assume that in $\left(\Omega,\mathcal{F},\mathbb{Q}\right)$
there are defined $W=\left\{ W\right\} _{t\in\left[0,T\right]}$ and
$\tilde{W}=\left\{ \tilde{W}\right\} _{t\in\left[0,T\right]}$, two
independent standard Brownian motions. Moreover, assume that in $\left(\Omega,\mathcal{F},\mathbb{Q}\right)$
there is defined a L\'evy subordinator\footnote{Subordinators are L\'evy processes with increasing paths. Alternatively,
they are L\'evy processes with finite variation paths and postive jumps.
These type of processes are often used in L\'evy-based financial models.
See \cite{ContTankov04}. } $J=\left\{ J\right\} _{t\in\left[0,T\right]}$ which is independent
of the Brownian motions $W$ and $\tilde{W}$. We define the filtration
$\mathbb{F}=\left\{ \mathcal{F}_{t}\right\} _{t\in\left[0,T\right]}$
to be the minimal augmented filtration generated by $W,\tilde{W}$
and $J$. We assume that $J$ is a L\'evy subordinator with generating
triplet $\left(0,\ell,\gamma\right)$, that is,
\begin{align*}
\mathbb{E}\left[e^{iyJ_{t}}\right] & =\exp\left(t\left(i\gamma y+\int_{0}^{\infty}\left(e^{iyz}-1-z\mathbf{1}_{\left\{ 0<z<\infty\right\} }\right)\ell(dz)\right)\right)\\
 & =\exp\left(t\left(iby+\int_{0}^{\infty}\left(e^{iyz}-1\right)\ell(dz)\right)\right),
\end{align*}
with $y\in\mathbb{R},\text{\ensuremath{b=\gamma-\int_{0}^{1}z\ell(dz)\geq0}},$
and $\ell$ a L\'evy measure with support on $\left(0,\infty\right)$
and satisfying $\int_{0}^{1}z\ell\left(dz\right)<\infty$. Note that
we are not assuming $\ell\left(\left(0,\infty\right)\right)<\infty$
and, therefore, the process $J$ may have infinite activity, that
is, the $J$ may have an infinite number of small jumps on any finite
time interval. We will assume that for some $C>0$ we have $\int_{1}^{\infty}e^{Cz}\ell(dz)<\infty.$
Then $J$ has moments of all orders. In particular, $J$ has finite
expectation which is equivalent to $\int_{1}^{\infty}z\ell(dz)<\infty$,
see Proposition 3.13 in \cite{ContTankov04}. By the L\'evy-It\^{o} decomposition
we have that $J$ can be written as 
\begin{align*}
J_{t} & =\gamma t+\int_{0}^{t}\int_{0}^{1}z\tilde{N}\left(ds,dz\right)+\int_{0}^{t}\int_{1}^{\infty}zN\left(ds,dz\right)\\
 & =\gamma t-\int_{0}^{t}\int_{0}^{1}z\ell\left(dz\right)ds+\int_{0}^{t}\int_{0}^{\infty}zN\left(ds,dz\right)\\
 & =bt+\int_{0}^{t}\int_{0}^{\infty}zN\left(ds,dz\right)\\
 & =\left(b+\int_{0}^{\infty}z\ell\left(dz\right)\right)t+\int_{0}^{t}\int_{0}^{\infty}z\tilde{N}\left(ds,dz\right)\\
 & =\left(b+\int_{0}^{\infty}z\ell\left(dz\right)\right)t+\tilde{J}_{t}\\
 & =\left(\gamma+\int_{1}^{\infty}z\ell(dz)\right)t+\tilde{J}_{t},
\end{align*}
where $N\left(ds,dz\right)$ denotes the Poisson random measure with
L\'evy measure $\ell$ and $\tilde{N}\left(ds,dz\right)\triangleq N\left(ds,dz\right)-\ell\left(dz\right)ds$
denotes its associated compensated Poisson random measure. 

Next we introduce the volatility process that we will use in our model.
Following \cite{AloYan17}, we start by considering a CIR process
$\bar{\sigma}^{2}=\left\{ \bar{\sigma}_{t}^{2}\right\} _{t\in\left[0,T\right]}$
of the following form
\begin{equation}
d\bar{\sigma}_{t}^{2}=\kappa\left(\theta-\bar{\sigma}_{t}^{2}\right)dt+\nu\sqrt{\bar{\sigma}_{t}^{2}}dW_{t},\label{eq: CIRprocess_dynamics}
\end{equation}
with $\bar{\sigma}_{0}^{2},\theta,\kappa,\nu>0$ and satisfying the
so called Feller condition \cite{Feller51} $2\kappa\theta\geq\nu^{2}$,
in order to ensure positivity for the variance process, see \cite{EuFuMaGaRo19}.
Applying It\^{o} formula to the process $e^{\kappa t}\bar{\sigma}_{t}^{2}$
one can easily see that the following holds
\begin{equation}
\bar{\sigma}_{t}^{2}=\theta+e^{-\kappa t}\left(\bar{\sigma}_{0}^{2}-\theta\right)+\nu\int_{0}^{t}e^{-\kappa(t-s)}\sqrt{\bar{\sigma}_{s}^{2}}dW_{s}.\label{eq: CIRprocess_integralversion}
\end{equation}
Note that we can write $\bar{\sigma}^{2}$ as the sum of two processes
$Y=\left\{ Y_{t}\right\} _{t\in\left[0,T\right]}$ and $Z=\left\{ Z_{t}\right\} _{t\in\left[0,T\right]},$
defined by
\begin{align*}
U_{t} & \triangleq\theta+e^{-\kappa t}\left(\bar{\sigma}_{0}^{2}-\theta\right),\qquad t\in\left[0,T\right],
\end{align*}
and
\[
Z_{t}\triangleq\int_{0}^{t}e^{-\kappa(t-s)}\sqrt{\bar{\sigma}_{s}^{2}}dW_{s},\qquad t\in\left[0,T\right].
\]
Finally, we will consider a fractional volatility model with only
positive jumps built on a combination of the processes $Y,Z$ and
a fractional integral of $Z$. Let us denote this fractional volatility
process with jumps by $\sigma^{2}=\left\{ \sigma_{t}^{2}\right\} _{t\in\left[0,T\right]}$,
where
\begin{equation}
\sigma_{t}^{2}\triangleq U_{t}+c_{1}\nu Z_{t}+c_{2}\nu I_{0+}^{H-\frac{1}{2}}Z_{t}+c_{3}\eta J_{t},\qquad t\in\left[0,T\right],\label{eq: fHestonJump_Volatility}
\end{equation}
where $H\in\left(\frac{1}{2},1\right)$, $c_{1},c_{2},c_{3},\eta\geq0$
and $I_{0+}^{H-\frac{1}{2}}$ is the left-sided fractional Riemann-Liouville
integral of the path $Z_{\cdot}\left(\omega\right)$ of order $H-\frac{1}{2}$
on $\left[0,T\right]$. Recall that, given $f\in L^{1}\left(\left[0,T\right]\right)$,
the left-sided fractional Riemann-Liouville integral of $f$ of order
$\alpha\in\mathbb{R}_{+}$ on $\left[0,T\right]$ is defined as
\begin{equation}
\left(I_{0+}^{\alpha}f\right)\left(t\right)\triangleq\frac{1}{\Gamma\left(\alpha\right)}\int_{0}^{t}f\left(u\right)\left(t-u\right)^{\alpha-1}du.\label{eq: Fractional_Integral}
\end{equation}
On the other hand let $\rho_{1}\in\left[-1,1\right]$, $\rho_{2}\leq0$
and consider a L\'evy model for the dynamics of the stock log-price
in the time interval $\left[0,T\right]$ given by the following equation
\[
X_{t}=x+rt-\frac{1}{2}\int_{0}^{t}\sigma_{s}^{2}ds+\int_{0}^{t}\sigma_{s}\left(\rho_{1}dW_{s}+\sqrt{1-\rho_{1}^{2}}dB_{s}\right)+\rho_{2}\eta J_{t},
\]
where $r$ is the risk-free rate. For the process $e^{-rt}e^{X_{t}}$
to be a local martingale, we must assume, see Corollary 5.2.2 in \cite{Applebaum09},
that
\[
\rho_{2}\eta\gamma+\int_{0}^{1}\left(e^{\rho_{2}\eta z}-1-\rho_{2}\eta z\mathbf{1}_{\left\{ 0<z<1\right\} }\right)\ell\left(dz\right)+\int_{1}^{\infty}\left(e^{\rho_{2}\eta z}-1\right)\ell\left(dz\right)=0.
\]
Note that $\gamma=-\frac{1}{\rho_{2}\eta}\int_{0}^{\infty}\left(e^{\rho_{2}\eta z}-1-\rho_{2}\eta z\mathbf{1}_{\left\{ 0<z<1\right\} }\right)\ell\left(dz\right)$.
Since we have already proved that $J_{t}=\left(\gamma+\int_{1}^{\infty}z\ell(dz)\right)t+\tilde{J}_{t}$,
it is trivial to see that
\begin{align*}
\rho_{2}\eta J_{t} & =\left(-\int_{0}^{\infty}\left(e^{\rho_{2}\eta z}-1-\rho_{2}\eta z\mathbf{1}_{\left\{ 0<z<1\right\} }\right)\ell\left(dz\right)+\rho_{2}\eta\int_{1}^{\infty}z\ell(dz)\right)t+\rho_{2}\eta\tilde{J}_{t}\\
 & =-\int_{0}^{\infty}\left(e^{\rho_{2}\eta z}-1-\rho_{2}\eta z\right)\ell\left(dz\right)t+\rho_{2}\eta\tilde{J}_{t}\\
 & \triangleq-\zeta\left(\rho_{2},\eta\right)t+\rho_{2}\eta\tilde{J}_{t}
\end{align*}

Now we have that, whenever a jump occurs in the volatility process,
since $\rho_{2}\leq0$ we get a negative jump in log-prices, so we
can model the leverage effect. We can also specify the model through
the following equations
\begin{align}
X_{t} & =x+\left(r-\zeta\left(\rho_{2},\eta\right)\right)t-\frac{1}{2}\int_{0}^{t}\sigma_{s}^{2}ds+\int_{0}^{t}\sigma_{s}\left(\rho_{1}dW_{s}+\sqrt{1-\rho_{1}^{2}}d\tilde{W}_{s}\right)+\rho_{2}\eta\tilde{J}_{t},\label{eq: log-price_SDE}\\
\sigma_{t}^{2} & =Y_{t}+c_{1}\nu Z_{t}+c_{2}\nu I_{0+}^{H-\frac{1}{2}}Z_{t}+c_{3}\eta J_{t}.
\end{align}

\section{\label{sec: Notation}Preliminaries and notation}

Following similar ideas to the ones found in \cite{AloYan17}, we
will extend the decomposition formula to a fractional Heston model
with infinite activity jumps in both prices and volatility. It is
well known that $V_{t}$, the value at time $t$ of a derivative whose
payoff is $h\left(X_{T}\right)$, is given by the risk neutral pricing
formula
\[
V_{t}\left(h\right)=e^{-r(T-t)}\mathbb{E}\left[h\left(X_{T}\right)\mid\mathcal{F}_{t}\right].
\]
We now proceed to introduce some definitions and notations which will
be used throughout the paper:
\begin{itemize}
\item We will denote $\mathbb{E}_{t}\left[\cdot\right]\triangleq\mathbb{E}\left[\cdot\mid\mathcal{F}_{t}\right].$
\item Let $BS\left(t,x,\sigma\right)$ denote the price of a plain vanilla
European call option under the classical Black-Scholes pricing formula
with constant volatility $\sigma$, stock log-price $x$, strike price
$K$, time to maturity $T-t$, and constant interest rate $r$. In
this case,
\[
BS\left(t,x,\sigma\right)=e^{x}\Phi\left(d_{+}\right)-Ke^{-r(T-t)}\Phi\left(d_{-}\right),
\]
where $\Phi$ denotes the standard normal cumulative probability function
and $d_{\pm}$ is defined as 
\[
d_{\pm}\triangleq\frac{x-\ln K-r(T-t)}{\sigma\sqrt{T-t}}\pm\frac{\sigma}{2}\sqrt{T-t}.
\]
\item In our setting, the price of a call option at time $t$ is given by
\[
V_{t}=e^{-r\left(T-t\right)}\mathbb{E}_{t}\left[\left(e^{X_{T}}-K\right)^{+}\right].
\]
\item We recall from the Feynman-Kac formula for the continuous version
of the model $\left(\ref{eq: log-price_SDE}\right)$, the operator
\[
\mathcal{L}_{\sigma}\triangleq\partial_{t}+\frac{1}{2}\sigma^{2}\partial_{xx}^{2}+\left(r-\frac{1}{2}\sigma^{2}\right)\partial_{x}-r.
\]
Note that $\mathcal{L}_{\sigma}BS\left(\cdot,\cdot,\sigma\right)=0$
by construction.
\item We will use an adapted projection of the future average variance defined
by
\begin{equation}
v_{t}^{2}\triangleq\frac{1}{T-t}\int_{t}^{T}\mathbb{E}_{t}\left[\sigma_{s}^{2}\right]ds,\label{eq: Projected_Future_Avg_Variance}
\end{equation}
to obtain a decomposition of $V_{t}$ in terms of $v_{t}$. This idea,
used in \cite{Alos2012}, switches an anticipative problem into a
non-anticipative one, related to the adapted process $v_{t}$.
\item We define $M_{t}\triangleq\int_{0}^{T}\mathbb{E}_{t}\left[\sigma_{s}^{2}\right]ds$.
Notice then that the projected future average variance can be written
as $v_{t}^{2}=\frac{1}{T-t}\left(M_{t}-\int_{0}^{t}\sigma_{s}^{2}ds\right).$
Recall that, by definition, $M$ is a martingale with respect to the
filtration generated by $W$ and $J$, is also $\mathcal{F}^{\tilde{W}}$-independent
and its dynamics is given by
\[
dM_{t}=\nu A(T,t)\sqrt{\bar{\sigma}_{t}^{2}}dW_{t}+c_{3}\eta dJ_{t},
\]
as it is later proved in Proposition \ref{prop: Martingale_Dynamics}.
It will also be useful to introduce $M^{c}$, the continuous part
of the process $M$, with dynamics given by 
\[
dM_{t}^{c}=\nu A(T,t)\sqrt{\bar{\sigma}_{t}^{2}}dW_{t}.
\]
\item Let $\left\{ X_{t}\right\} _{t\in\left[0,T\right]}$ and $\left\{ Y_{t}\right\} _{t\in\left[0,T\right]}$
be two It\^{o}-L\'evy processes given by the following dynamics
\begin{align*}
dX_{t} & =\alpha_{x}\left(t\right)dt+\beta_{x}\left(t\right)dW_{t}+\int_{0}^{\infty}\gamma_{x}\left(t,z\right)\tilde{N}\left(dt,dz\right),\\
dY_{t} & =\alpha_{y}\left(t\right)dt+\beta_{y}\left(t\right)dW_{t}+\int_{0}^{\infty}\gamma_{y}\left(t,z\right)\tilde{N}\left(dt,dz\right).
\end{align*}
Given a function $F\in C^{0,1,1}\left(\left[0,T\right]\times\mathbb{R}\times\mathbb{R}\right)$,
we define
\begin{align*}
\Delta_{x}F\left(t,X_{t-},Y_{t-}\right) & \triangleq F\left(t,X_{t-}+\gamma_{x}\left(t,z\right),Y_{t-}\right)-F\left(t,X_{t-},Y_{t-}\right),\\
\Delta_{y}F\left(t,X_{t-},Y_{t-}\right) & \triangleq F\left(t,X_{t-},Y_{t-}+\gamma_{y}\left(t,z\right)\right)-F\left(t,X_{t-},Y_{t-}\right),\\
\Delta_{x}^{2}F\left(t,X_{t-},Y_{t-}\right) & \triangleq\Delta_{x}F\left(t,X_{t-},Y_{t-}\right)-\gamma_{x}\left(t,z\right)\left(\partial_{x}F\right)\left(t,X_{t-},Y_{t-}\right),\\
\Delta_{y}^{2}F\left(t,X_{t-},Y_{t-}\right) & \triangleq\Delta_{y}F\left(t,X_{t-},Y_{t-}\right)-\gamma_{y}\left(t,z\right)\left(\partial_{y}F\right)\left(t,X_{t-},Y_{t-}\right).
\end{align*}
\item The introduction of the following differential operators is very convenient
for notational purposes, and both expressions will be used indistinctly
throughout the article. 
\begin{align*}
\Lambda & \triangleq\partial_{x},\\
\Gamma & \triangleq\left(\partial_{x}^{2}-\partial_{x}\right);\qquad\Gamma^{2}=\Gamma\circ\Gamma=\left(\partial_{x}^{4}-2\partial_{x}^{3}+\partial_{x}^{2}\right).
\end{align*}
\item Given two continuous semimartingales $X$ and $Y$, we define the
following processes
\begin{align*}
L\left[X,Y\right]_{t} & \triangleq\mathbb{E}_{t}\left[\int_{t}^{T}\sigma_{u}d\left[X,Y\right]_{u}\right],\\
D\left[X,Y\right]_{t} & \triangleq\mathbb{E}_{t}\left[\int_{t}^{T}d\left[X,Y\right]_{u}\right],
\end{align*}
for $t\in\left[0,T\right].$
\end{itemize}
Since the derivatives of $BS\left(t,X_{t},v_{t}\right)$ are not bounded,
we will make use of an approximating argument. Consider the approximation
$v_{t}^{\delta}$ of $v_{t}$ for a fixed $\delta>0$, given by 
\begin{equation}
v_{t}^{\delta}\triangleq\sqrt{\frac{1}{T-t}\left(\delta+\int_{t}^{T}\mathbb{E}_{t}\left[\sigma_{s}^{2}\right]ds\right)}=\sqrt{\frac{1}{T-t}\left(\delta+M_{t}-\int_{0}^{t}\sigma_{s}^{2}ds\right)}.\label{eq:V_delta}
\end{equation}
The following proposition shows how the dynamics of this process is
obtained.
\begin{prop}
For fixed $\delta>0$, let $v^{\delta}$ be the approximation of the
adapted projection of the average future volatility process, defined
in $\left(\ref{eq:V_delta}\right)$. Then, the dynamics of $v^{\delta}$
is given by
\begin{align}
dv_{t}^{\delta} & =\frac{(v_{t}^{\delta})^{2}-\sigma_{t}^{2}}{2v_{t}^{\delta}(T-t)}dt+\frac{dM_{t}^{c}}{2v_{t}^{\delta}(T-t)}-\frac{d\left[M^{c},M^{c}\right]_{t}}{8\left(v_{t}^{\delta}\right)^{3}\left(T-t\right)^{2}}\label{eq: future_avg_vol_dynamics}\\
 & \qquad+\int_{0}^{\infty}\Delta_{m}^{2}g^{\delta}\left(t,M_{t-},Y_{t-}\right)\ell\left(dz\right)dt+\int_{0}^{\infty}\Delta_{m}g^{\delta}\left(t,M_{t-},Y_{t-}\right)\tilde{N}\left(dt,dz\right),\nonumber 
\end{align}
where $g^{\delta}\left(t,m,y\right)=\sqrt{\frac{1}{T-t}\left(\delta+m-y\right)}.$
\end{prop}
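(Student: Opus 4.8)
The plan is to recognise $v^{\delta}$ as a smooth, deterministic function evaluated along two explicit semimartingales and then apply the It\^{o} formula for It\^{o}--L\'evy processes. Set $Y_{t}\triangleq\int_{0}^{t}\sigma_{s}^{2}ds$, so that by $\left(\ref{eq:V_delta}\right)$ we have $v_{t}^{\delta}=g^{\delta}\left(t,M_{t},Y_{t}\right)$ with $g^{\delta}\left(t,m,y\right)=\sqrt{\frac{1}{T-t}\left(\delta+m-y\right)}$. Since $\delta>0$ and $\sigma^{2}\geq0$, the argument $\delta+m-y$ stays strictly positive and bounded away from $0$ along the trajectories, so $g^{\delta}$ is $C^{1,2,1}$ on the relevant region and It\^{o}'s formula applies. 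I would then record the semimartingale decompositions: $M$ is a martingale with continuous part $dM_{t}^{c}=\nu A(T,t)\sqrt{\bar{\sigma}_{t}^{2}}dW_{t}$ and jump coefficient $\gamma_{m}\left(t,z\right)=c_{3}\eta z$, while $Y_{t}=\int_{0}^{t}\sigma_{s}^{2}ds$ is a Lebesgue integral of a c\`adl\`ag integrand, hence absolutely continuous with $dY_{t}=\sigma_{t}^{2}dt$; in particular $Y$ is of finite variation and contributes no quadratic-variation, jump, or cross terms.

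Because only $M$ carries a Brownian component and jumps, the It\^{o}--L\'evy formula yields
\begin{align*}
dg^{\delta}\left(t,M_{t},Y_{t}\right) & =\partial_{t}g^{\delta}\,dt+\partial_{y}g^{\delta}\,\sigma_{t}^{2}\,dt+\partial_{m}g^{\delta}\,dM_{t}^{c}+\tfrac{1}{2}\partial_{mm}^{2}g^{\delta}\,d\left[M^{c},M^{c}\right]_{t}\\
 & \quad+\int_{0}^{\infty}\Delta_{m}^{2}g^{\delta}\left(t,M_{t-},Y_{t-}\right)\ell\left(dz\right)dt+\int_{0}^{\infty}\Delta_{m}g^{\delta}\left(t,M_{t-},Y_{t-}\right)\tilde{N}\left(dt,dz\right),
\end{align*}
with $\Delta_{m},\Delta_{m}^{2}$ as in the preliminaries and $\gamma_{m}\left(t,z\right)=c_{3}\eta z$. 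Using the identity $\delta+m-y=\left(g^{\delta}\right)^{2}\left(T-t\right)$ I would compute the derivatives $\partial_{m}g^{\delta}=\frac{1}{2g^{\delta}\left(T-t\right)}$, $\partial_{y}g^{\delta}=-\frac{1}{2g^{\delta}\left(T-t\right)}$, $\partial_{mm}^{2}g^{\delta}=-\frac{1}{4\left(g^{\delta}\right)^{3}\left(T-t\right)^{2}}$ and $\partial_{t}g^{\delta}=\frac{g^{\delta}}{2\left(T-t\right)}$. Substituting these, the two pure-$dt$ terms combine into $\left(\partial_{t}+\sigma_{t}^{2}\partial_{y}\right)g^{\delta}=\frac{\left(g^{\delta}\right)^{2}-\sigma_{t}^{2}}{2g^{\delta}\left(T-t\right)}$, the $dM^{c}$ term becomes $\frac{dM_{t}^{c}}{2g^{\delta}\left(T-t\right)}$, and the second-order term becomes $-\frac{d\left[M^{c},M^{c}\right]_{t}}{8\left(g^{\delta}\right)^{3}\left(T-t\right)^{2}}$. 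Replacing $g^{\delta}\left(t,M_{t},Y_{t}\right)$ by $v_{t}^{\delta}$ everywhere then gives exactly $\left(\ref{eq: future_avg_vol_dynamics}\right)$.

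The main care point is the jump bookkeeping rather than the differential calculus: one must correctly split the Poisson integral into the compensated martingale part $\int\Delta_{m}g^{\delta}\,\tilde{N}$ and the compensator $\int\Delta_{m}^{2}g^{\delta}\,\ell\left(dz\right)dt$, and use that $M$ is a genuine martingale so that no additional $dt$-drift enters beyond $\partial_{t}g^{\delta}$ and $\sigma_{t}^{2}\partial_{y}g^{\delta}$. I would also verify that the moment assumption $\int_{1}^{\infty}e^{Cz}\ell\left(dz\right)<\infty$ together with $\delta>0$ makes $\Delta_{m}g^{\delta}$ and $\Delta_{m}^{2}g^{\delta}$ integrable against $\tilde{N}$ and $\ell$, so that the two jump integrals are well defined and the formula is rigorous.
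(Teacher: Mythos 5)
Your proposal is correct and follows essentially the same route as the paper: write $v_t^{\delta}=g^{\delta}(t,M_t,Y_t)$ with $Y_t=\int_0^t\sigma_s^2\,ds$, apply the multidimensional It\^{o} formula for It\^{o}--L\'evy processes, and combine the $\partial_t$ and $\sigma_t^2\partial_y$ terms into the drift. Your added remarks on the lower bound $\delta+M_t-Y_t\geq\delta>0$ and on the integrability of the jump terms are sensible care points that the paper leaves implicit.
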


\begin{proof}
Define the realized variance as $Y_{t}\triangleq\int_{0}^{t}\sigma_{s}^{2}ds$
for every $t\in\left[0,T\right]$. Note that $v_{t}^{\delta}=g^{\delta}\left(t,M_{t},Y_{t}\right)$,
where the dynamics of $Y$ and $M$ are given by
\begin{align*}
dY_{t} & =\sigma_{t}^{2}dt,\\
dM_{t} & =\nu A\left(T,t\right)\sqrt{\sigma_{t}^{2}}dW_{t}+\int_{0}^{\infty}c_{3}\eta z\tilde{N}\left(dt,dz\right).
\end{align*}

Now following \cite{DOP094}, we can apply the multidimensional It\^{o}
formula for It\^{o}-L\'evy processes to obtain 
\begin{align*}
dv_{t}^{\delta} & =dg^{\delta}\left(t,M_{t},Y_{t}\right)\\
 & =\frac{\partial g^{\delta}}{\partial t}\left(t,M_{t},Y_{t}\right)dt+\frac{\partial g^{\delta}}{\partial m}\left(t,M_{t},Y_{t}\right)dM_{t}^{c}+\frac{\partial g^{\delta}}{\partial y}\left(t,M_{t},Y_{t}\right)dY_{t}\\
 & \qquad+\frac{1}{2}\frac{\partial^{2}g^{\delta}}{\partial m^{2}}\left(t,M_{t},Y_{t}\right)d\left[M^{c},M^{c}\right]_{t}+\int_{0}^{\infty}\Delta_{m}^{2}g^{\delta}\left(t,M_{t-},Y_{t-}\right)\ell\left(dz\right)dt\\
 & \qquad+\int_{0}^{\infty}\Delta_{m}g^{\delta}\left(t,M_{t-},Y_{t-}\right)\tilde{N}\left(dt,dz\right)\\
 & =\frac{(v_{t}^{\delta})^{2}}{2v_{t}^{\delta}(T-t)}dt+\frac{dM_{t}^{c}}{2v_{t}^{\delta}(T-t)}-\frac{\sigma_{t}^{2}}{2v_{t}^{\delta}(T-t)}dt-\frac{d\left[M^{c},M^{c}\right]_{t}}{8\left(v_{t}^{\delta}\right)^{3}\left(T-t\right)^{2}}\\
 & \qquad+\int_{0}^{\infty}\Delta_{m}^{2}g^{\delta}\left(t,M_{t-},Y_{t-}\right)\ell\left(dz\right)dt+\int_{0}^{\infty}\Delta_{m}g^{\delta}\left(t,M_{t-},Y_{t-}\right)\tilde{N}\left(dt,dz\right)\\
 & =\frac{(v_{t}^{\delta})^{2}-\sigma_{t}^{2}}{2v_{t}^{\delta}(T-t)}dt+\frac{dM_{t}^{c}}{2v_{t}^{\delta}(T-t)}-\frac{d\left[M^{c},M^{c}\right]_{t}}{8\left(v_{t}^{\delta}\right)^{3}\left(T-t\right)^{2}}\\
 & \qquad+\int_{0}^{\infty}\Delta_{m}^{2}g^{\delta}\left(t,M_{t-},Y_{t-}\right)\ell\left(dz\right)dt+\int_{0}^{\infty}\Delta_{m}g^{\delta}\left(t,M_{t-},Y_{t-}\right)\tilde{N}\left(dt,dz\right),
\end{align*}
where $\gamma_{m}\left(t,z\right)\triangleq c_{3}\eta z$ in the expression
for $\Delta_{m}^{2}g^{\delta}\left(t,M_{t-},Y_{t-}\right)$.
\end{proof}
\begin{rem}
It will be useful in further results to write down the dynamics for
the continuous parts of $X_{t}$ and $v_{t}^{\delta}$, respectively
as
\begin{align}
dX_{t}^{c} & =\left(r-\zeta\left(\rho_{2},\eta\right)-\frac{1}{2}\sigma_{t}^{2}\right)dt+\sigma_{t}\left(\rho_{1}dW_{t}+\sqrt{1-\rho_{1}^{2}}d\tilde{W}_{t}\right),\label{eq: log-price_SDE(cont.version)}
\end{align}
\begin{align}
d\left(v_{t}^{\delta}\right)^{c} & =\left[\frac{(v_{t}^{\delta})^{2}-\sigma_{t}^{2}}{2v_{t}^{\delta}(T-t)}+\int_{0}^{\infty}\Delta_{m}^{2}g^{\delta}\left(t,M_{t-},Y_{t-}\right)\ell\left(dz\right)\right]dt\label{eq: fut_avg_vol_contversion_dynamics}\\
 & \qquad+\frac{dM_{t}^{c}}{2v_{t}^{\delta}(T-t)}-\frac{d\left[M^{c},M^{c}\right]_{t}}{8\left(v_{t}^{\delta}\right)^{3}\left(T-t\right)^{2}}.\nonumber 
\end{align}
\end{rem}

\section{\label{sec: General-expansion-formulas}General expansion formulas}

In this section we present the main result of the paper. We provide
an exact expansion formula for option pricing in terms of the Black-Scholes
formula adjusted by extra terms depending on the future expected variance,
the L\'evy measure of the jumps and correlation parameters.
\begin{thm}
\label{thm: Generic_Decomposition_Formula}Let $B=\left\{ B_{t},t\in\left[0,T\right]\right\} $
be a continuous semimartingale with respect to the filtration $\mathcal{F}^{W}\vee\mathcal{F}^{J}$,
let $A\left(t,x,y\right)$ be a $C^{1,2,2}\left(\left[0,T\right]\times\mathbb{R}\times\mathbb{R}\right)$
function such that 
\[
\mathcal{L}_{y}A=\left(\partial_{t}+\frac{1}{2}y^{2}\partial_{xx}^{2}+\left(r-\frac{1}{2}y^{2}\right)\partial_{x}-r\right)A=0,
\]
 and let $v_{t}$ and $M_{t}$ be defined as in the previous section.
Then, for every $t\in\left[0,T\right]$, the expectation of $e^{-rT}A\left(T,X_{T},v_{T}\right)B_{T}$
can be written as follows:
\begin{align*}
 & e^{-r\left(T-t\right)}\mathbb{E}_{t}\left[A\left(T,X_{T},v_{T}\right)B_{T}\right]\\
 & =A\left(t,X_{t},v_{t}\right)B_{t}-\zeta\left(\rho_{2},\eta\right)\mathbb{E}_{t}\left[\int_{t}^{T}e^{-rs}\partial_{x}A\left(s,X_{s},v_{s}\right)B_{s}ds\right]\\
 & \quad+\frac{1}{2}\mathbb{E}_{t}\left[\int_{t}^{T}e^{-r\left(s-t\right)}\left(\partial_{x}^{2}-\partial_{x}\right)A\left(s,X_{s},v_{s}\right)B_{s}\left(\sigma_{s}^{2}-v_{s}^{2}\right)ds\right]\\
 & \quad+\frac{1}{2}\mathbb{E}_{t}\left[\int_{t}^{T}e^{-r\left(s-t\right)}\partial_{y}A\left(s,X_{s},v_{s}\right)B_{s}\left[\frac{v_{s}^{2}-\sigma_{s}^{2}}{v_{s}^{\delta}(T-s)}+2\int_{0}^{\infty}\Delta_{m}^{2}g^{\delta}\left(s,M_{s-},Y_{s-}\right)\ell\left(dz\right)\right]ds\right]\\
 & \quad-\frac{1}{8}\mathbb{E}_{t}\left[\int_{t}^{T}e^{-r\left(s-t\right)}\partial_{y}A\left(s,X_{s},v_{s}\right)B_{s}\left[\frac{d\left[M^{c},M^{c}\right]_{s}}{v_{s}^{3}\left(T-s\right)^{2}}\right]\right]\\
 & \quad+\mathbb{E}_{t}\left[\int_{t}^{T}e^{-r\left(s-t\right)}A\left(s,X_{s},v_{s}\right)dB_{s}\right]\\
 & \quad+\frac{\rho_{1}}{2}\mathbb{E}_{t}\left[\int_{t}^{T}e^{-r\left(s-t\right)}\partial_{xy}^{2}A\left(s,X_{s},v_{s}\right)B_{s}\frac{\sigma_{s}}{v_{s}\left(T-s\right)}d\left[W,M^{c}\right]_{s}\right]\\
 & \quad+\frac{1}{8}\mathbb{E}_{t}\left[\int_{t}^{T}e^{-r\left(s-t\right)}\partial_{y}^{2}A\left(s,X_{s},v_{s}\right)B_{s}\frac{d\left[M^{c},M^{c}\right]_{s}}{v_{s}^{2}\left(T-s\right)^{2}}\right]\\
 & \quad+\rho_{1}\mathbb{E}_{t}\left[\int_{t}^{T}e^{-r\left(s-t\right)}\partial_{x}A\left(s,X_{s},v_{s}\right)\sigma_{s}d\left[W,B\right]_{s}\right]\\
 & \quad+\sqrt{1-\rho_{1}^{2}}\mathbb{E}_{t}\left[\int_{t}^{T}e^{-r\left(s-t\right)}\partial_{x}A\left(s,X_{s},v_{s}\right)\sigma_{s}d\left[\tilde{W},B\right]_{s}\right]\\
 & \quad+\mathbb{E}_{t}\left[\int_{t}^{T}e^{-r\left(s-t\right)}\partial_{y}A\left(s,X_{s},v_{s}\right)\frac{d\left[M^{c},B\right]_{s}}{2v_{s}\left(T-s\right)}\right]\\
 & \quad+\mathbb{E}_{t}\left[\int_{t}^{T}\int_{0}^{\infty}e^{-r\left(s-t\right)}B_{s}\left[\Delta_{x}^{2}A\left(s,X_{s-},v_{s-}\right)+\Delta_{y}^{2}A\left(s,X_{s-},v_{s-}\right)\right]\ell(dz)ds\right].
\end{align*}
\end{thm}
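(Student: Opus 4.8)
The plan is to apply the multidimensional Itô formula for Itô–Lévy processes to the discounted product $e^{-rs}A(s,X_s,v_s)B_s$ on the interval $[t,T]$, then take conditional expectation $\mathbb{E}_t[\cdot]$ and exploit that the martingale parts integrate to zero. The key preparatory step is to work with the approximation $v_s^{\delta}$ in place of $v_s$: since the derivatives of $A$ composed with $v_s$ need not be bounded, I would first prove the identity with $v^{\delta}$, using the dynamics $\left(\ref{eq: future_avg_vol_dynamics}\right)$ for $v^{\delta}$ and the continuous-part dynamics $\left(\ref{eq: log-price_SDE(cont.version)}\right)$ and $\left(\ref{eq: fut_avg_vol_contversion_dynamics}\right)$ recorded in the Remark, and then pass to the limit $\delta\downarrow 0$ at the very end via dominated convergence.

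**Main computation.**

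First I would expand $d\bigl(e^{-rs}A(s,X_s,v_s^{\delta})B_s\bigr)$ by Itô's product and chain rules, splitting into continuous and jump contributions. The continuous part produces: the drift $e^{-rs}(\mathcal{L}_y A)B\,ds$ evaluated at $y=v_s^{\delta}$; correction terms because $A$ solves $\mathcal{L}_{v_s}A=0$ only with volatility $v_s$, not the true $\sigma_s$, which is the source of the $\tfrac12(\partial_x^2-\partial_x)A\,B(\sigma_s^2-v_s^2)$ term; the $\partial_y A$ contributions coming from the drift of $dv_s^{\delta}$ in $\left(\ref{eq: fut_avg_vol_contversion_dynamics}\right)$, together with the $-\tfrac18$ term from $d[M^c,M^c]$ and the $\tfrac18\partial_y^2 A$ term from the quadratic variation of $v^{\delta}$; the $\tfrac{\rho_1}{2}\partial^2_{xy}A$ cross term from $d[X^c,(v^{\delta})^c]$ via $d[W,M^c]$; the $dB_s$, $d[W,B]_s$, $d[\tilde W,B]_s$, and $d[M^c,B]_s$ terms from differentiating $B$ and its brackets with $X^c$ and $(v^{\delta})^c$; and the $-\zeta(\rho_2,\eta)\partial_x A\,B$ drift coming from the $X$-drift in $\left(\ref{eq: log-price_SDE}\right)$. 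The jump part, after compensation, yields exactly the compensated integral with $\Delta_x^2 A+\Delta_y^2 A$ against $\ell(dz)\,ds$, the pure martingale pieces against $\tilde N$ dropping out under $\mathbb{E}_t$. I would then integrate from $t$ to $T$, rearrange $\mathcal{L}_{v_s^{\delta}}A = \mathcal{L}_{v_s^{\delta}}A - \mathcal{L}_{v_s}A$ (the latter vanishing) to isolate the $(\sigma_s^2-v_s^2)$ term, and take $\mathbb{E}_t$.

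**Where the work concentrates.**

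The hard part will be controlling the jump terms and verifying that the stochastic-integral martingale parts genuinely have zero conditional expectation; this requires the integrability afforded by the standing assumption $\int_1^{\infty}e^{Cz}\ell(dz)<\infty$, which guarantees exponential moments for $J$ and hence for $e^{X}$, so that the $\tilde N$-integrals and the Brownian integrals against $\partial_x A\,\sigma_s$, $\partial_y A$, etc. are true martingales rather than merely local ones. A secondary technical point is the $\delta\downarrow 0$ passage: the factors $1/(v_s^{\delta})^k$ blow up near maturity, so I would keep $\delta>0$ throughout and argue that each integrand converges and is dominated uniformly in $\delta$ on $[t,T-\varepsilon]$, handling the neighborhood of $T$ with the integrability of the Black-Scholes Greeks. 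Finally I would note the mild notational point that the statement writes $v_s$ where the dynamics naturally produce $v_s^{\delta}$, so the final formula is to be read as the $\delta\to 0$ limit, with $v^{\delta}$ replaced by $v$ and $g^{\delta}$ understood at the corresponding argument.
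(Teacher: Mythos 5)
Your proposal follows essentially the same route as the paper: apply the multidimensional It\^{o} formula for It\^{o}--L\'evy processes to $e^{-rs}A\left(s,X_{s},v_{s}^{\delta}\right)B_{s}$ using the continuous-part dynamics of $X$ and $v^{\delta}$, identify the operator $\mathcal{L}_{\sigma_{s}}$ and split it as $\mathcal{L}_{v_{s}^{\delta}}+\frac{1}{2}\left(\sigma_{s}^{2}-\left(v_{s}^{\delta}\right)^{2}\right)\left(\partial_{x}^{2}-\partial_{x}\right)$ with $\mathcal{L}_{v_{s}^{\delta}}A=0$, take conditional expectation to remove the martingale terms, and let $\delta\searrow0$ by dominated convergence. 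The only minor slip is writing the operator cancellation as ``$\mathcal{L}_{v_{s}}A$ vanishing'' rather than $\mathcal{L}_{v_{s}^{\delta}}A=0$, which does not affect the argument.
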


\begin{proof}
Let $F\left(t,X_{t},v_{t}^{\delta}\right)\triangleq e^{-rt}A\left(t,X_{t},v_{t}^{\delta}\right)B_{t}$,
and apply again, the multidimensional It\^{o} formula for L\'evy processes
to $F\left(t,X_{t},v_{t}^{\delta}\right).$ To do so, we will consider
the continuous parts of $X_{t}$ and $v_{t}^{\delta}$, respectively
given by equations $\left(\ref{eq: log-price_SDE(cont.version)}\right)$
and $\left(\ref{eq: fut_avg_vol_contversion_dynamics}\right)$. Therefore
we can write the It\^{o} formula in its integral version over the time
interval $\left[t,T\right]$ as follows:
\begin{align*}
 & e^{-rT}A\left(T,X_{T},v_{T}^{\delta}\right)B_{T}\\
 & =e^{-rt}A\left(t,X_{t},v_{t}^{\delta}\right)B_{t}-r\int_{t}^{T}e^{-rs}A\left(s,X_{s},v_{s}^{\delta}\right)B_{s}ds\\
 & \qquad+\int_{t}^{T}e^{-rs}\partial_{s}A\left(s,X_{s},v_{s}^{\delta}\right)B_{s}ds+\int_{t}^{T}e^{-rs}\partial_{x}A\left(s,X_{s},v_{s}^{\delta}\right)B_{s}dX_{s}^{c}\\
 & \qquad+\int_{t}^{T}e^{-rs}\partial_{y}A\left(s,X_{s},v_{s}^{\delta}\right)B_{s}d\left(v_{s}^{\delta}\right)^{c}+\int_{t}^{T}e^{-rs}A\left(s,X_{s},v_{s}^{\delta}\right)dB_{s}\\
 & \qquad+\frac{1}{2}\int_{t}^{T}e^{-rs}\partial_{x}^{2}A\left(s,X_{s},v_{s}^{\delta}\right)B_{s}d\left[X^{c},X^{c}\right]_{s}\\
 & \qquad+\int_{t}^{T}e^{-rs}\partial_{xy}^{2}A\left(s,X_{s},v_{s}^{\delta}\right)B_{s}d\left[X^{c},\left(v^{\delta}\right)^{c}\right]_{s}\\
 & \qquad+\frac{1}{2}\int_{t}^{T}e^{-rs}\partial_{y}^{2}A\left(s,X_{s},v_{s}^{\delta}\right)B_{s}d\left[\left(v^{\delta}\right)^{c},\left(v^{\delta}\right)^{c}\right]_{s}\\
 & \qquad+\int_{t}^{T}e^{-rs}\partial_{x}A\left(s,X_{s},v_{s}^{\delta}\right)d\left[X^{c},B\right]_{s}\\
 & \qquad+\int_{t}^{T}e^{-rs}\partial_{y}A\left(s,X_{s},v_{s}^{\delta}\right)d\left[\left(v^{\delta}\right)^{c},B\right]_{s}\\
 & \qquad+\int_{t}^{T}\int_{0}^{\infty}e^{-rs}B_{s}\left[\Delta_{x}^{2}A\left(s,X_{s-},v_{s-}^{\delta}\right)+\Delta_{y}^{2}A\left(s,X_{s-},v_{s-}^{\delta}\right)\right]\ell(dz)ds\\
 & \qquad+\int_{t}^{T}\int_{0}^{\infty}e^{-rs}B_{s}\left[\Delta_{x}A\left(s,X_{s-},v_{s-}^{\delta}\right)+\Delta_{y}A\left(s,X_{s-},v_{s-}^{\delta}\right)\right]\tilde{N}\left(ds,dz\right).
\end{align*}
Now, recalling the definitions of $dX_{t}^{c}$ and $d\left(v_{t}^{\delta}\right)^{c}$,
the fact that
\begin{align*}
d\left[X^{c},X^{c}\right]_{s} & =\sigma_{s}^{2}ds,\\
d\left[X^{c},\left(v^{\delta}\right)^{c}\right]_{s} & =\frac{\sigma_{s}}{2v_{s}^{\delta}\left(T-s\right)}\left(\rho_{1}d\left[W,M^{c}\right]_{s}+\sqrt{1-\rho_{1}^{2}}d\left[\tilde{W},M^{c}\right]_{s}\right),\\
d\left[\left(v^{\delta}\right)^{c},\left(v^{\delta}\right)^{c}\right]_{s} & =\frac{d\left[M^{c},M^{c}\right]_{s}}{4\left(v_{s}^{\delta}\right)^{2}\left(T-s\right)^{2}},\\
d\left[X^{c},B\right]_{s} & =\sigma_{s}\left(\rho_{1}d\left[W,B\right]_{s}+\sqrt{1-\rho_{1}^{2}}d\left[\tilde{W},B\right]_{s}\right),\\
d\left[\left(v^{\delta}\right)^{c},B\right]_{s} & =\frac{d\left[M^{c},B\right]_{s}}{2v_{s}^{\delta}\left(T-s\right)},
\end{align*}
and the independence between $M$ and $\tilde{W}$, we can rewrite
$e^{-rT}A\left(T,X_{T},v_{T}^{\delta}\right)B_{T}$ as
\begin{align*}
 & e^{-rT}A\left(T,X_{T},v_{T}^{\delta}\right)B_{T}\\
 & \qquad=e^{-rt}A\left(t,X_{t},v_{t}^{\delta}\right)B_{t}\\
 & \qquad+\int_{t}^{T}e^{-rs}\left(\partial_{s}+\frac{1}{2}\sigma_{s}^{2}\partial_{x}^{2}+\left(r-\frac{1}{2}\sigma_{s}^{2}\right)\partial_{x}-r\right)A\left(s,X_{s},v_{s}^{\delta}\right)B_{s}ds\\
 & \qquad-\zeta\left(\rho_{2},\eta\right)\int_{t}^{T}e^{-rs}\partial_{x}A\left(s,X_{s},v_{s}^{\delta}\right)B_{s}ds\\
 & \qquad+\int_{t}^{T}e^{-rs}\partial_{x}A\left(s,X_{s},v_{s}^{\delta}\right)B_{s}\left[\sigma_{s}\left(\rho_{1}dW_{s}+\sqrt{1-\rho_{1}^{2}}d\tilde{W}_{s}\right)\right]\\
 & \qquad+\frac{1}{2}\int_{t}^{T}e^{-rs}\partial_{y}A\left(s,X_{s},v_{s}^{\delta}\right)B_{s}\left[\frac{(v_{s}^{\delta})^{2}-\sigma_{s}^{2}}{v_{s}^{\delta}(T-s)}+2\int_{0}^{\infty}\Delta_{m}^{2}g^{\delta}\left(s,M_{s-},Y_{s-}\right)\ell\left(dz\right)\right]ds\\
 & \qquad+\frac{1}{2}\int_{t}^{T}e^{-rs}\partial_{y}A\left(s,X_{s},v_{s}^{\delta}\right)B_{s}\left[\frac{dM_{s}^{c}}{v_{s}^{\delta}(T-s)}\right]\\
 & \qquad-\frac{1}{8}\int_{t}^{T}e^{-rs}\partial_{y}A\left(s,X_{s},v_{s}^{\delta}\right)B_{s}\left[\frac{d\left[M^{c},M^{c}\right]_{s}}{\left(v_{s}^{\delta}\right)^{3}\left(T-s\right)^{2}}\right]\\
 & \qquad+\int_{t}^{T}e^{-rs}A\left(s,X_{s},v_{s}^{\delta}\right)dB_{s}\\
 & \qquad+\frac{\rho_{1}}{2}\int_{t}^{T}e^{-rs}\partial_{xy}^{2}A\left(s,X_{s},v_{s}^{\delta}\right)B_{s}\frac{\sigma_{s}}{v_{s}^{\delta}\left(T-s\right)}d\left[W,M^{c}\right]_{s}\\
 & \qquad+\frac{1}{2}\int_{t}^{T}e^{-rs}\partial_{y}^{2}A\left(s,X_{s},v_{s}^{\delta}\right)B_{s}\frac{d\left[M^{c},M^{c}\right]_{s}}{4\left(v_{s}^{\delta}\right)^{2}\left(T-s\right)^{2}}\\
 & \qquad+\rho_{1}\int_{t}^{T}e^{-rs}\partial_{x}A\left(s,X_{s},v_{s}^{\delta}\right)\sigma_{s}d\left[W,B\right]_{s}\\
 & \qquad+\sqrt{1-\rho_{1}^{2}}\int_{t}^{T}e^{-rs}\partial_{x}A\left(s,X_{s},v_{s}^{\delta}\right)\sigma_{s}d\left[\tilde{W},B\right]_{s}\\
 & \qquad+\int_{t}^{T}e^{-rs}\partial_{y}A\left(s,X_{s},v_{s}^{\delta}\right)\frac{d\left[M^{c},B\right]_{s}}{2v_{s}^{\delta}\left(T-s\right)}\\
 & \qquad+\int_{t}^{T}\int_{0}^{\infty}e^{-rs}B_{s}\left[\Delta_{x}^{2}A\left(s,X_{s-},v_{s-}^{\delta}\right)+\Delta_{y}^{2}A\left(s,X_{s-},v_{s-}^{\delta}\right)\right]\ell(dz)ds\\
 & \qquad+\int_{t}^{T}\int_{0}^{\infty}e^{-rs}B_{s}\left[\Delta_{x}A\left(s,X_{s-},v_{s-}^{\delta}\right)+\Delta_{y}A\left(s,X_{s-},v_{s-}^{\delta}\right)\right]\tilde{N}\left(ds,dz\right).
\end{align*}
We can identify in the previous expression, the operator $\mathcal{L}_{\sigma_{s}}$.
We know from \cite{JafaVives13}, that the following is true, 
\[
\mathcal{L}_{\sigma_{s}}=\mathcal{L}_{v_{s}^{\delta}}+\frac{1}{2}\left(\sigma_{s}^{2}-\left(v_{s}^{\delta}\right)^{2}\right)\left(\partial_{x}^{2}-\partial_{x}\right),
\]
where $\mathcal{L}_{v_{s}^{\delta}}A=0$. Therefore, multiplying by
$e^{rt}$, taking conditional expectation and letting $\delta\searrow0$,
combined with the use of the dominated convergence theorem, we obtain
result follows, ending the proof.
\end{proof}
If we assume additional properties on the function $A$ one can simplify
the formula given in the previous theorem. In the following corollary,
we assume certain relationship of the partial derivative with respect
to $y$ and the first and second order partial derivatives with respect
to $x$. This relation is often referred in the literature as the
Delta-Gamma-Vega relationship and it is satisfied by the Black-Scholes
function.
\begin{cor}
\label{cor: Funct_DGV_Relationship}Let the function $A$ and the
process $B$, be defined as in Theorem \ref{thm: Generic_Decomposition_Formula}.
Suppose that the function $A$ satisfies the Delta-Gamma-Vega relationship
given by
\begin{equation}
\partial_{y}A\left(t,x,y\right)\frac{1}{y\left(T-t\right)}=\left(\partial_{xx}^{2}-\partial_{x}\right)A\left(t,x,y\right).\label{eq: Functional_DGV_Relationship}
\end{equation}
Then, for every $t\in\left[0,T\right]$, the following formula holds:
\begin{align*}
 & e^{-r\left(T-t\right)}\mathbb{E}_{t}\left[A\left(T,X_{T},v_{T}\right)B_{T}\right]\\
 & \qquad=A\left(t,X_{t},v_{t}\right)B_{t}-\zeta\left(\rho_{2},\eta\right)\mathbb{E}_{t}\left[\int_{t}^{T}e^{-rs}\Lambda A\left(s,X_{s},v_{s}\right)B_{s}ds\right]\\
 & \qquad+\mathbb{E}_{t}\left[\int_{t}^{T}e^{-r\left(s-t\right)}\Gamma A\left(s,X_{s},v_{s}\right)B_{s}\left[v_{s}\left(T-s\right)\int_{0}^{\infty}\Delta_{m}^{2}g\left(s,M_{s-},Y_{s-}\right)\ell\left(dz\right)\right]ds\right]\\
 & \qquad+\mathbb{E}_{t}\left[\int_{t}^{T}e^{-r\left(s-t\right)}A\left(s,X_{s},v_{s}\right)dB_{s}\right]\\
 & \qquad+\frac{\rho_{1}}{2}\mathbb{E}_{t}\left[\int_{t}^{T}e^{-r\left(s-t\right)}\Lambda\Gamma A\left(s,X_{s},v_{s}\right)B_{s}\sigma_{s}d\left[W,M^{c}\right]_{s}\right]\\
 & \qquad+\frac{1}{8}\mathbb{E}_{t}\left[\int_{t}^{T}e^{-r\left(s-t\right)}\Gamma^{2}A\left(s,X_{s},v_{s}\right)B_{s}d\left[M^{c},M^{c}\right]_{s}\right]\\
 & \qquad+\rho_{1}\mathbb{E}_{t}\left[\int_{t}^{T}e^{-r\left(s-t\right)}\Lambda A\left(s,X_{s},v_{s}\right)\sigma_{s}d\left[W,B\right]_{s}\right]\\
 & \qquad+\sqrt{1-\rho_{1}^{2}}\mathbb{E}_{t}\left[\int_{t}^{T}e^{-r\left(s-t\right)}\Lambda A\left(s,X_{s},v_{s}\right)\sigma_{s}d\left[\tilde{W},B\right]_{s}\right]\\
 & \qquad+\frac{1}{2}\mathbb{E}_{t}\left[\int_{t}^{T}e^{-r\left(s-t\right)}\Gamma A\left(s,X_{s},v_{s}\right)d\left[M^{c},B\right]_{s}\right]\\
 & \qquad+\mathbb{E}_{t}\left[\int_{t}^{T}\int_{0}^{\infty}e^{-r\left(s-t\right)}B_{s}\left[\Delta_{x}^{2}A\left(s,X_{s-},v_{s-}\right)+\Delta_{y}^{2}A\left(s,X_{s-},v_{s-}\right)\right]\ell(dz)ds\right].
\end{align*}
\end{cor}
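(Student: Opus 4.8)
The plan is to start from the exact identity in Theorem \ref{thm: Generic_Decomposition_Formula} and feed in the Delta-Gamma-Vega relationship $\left(\ref{eq: Functional_DGV_Relationship}\right)$, rewritten as $\partial_{y}A=y\left(T-t\right)\Gamma A$, to eliminate every $y$-derivative of $A$ in favour of $x$-derivatives. First I would record the two derived identities that this relation forces. Since $\Gamma=\partial_{x}^{2}-\partial_{x}$ acts only on the $x$-variable, it commutes with multiplication by the factor $y\left(T-t\right)$, so differentiating the relation in $x$ gives $\partial_{xy}^{2}A=y\left(T-t\right)\Lambda\Gamma A$, and differentiating once more in $y$ gives $\partial_{y}^{2}A=\left(T-t\right)\Gamma A+y^{2}\left(T-t\right)^{2}\Gamma^{2}A$. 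These three identities, evaluated along $\left(s,X_{s},v_{s}\right)$ after the limit $\delta\searrow0$ already performed in the theorem, are the only inputs beyond Theorem \ref{thm: Generic_Decomposition_Formula}.

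The heart of the argument is two exact cancellations, and I would isolate them before treating the routine terms. The first: the term $\tfrac{1}{2}\Gamma A\,B_{s}\left(\sigma_{s}^{2}-v_{s}^{2}\right)$ produced by the splitting $\mathcal{L}_{\sigma_{s}}=\mathcal{L}_{v_{s}}+\tfrac{1}{2}\left(\sigma_{s}^{2}-v_{s}^{2}\right)\Gamma$ cancels against the first piece of the term $\tfrac{1}{2}\partial_{y}A\,B_{s}\tfrac{v_{s}^{2}-\sigma_{s}^{2}}{v_{s}\left(T-s\right)}$, because inserting $\partial_{y}A=v_{s}\left(T-s\right)\Gamma A$ makes the $v_{s}\left(T-s\right)$ factors cancel, leaving two copies of $\tfrac{1}{2}\Gamma A\,B_{s}\left(\sigma_{s}^{2}-v_{s}^{2}\right)$ of opposite sign. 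The second: inserting $\partial_{y}^{2}A=\left(T-s\right)\Gamma A+v_{s}^{2}\left(T-s\right)^{2}\Gamma^{2}A$ into the term $\tfrac{1}{8}\partial_{y}^{2}A\,B_{s}\tfrac{d\left[M^{c},M^{c}\right]_{s}}{v_{s}^{2}\left(T-s\right)^{2}}$ splits it into a piece $\tfrac{1}{8}\Gamma A\,B_{s}\tfrac{d\left[M^{c},M^{c}\right]_{s}}{v_{s}^{2}\left(T-s\right)}$, which cancels the term $-\tfrac{1}{8}\partial_{y}A\,B_{s}\tfrac{d\left[M^{c},M^{c}\right]_{s}}{v_{s}^{3}\left(T-s\right)^{2}}$ after the same substitution, and a surviving piece $\tfrac{1}{8}\Gamma^{2}A\,B_{s}\,d\left[M^{c},M^{c}\right]_{s}$, which is exactly the fifth line of the corollary.

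The remaining terms transform mechanically under the three identities, and I would simply collect them. The surviving half of the $\partial_{y}$-jump term becomes the $\Gamma A$ term weighted by $v_{s}\left(T-s\right)\int_{0}^{\infty}\Delta_{m}^{2}g\,\ell\left(dz\right)$; the mixed term with $\partial_{xy}^{2}A$ becomes the $\Lambda\Gamma A$ term after the factor $v_{s}\left(T-s\right)$ cancels $v_{s}\left(T-s\right)$ in the denominator; the term with $\partial_{y}A\,\tfrac{d\left[M^{c},B\right]_{s}}{2v_{s}\left(T-s\right)}$ becomes $\tfrac{1}{2}\Gamma A\,d\left[M^{c},B\right]_{s}$; and the $\zeta$-term, the two $\partial_{x}A=\Lambda A$ covariation terms, the $dB_{s}$ term, and the final jump integral carry over verbatim. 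Assembling the non-cancelling pieces reproduces the stated formula.

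The main obstacle is purely bookkeeping: one must compute $\partial_{y}^{2}A$ correctly, retaining the first-order summand $\left(T-t\right)\Gamma A$ that arises from differentiating the product $y\left(T-t\right)\Gamma A$, since it is precisely this summand that annihilates the $-\tfrac{1}{8}$ contribution; omitting it would leave a spurious term and destroy the simplification. A secondary technical point is smoothness, because applying $\Gamma^{2}$ requires four $x$-derivatives, which is not literally guaranteed by $A\in C^{1,2,2}$; however, the relation $\partial_{y}A=y\left(T-t\right)\Gamma A$ propagates $y$-regularity into $x$-regularity, and in the intended application $A$ is the Black-Scholes function $BS$, which is smooth on $\left[t,T\right)$, so the manipulations are justified. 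No new limiting or integrability argument is required, as the passage $\delta\searrow0$ was already handled in Theorem \ref{thm: Generic_Decomposition_Formula}.
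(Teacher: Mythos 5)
Your proposal is correct and follows essentially the same route as the paper: differentiate the Delta--Gamma--Vega relation to get $\partial_{xy}^{2}A=y\left(T-t\right)\Lambda\Gamma A$ and $\partial_{y}^{2}A=\left(T-t\right)\Gamma A+y^{2}\left(T-t\right)^{2}\Gamma^{2}A$, then substitute into Theorem \ref{thm: Generic_Decomposition_Formula}. You simply make explicit the two cancellations (and the smoothness caveat) that the paper dismisses as "straightforward."
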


\begin{proof}
If $\left(\ref{eq: Functional_DGV_Relationship}\right)$ holds, then
it is trivial to see that
\begin{align*}
\partial_{xy}^{2}A\left(t,x,y\right) & =y\left(T-t\right)\left(\partial_{xx}^{3}-\partial_{xx}^{2}\right)A\left(t,x,y\right),\\
\partial_{yy}^{2}A\left(t,x,y\right) & =\frac{y^{2}\left(T-t\right)^{2}}{y^{2}\left(T-t\right)}\left(\partial_{xx}^{2}-\partial_{x}\right)A\left(t,x,y\right)\\
 & \qquad+y^{2}\left(T-t\right)^{2}\left(\partial_{xx}^{2}-\partial_{x}\right)^{2}A\left(t,x,y\right).
\end{align*}

Therefore, by replacing $\left(\ref{eq: Functional_DGV_Relationship}\right)$
together with the previous equalities in Theorem \ref{thm: Generic_Decomposition_Formula},
the result is straightforward.
\end{proof}
The next result yields an analogous formula to Proposition 9 in \cite{AloYan17},
that contains the continuous part of the formula and the discontinuous
terms coming from the jumps assumed in the model.
\begin{thm}
\label{thm: BS expansion formula}Assume the model given by equations
\ref{eq: fHestonJump_Volatility}, such that $2k\theta\geq\nu^{2},$
\[
\left(1-c_{1}-c_{2}\frac{T^{\alpha}}{\alpha\Gamma(\alpha)}\right)\geq0,
\]
 and let $A\left(t,X_{t},v_{t}\right)=BS\left(t,X_{t},v_{t}\right)$
and $B_{t}\equiv1$. Then, for every $t\in\left[0,T\right]$,
\end{thm}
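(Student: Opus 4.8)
The plan is to read off the stated expansion as the specialization of Corollary~\ref{cor: Funct_DGV_Relationship} to $A(t,x,y)=BS(t,x,y)$ and $B_t\equiv 1$. The point of departure is that the call price is already of the required form: since $BS(T,x,y)=(e^{x}-K)^{+}$ for every finite $y>0$ (the time-to-maturity vanishes and the normal factors degenerate to indicators), one has $V_t=e^{-r(T-t)}\mathbb{E}_t\left[(e^{X_T}-K)^{+}\right]=e^{-r(T-t)}\mathbb{E}_t\left[BS(T,X_T,v_T)\right]$, which is exactly the left-hand side of the corollary with this choice of $A$ and $B$. First I would check the two hypotheses. The identity $\mathcal{L}_y BS(\cdot,\cdot,y)=0$ holds by construction of the Black--Scholes operator, as recorded in Section~\ref{sec: Notation}. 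The Delta--Gamma--Vega relation \eqref{eq: Functional_DGV_Relationship}, i.e. $\partial_y BS=y(T-t)(\partial_{xx}^2-\partial_x)BS$, is the classical link between Vega, Gamma and Delta of a vanilla call; hence both assumptions hold and the corollary applies.

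Next I would exploit $B_t\equiv 1$: then $dB_s=0$ and every covariation $[W,B]$, $[\tilde W,B]$, $[M^c,B]$ vanishes identically, so the $dB_s$ term and the three brackets-against-$B$ terms in the corollary drop out. What survives is $BS(t,X_t,v_t)$, the $\zeta(\rho_2,\eta)$ drift term carrying $\Lambda BS$, the $\Gamma BS$ term weighted by $v_s(T-s)\int_0^\infty \Delta_m^2 g(s,M_{s-},Y_{s-})\,\ell(dz)$, the mixed $\Lambda\Gamma BS$ term against $d[W,M^c]_s$, the $\Gamma^2 BS$ term against $d[M^c,M^c]_s$, and the pure-jump correction $\Delta_x^2 BS+\Delta_y^2 BS$ integrated in $\ell(dz)\,ds$. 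To pass to the explicit form I would substitute the continuous dynamics $dM_t^c=\nu A(T,t)\sqrt{\bar\sigma_t^2}\,dW_t$ from Section~\ref{sec: Notation}, giving $d[W,M^c]_s=\nu A(T,s)\sqrt{\bar\sigma_s^2}\,ds$ and $d[M^c,M^c]_s=\nu^2 A(T,s)^2\bar\sigma_s^2\,ds$, and read the jump increments off the model: $\gamma_x(s,z)=\rho_2\eta z$ shifts the log-price argument in $\Delta_x^2 BS$, while the volatility argument jumps by $\Delta_m g(s,M_{s-},Y_{s-})$ (with $\gamma_m(s,z)=c_3\eta z$) in $\Delta_y^2 BS$, since $Y$ is continuous.

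The substantive point is that Corollary~\ref{cor: Funct_DGV_Relationship}, established for a generic $C^{1,2,2}$ function under implicit integrability, must be legitimately applicable to the unbounded Black--Scholes function: the surviving Greeks $\Gamma^2 BS$, $\Lambda\Gamma BS$ and the $y$-derivatives blow up like negative powers of $v_s$ and of $T-s$ as $s\uparrow T$. This is exactly why the approximation $v^\delta$ in \eqref{eq:V_delta} was introduced, and it is where the standing hypotheses do the work. The Feller condition $2\kappa\theta\geq\nu^2$ keeps $\bar\sigma^2$ nonnegative, the structural constraint $1-c_1-c_2 T^{\alpha}/(\alpha\Gamma(\alpha))\geq 0$ (with $\alpha=H-\tfrac{1}{2}$) pins the instantaneous variance $\sigma^2$, hence $v^2$, below in a controllable way so the negative powers of volatility stay integrable, and the exponential-moment assumption $\int_1^\infty e^{Cz}\,\ell(dz)<\infty$ supplies all moments of $J$ needed to dominate the jump integrals uniformly in $\delta$. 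Granting these bounds, dominated convergence lets $\delta\searrow 0$ term by term, with $v^\delta\to v$ and $g^\delta\to g$, and the announced expansion is the corollary with the brackets evaluated. I expect these uniform integrability estimates near maturity to be the main obstacle; once they are in hand, the algebraic simplification and substitution of the covariations are routine.
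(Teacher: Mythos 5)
Your proposal is correct and follows the same route as the paper: the paper's proof is precisely the one-line specialization of Corollary \ref{cor: Funct_DGV_Relationship} to $A=BS$ and $B_{t}\equiv1$, using that $BS\left(T,x,y\right)=\left(e^{x}-K\right)^{+}$ so that $V_{t}=e^{-r\left(T-t\right)}\mathbb{E}_{t}\left[BS\left(T,X_{T},v_{T}\right)\right]$. Your additional remarks on verifying the Delta--Gamma--Vega hypothesis, the vanishing of the $B$-brackets, and the $\delta\searrow0$ limit are all consistent with (and in the paper are absorbed into) the proofs of Theorem \ref{thm: Generic_Decomposition_Formula} and Proposition \ref{prop: BlackScholes_DGV_Relationship}.
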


Then we have
\begin{align}
V_{t} & =BS\left(t,X_{t},v_{t}\right)-\zeta\left(\rho_{2},\eta\right)\mathbb{E}_{t}\left[\int_{t}^{T}e^{-rs}\Lambda BS\left(s,X_{s},v_{s}\right)ds\right]\nonumber \\
 & \qquad+\frac{\rho_{1}}{2}\mathbb{E}_{t}\left[\int_{t}^{T}e^{-r\left(s-t\right)}\Lambda\Gamma BS\left(s,X_{s},v_{s}\right)\sigma_{s}d\left[W,M^{c}\right]_{s}\right]\nonumber \\
 & \qquad+\frac{1}{8}\mathbb{E}_{t}\left[\int_{t}^{T}e^{-r\left(s-t\right)}\Gamma^{2}BS\left(s,X_{s},v_{s}\right)d\left[M^{c},M^{c}\right]_{s}\right]\nonumber \\
 & \qquad+\mathbb{E}_{t}\left[\int_{t}^{T}e^{-r\left(s-t\right)}\Gamma BS\left(s,X_{s},v_{s}\right)\left[v_{s}\left(T-s\right)\int_{0}^{\infty}\Delta_{m}^{2}g\left(s,M_{s-},Y_{s-}\right)\ell\left(dz\right)\right]ds\right]\nonumber \\
 & \qquad+\mathbb{E}_{t}\left[\int_{t}^{T}\int_{0}^{\infty}e^{-r\left(s-t\right)}\left[\Delta_{x}^{2}BS\left(s,X_{s-},v_{s-}\right)+\Delta_{y}^{2}BS\left(s,X_{s-},v_{s-}\right)\right]\ell(dz)ds\right]\nonumber \\
 & \triangleq BS\left(t,X_{t},v_{t}\right)-\zeta\left(\rho_{2},\eta\right)\int_{t}^{T}e^{-rs}\Lambda BS\left(s,X_{s},v_{s}\right)ds+\left(I\right)+\left(II\right)+\left(III\right)+\left(IV\right).\label{eq: BS expansion formula}
\end{align}

\begin{proof}
The result is trivially achieved by replacing $A\left(t,X_{t},v_{t}\right)=BS\left(t,X_{t},v_{t}\right)$
and $B_{t}\equiv1$ in Corollary \ref{cor: Funct_DGV_Relationship},
and noticing that $V_{t}=e^{-r\left(T-t\right)}\mathbb{E}_{t}\left[BS\left(T,X_{T},v_{T}^{\delta}\right)\right]$.
\end{proof}

\section{\label{sec: Martingale_Representation}Martingale Representation
of the future expected volatility}

This section is devoted to derive an expression for the dynamics of
the integrated projected future variance $M$. In the next proposition
we show that, under certain conditions on the parameters, the variance
process is bounded away from zero (lower bounded by a strictly positive
function).
\begin{prop}
Consider $\alpha\in\left(0,\frac{1}{2}\right)$ and $T\geq0.$ Assuming
that $2k\theta\geq\nu^{2}$ and 
\[
\left(1-c_{1}-c_{2}\frac{T^{\alpha}}{\alpha\Gamma(\alpha)}\right)\geq0.
\]
 Then for all $0<t<T$
\begin{equation}
\sigma_{t}^{2}\geq\bar{\sigma}_{0}^{2}e^{-\kappa t}+\theta(1-e^{-\kappa t})\left(1-c_{1}-c_{2}\frac{T^{\alpha}}{\alpha\Gamma(\alpha)}\right)\,\,a.s.\label{eq: Vol_LowerBound}
\end{equation}
\end{prop}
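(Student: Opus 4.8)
The plan is to reduce the claim to a pathwise lower bound for the underlying CIR process and then control each of the three random contributions to $\sigma_t^2$ separately. Writing the model as $\sigma_t^2 = U_t + c_1\nu Z_t + c_2\nu I_{0+}^{\alpha}Z_t + c_3\eta J_t$ with $U_t = \bar\sigma_0^2 e^{-\kappa t} + \theta(1 - e^{-\kappa t})$, I would first discard the jump term: since $J$ is a subordinator its paths are nonnegative, and $c_3,\eta \ge 0$, so $c_3\eta J_t \ge 0$ a.s.\ and may be dropped from any lower bound. It then remains to bound $c_1\nu Z_t$ and $c_2\nu I_{0+}^{\alpha}Z_t$ from below, and here everything hinges on controlling the sign and size of $\nu Z_t = \bar\sigma_t^2 - U_t$.

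The crux, and the step I expect to be the main obstacle, is the pathwise lower bound
\[
\bar\sigma_t^2 \ge \bar\sigma_0^2 e^{-\kappa t}, \qquad t\in[0,T], \quad a.s.
\]
I would try to obtain it from the transformed process $e^{\kappa t}\bar\sigma_t^2$, whose dynamics $d(e^{\kappa t}\bar\sigma_t^2) = \kappa\theta e^{\kappa t}dt + \nu e^{\kappa t}\sqrt{\bar\sigma_t^2}\,dW_t$ carries a nonnegative drift, together with a comparison argument for the CIR SDE that uses the Feller condition $2\kappa\theta \ge \nu^2$ to guarantee strict positivity of $\bar\sigma^2$. Granting this bound, one immediately gets $\nu Z_t = \bar\sigma_t^2 - U_t \ge \bar\sigma_0^2 e^{-\kappa t} - U_t = -\theta(1 - e^{-\kappa t})$, and, since $1 - e^{-\kappa s}$ is nondecreasing in $s$, also $\nu Z_s \ge -\theta(1 - e^{-\kappa t})$ uniformly for every $s \le t$.

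With the lower bound on $Z$ in hand the remaining steps are routine estimates. For the linear term, $c_1\nu Z_t \ge -c_1\theta(1 - e^{-\kappa t})$. For the fractional term I would use positivity of the Riemann--Liouville kernel $(t-u)^{\alpha-1}/\Gamma(\alpha)$ together with the uniform pointwise bound $\nu Z_u \ge -\theta(1-e^{-\kappa t})$ to write
\[
c_2\nu I_{0+}^{\alpha}Z_t \ge -\frac{c_2\theta(1-e^{-\kappa t})}{\Gamma(\alpha)}\int_0^t (t-u)^{\alpha-1}\,du = -c_2\theta(1-e^{-\kappa t})\frac{t^{\alpha}}{\alpha\Gamma(\alpha)} \ge -c_2\theta(1-e^{-\kappa t})\frac{T^{\alpha}}{\alpha\Gamma(\alpha)},
\]
the last inequality using $t \le T$. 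Summing these two estimates together with $U_t$ and the discarded nonnegative jump contribution yields
\[
\sigma_t^2 \ge U_t - c_1\theta(1-e^{-\kappa t}) - c_2\theta(1-e^{-\kappa t})\frac{T^{\alpha}}{\alpha\Gamma(\alpha)} = \bar\sigma_0^2 e^{-\kappa t} + \theta(1-e^{-\kappa t})\left(1 - c_1 - c_2\frac{T^{\alpha}}{\alpha\Gamma(\alpha)}\right),
\]
which is the asserted inequality. The hypothesis $1 - c_1 - c_2 T^{\alpha}/(\alpha\Gamma(\alpha)) \ge 0$ then guarantees that the right-hand side is genuinely nonnegative, indeed bounded below by $\bar\sigma_0^2 e^{-\kappa t} > 0$, so that the variance is bounded away from zero as needed for the later control of the Black--Scholes derivatives.
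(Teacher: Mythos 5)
Your overall strategy is the same as the paper's: discard the nonnegative jump contribution, obtain a pointwise lower bound on $\nu Z_u$, and push it through the Riemann--Liouville integral using positivity of the kernel, the monotonicity of $1-e^{-\kappa r}$, and $\frac{1}{\Gamma(\alpha)}\int_0^t(t-u)^{\alpha-1}du=t^{\alpha}/(\alpha\Gamma(\alpha))\le T^{\alpha}/(\alpha\Gamma(\alpha))$. That part matches the paper's computation. You have also correctly isolated the crux: everything reduces to the pathwise inequality $\nu Z_t\ge-\theta(1-e^{-\kappa t})$, which is literally equivalent to your reformulation $\bar{\sigma}_t^2\ge\bar{\sigma}_0^2e^{-\kappa t}$.

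But that is precisely the step you do not prove, and the route you sketch for it cannot be completed. Writing $e^{\kappa t}\bar{\sigma}_t^2=\bar{\sigma}_0^2+\theta(e^{\kappa t}-1)+\nu\int_0^te^{\kappa s}\sqrt{\bar{\sigma}_s^2}\,dW_s$, the drift is indeed nonnegative, but the stochastic integral is a mean-zero martingale that is strictly negative on a set of positive probability, so a nonnegative drift gives no pathwise lower bound; and the Feller condition only prevents $\bar{\sigma}^2$ from reaching $0$, it does not keep it above the deterministic curve $\bar{\sigma}_0^2e^{-\kappa t}$. Indeed, the conditional law of $\bar{\sigma}_t^2$ is a rescaled noncentral chi-square whose support is all of $(0,\infty)$, so $\mathbb{P}\left(\bar{\sigma}_t^2<\bar{\sigma}_0^2e^{-\kappa t}\right)>0$ for every $t>0$ (for small $t$ and $\bar{\sigma}_0^2=\theta$ this probability is close to $1/2$, since the Gaussian fluctuation of order $\sqrt{t}$ dominates the drift of order $t$). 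A standard comparison theorem, which orders two solutions with ordered initial data or ordered drifts, cannot produce a deterministic lower bound of this kind. The paper handles the step differently: it invokes positivity, $\nu Z_t>-U_t$, ``for all initial conditions'' and then lets $\bar{\sigma}_0^2\searrow0$ to sharpen the constant from $-\theta(1-e^{-\kappa t})-\bar{\sigma}_0^2e^{-\kappa t}$ to $-\theta(1-e^{-\kappa t})$ --- an argument that itself glosses over the dependence of $Z$ on the initial condition. So the one inequality your proof hinges on is left at the level of ``granting this bound,'' and the justification you propose for it fails; you would need either to reproduce and scrutinize the paper's limiting argument or to find a genuinely different proof of $\nu Z_t\ge-\theta(1-e^{-\kappa t})$.
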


\begin{proof}
We know by definition of the volatility equation given by $\left(\ref{eq: fHestonJump_Volatility}\right)$,
that
\[
\sigma_{t}^{2}=U_{t}+c_{1}\nu Z_{t}+c_{2}\nu\frac{1}{\Gamma(\alpha)}\int_{0}^{t}(t-r)^{\alpha-1}Z_{r}dr+c_{3}\eta J_{t}.
\]
Knowing that jumps are only positive and that $c_{3}\geq0$, $\eta>0$,
we can lower bound the volatility process by
\[
\sigma_{t}^{2}\geq U_{t}+c_{1}\nu Z_{t}+c_{2}\nu\frac{1}{\Gamma(\alpha)}\int_{0}^{t}(t-r)^{\alpha-1}Z_{r}dr.
\]
Back to the Heston's volatility process, we know that it has to be
positive, therefore $\nu Z_{t}>-U_{t}=-(\theta+e^{-\kappa t}(\bar{\sigma}_{0}^{2}-\theta))$
for all initial condition $\bar{\sigma}_{0}^{2}.$ And letting $\bar{\sigma}_{0}^{2}\rightarrow0$
we have that $\nu Z_{t}\geq-\theta(1-e^{-\kappa t})$ a.s. Now we
can write
\begin{align*}
\sigma_{t}^{2} & \geq U_{t}-c_{1}\theta(1-e^{-\kappa t})-c_{2}\frac{\theta}{\Gamma(\alpha)}\int_{0}^{t}(t-r)^{\alpha-1}(1-e^{-\kappa r})dr\\
 & \geq U_{t}-c_{1}\theta(1-e^{-\kappa t})-c_{2}\frac{\theta(1-e^{-\kappa t})}{\Gamma(\alpha)}\int_{0}^{t}(t-r)^{\alpha-1}dr\\
 & =U_{t}-c_{1}\theta(1-e^{-\kappa t})-c_{2}\frac{\theta(1-e^{-\kappa t})t^{\alpha}}{\alpha\Gamma(\alpha)}\\
 & =\bar{\sigma}_{0}^{2}e^{-\kappa t}+\theta(1-e^{-\kappa t})\left(1-c_{1}-c_{2}\frac{t^{\alpha}}{\alpha\Gamma(\alpha)}\right)\\
 & \geq\bar{\sigma}_{0}^{2}e^{-\kappa t}+\theta(1-e^{-\kappa t})\left(1-c_{1}-c_{2}\frac{T^{\alpha}}{\alpha\Gamma(\alpha)}\right).
\end{align*}
And it is a positive quantity since we have by hypothesis that
\[
\left(1-c_{1}-c_{2}\frac{T^{\alpha}}{\alpha\Gamma(\alpha)}\right)\geq0.
\]
\end{proof}
Now as we have defined $v_{t}^{2}=\frac{1}{T-t}\left(M_{t}-\int_{0}^{t}\sigma_{s}^{2}ds\right)$,
when applying the It\^{o} Lemma we need to compute $dv_{t}^{2}$ which
implies computing $dM_{t}$, for $M_{t}=\int_{0}^{T}\mathbb{E}_{t}\left[\sigma_{s}^{2}\right]ds$.
The computation of this last derivative, needs to be done by means
of Clark-Ocone-Haussmann formula and Malliavin calculus techniques.
A good reference for this topic is, for instance, \cite{DOP094}.
\begin{prop}
\label{prop: Malliavin_Derivative(JumpCase)}We have that $\sigma_{t}^{2}\in\mathbb{D}_{N}^{1,2}$
and $D_{s,z}^{N}\sigma_{t}^{2}=c_{3}\eta z.$
\end{prop}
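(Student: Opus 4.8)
The plan is to exploit the additive structure of the volatility process in $\left(\ref{eq: fHestonJump_Volatility}\right)$ together with the fact that, among its four constituent terms, only the subordinator term carries any dependence on the Poisson random measure $N$. Recall that $\mathbb{D}_N^{1,2}$ and the jump Malliavin derivative $D_{s,z}^N$ are defined (following \cite{DOP094}) through the chaos expansion with respect to $\tilde{N}$: a square-integrable functional $F=\sum_{n\geq0}I_n(f_n)$ lies in $\mathbb{D}_N^{1,2}$ precisely when $\sum_{n\geq1}n\,n!\,\|f_n\|^2<\infty$, in which case $D_{s,z}^N F=\sum_{n\geq1}n\,I_{n-1}\bigl(f_n(\cdot,(s,z))\bigr)$. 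In particular, $D^N$ annihilates functionals that are constant in the jump directions and returns the integrand of a pure first-chaos element.

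First I would dispose of the three non-jump terms. The term $U_t=\theta+e^{-\kappa t}(\bar{\sigma}_0^2-\theta)$ is deterministic, hence has zero $N$-derivative. The process $Z_t=\int_0^t e^{-\kappa(t-s)}\sqrt{\bar{\sigma}_s^2}\,dW_s$ and its fractional integral $I_{0+}^{H-\frac{1}{2}}Z_t$ are measurable with respect to $\mathcal{F}^W$, which is independent of $N$; such functionals have only a zeroth-order component in the $\tilde{N}$-chaos expansion, so $D_{s,z}^N Z_t=D_{s,z}^N I_{0+}^{H-\frac{1}{2}}Z_t=0$. Consequently $D_{s,z}^N\sigma_t^2=c_3\eta\,D_{s,z}^N J_t$, and everything reduces to differentiating the subordinator.

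Using the L\'evy-It\^o decomposition established earlier, $J_t=\bigl(\gamma+\int_1^\infty z\ell(dz)\bigr)t+\tilde{J}_t$ with $\tilde{J}_t=\int_0^t\int_0^\infty z\,\tilde{N}(ds,dz)$. The drift is deterministic and drops out, while $\tilde{J}_t=I_1(f_1)$ is a pure first-chaos element with kernel $f_1(s,z)=z\,\mathbf{1}_{[0,t]}(s)$. The derivative rule then gives $D_{s,z}^N J_t=f_1(s,z)$, equal to $z$ for $s\leq t$ (and vanishing for $s>t$), so that $D_{s,z}^N\sigma_t^2=c_3\eta z$.

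Finally I would confirm membership in $\mathbb{D}_N^{1,2}$. Since $\bar{\sigma}^2$ is a CIR process with finite moments, $Z_t$ and its fractional integral belong to $L^2$, and $J_t$ has finite variance under the standing assumption $\int_1^\infty e^{Cz}\ell(dz)<\infty$; hence $\sigma_t^2\in L^2$. The only non-trivial jump chaos is the first-order term $c_3\eta\tilde{J}_t$, whose kernel satisfies $\|c_3\eta f_1\|^2=(c_3\eta)^2\int_0^t\int_0^\infty z^2\ell(dz)\,ds<\infty$, using that $\int_0^1 z^2\ell(dz)\leq\int_0^1 z\ell(dz)<\infty$ near the origin and that the exponential-moment assumption forces $\int_1^\infty z^2\ell(dz)<\infty$. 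This is exactly the summability condition required for $\sigma_t^2\in\mathbb{D}_N^{1,2}$. The computation is essentially immediate once the framework is in place; the only points requiring care are the vanishing of the first jump-chaos of the $\mathcal{F}^W$-measurable terms and this square-integrability check, and neither presents a genuine obstacle.
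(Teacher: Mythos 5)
Your proposal is correct and follows essentially the same route as the paper: reduce to the jump term via the additive structure of $\sigma_t^2$, note the Brownian-driven terms are killed by $D^N$, and read off the derivative of $\tilde{J}_t=\int_0^t\int_0^\infty z\,\tilde{N}(ds,dz)$ as its first-chaos kernel. Your version is in fact more complete than the paper's, which asserts membership in $\mathbb{D}_N^{1,2}$ without the square-integrability check $\int_0^\infty z^2\ell(dz)<\infty$ that you supply, and which omits the caveat that the derivative vanishes for $s>t$.
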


\begin{proof}
Note that $\sigma_{t}^{2}=Y_{t}+c_{1}\nu Z_{t}+c_{2}\nu\frac{1}{\Gamma(\alpha)}\int_{0}^{t}(t-r)^{\alpha-1}Z_{r}dr+c_{3}\eta J_{t}.$
Now we have that
\begin{align*}
D_{s,z}^{N}\sigma_{t}^{2} & =D_{s,y}^{N}\left(Y_{t}+c_{1}\nu Z_{t}+c_{2}\nu\frac{1}{\Gamma(\alpha)}\int_{0}^{t}(t-r)^{\alpha-1}Z_{r}dr+c_{3}\eta J_{t}\right)\\
 & =D_{s,z}^{N}\,c_{3}\eta J_{t},
\end{align*}
and since $J_{t}$ is a pure jump L\'evy process which can be represented
as $J_{t}=-\zeta\left(\rho_{2},\eta\right)t+\int_{0}^{t}\int_{0}^{\infty}z\tilde{N}(ds,dz)$,
following \cite{DOP094}, we have that
\begin{align*}
D_{s,z}^{N}\sigma_{t}^{2} & =D_{s,z}^{N}c_{3}\eta J_{t}^{0}=D_{s,z}^{N}c_{3}\eta\left(-\zeta\left(\rho_{2},\eta\right)t+\int_{0}^{t}\int_{0}^{\infty}z\tilde{N}(ds,dz)\right)\\
 & =D_{s,z}^{N}\left(\int_{0}^{t}\int_{0}^{\infty}c_{3}\eta z\tilde{N}(ds,dz)\right)=c_{3}\eta z.
\end{align*}
\end{proof}
\begin{prop}
\label{prop: Malliavin_Derivative(BrownianCase)}Assume that $2k\theta\geq\nu^{2}$.
Then, we have that $\sigma_{t}^{2}\in\mathbb{D}_{W}^{1,2}$ and 
\begin{equation}
D_{s}^{W}\sigma_{t}^{2}=c_{1}D_{s}^{W}\bar{\sigma}_{t}^{2}+\frac{c_{2}}{\Gamma(\alpha)}\int_{s}^{t}(t-r)^{\alpha-1}D_{s}^{W}\bar{\sigma}_{r}^{2}dr.\label{eq: BM_MalliavinDerivative}
\end{equation}
\end{prop}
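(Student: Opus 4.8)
The plan is to reduce the whole statement to the Malliavin differentiability of the underlying CIR process $\bar{\sigma}^{2}$ and then exploit the linearity of $D^{W}$ together with its commutation with the deterministic Riemann--Liouville kernel. First I would rewrite $\sigma_{t}^{2}$ so that the only $W$-dependent randomness sits inside $\bar{\sigma}^{2}$. From \eqref{eq: CIRprocess_integralversion} we have $\nu Z_{t}=\bar{\sigma}_{t}^{2}-U_{t}$ with $U_{t}$ deterministic, so that
\[
\sigma_{t}^{2}=U_{t}+c_{1}\left(\bar{\sigma}_{t}^{2}-U_{t}\right)+\frac{c_{2}}{\Gamma(\alpha)}\int_{0}^{t}(t-r)^{\alpha-1}\left(\bar{\sigma}_{r}^{2}-U_{r}\right)dr+c_{3}\eta J_{t}.
\]
Since $J$ is independent of $W$ we have $D^{W}\left(c_{3}\eta J_{t}\right)=0$, and every $U_{\cdot}$ contribution is deterministic and hence has vanishing $W$-derivative. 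Thus the entire statement rests on the claim $\bar{\sigma}_{t}^{2}\in\mathbb{D}_{W}^{1,2}$ and on pushing $D_{s}^{W}$ through the fractional integral.

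The main obstacle is establishing $\bar{\sigma}_{t}^{2}\in\mathbb{D}_{W}^{1,2}$, because of the non-Lipschitz diffusion coefficient $\nu\sqrt{x}$: the formal derivative SDE contains the factor $\nu/(2\sqrt{\bar{\sigma}_{r}^{2}})$, which is only meaningful because the Feller condition $2\kappa\theta\geq\nu^{2}$ keeps $\bar{\sigma}^{2}$ strictly positive almost surely. I would invoke the corresponding result from \cite{AloYan17} on Malliavin differentiability of the CIR variance under the Feller condition; for self-containedness one argues by approximating $\sqrt{x}$ by smooth, globally Lipschitz functions $\phi_{n}$, using the standard $\mathbb{D}^{1,2}$-stability of SDE solutions with Lipschitz coefficients, deriving uniform-in-$n$ $L^{2}$ bounds on the approximating Malliavin derivatives, and passing to the limit via closedness of $D^{W}$ and strict positivity of $\bar{\sigma}^{2}$. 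This simultaneously shows that $D_{s}^{W}\bar{\sigma}_{r}^{2}$ solves a linear SDE, that as the derivative of an adapted process it satisfies $D_{s}^{W}\bar{\sigma}_{r}^{2}=0$ for $r<s$, and that $\sup_{r\le T}\Vert\bar{\sigma}_{r}^{2}\Vert_{\mathbb{D}^{1,2}}<\infty$, which is guaranteed by the finite moments of the CIR process.

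Finally I would commute $D_{s}^{W}$ with the fractional integral. Since $\alpha=H-\tfrac12\in(0,\tfrac12)$ the exponent $\alpha-1$ lies in $(-1,0)$, so the singularity is integrable and $\int_{0}^{t}(t-r)^{\alpha-1}\Vert\bar{\sigma}_{r}^{2}\Vert_{\mathbb{D}^{1,2}}\,dr<\infty$; a Fubini/closability argument (or a Riemann-sum approximation of the integral) then gives $I_{0+}^{\alpha}\bar{\sigma}_{\cdot}^{2}\in\mathbb{D}_{W}^{1,2}$ with
\[
D_{s}^{W}\,\frac{1}{\Gamma(\alpha)}\int_{0}^{t}(t-r)^{\alpha-1}\bar{\sigma}_{r}^{2}\,dr=\frac{1}{\Gamma(\alpha)}\int_{0}^{t}(t-r)^{\alpha-1}D_{s}^{W}\bar{\sigma}_{r}^{2}\,dr.
\]
Using $D_{s}^{W}\bar{\sigma}_{r}^{2}=0$ for $r<s$ to cut the lower limit from $0$ to $s$, and collecting the $c_{1}$ and $c_{2}$ terms, yields exactly \eqref{eq: BM_MalliavinDerivative}; the $L^{2}(\Omega\times[0,T])$ integrability of the right-hand side, and hence the membership $\sigma_{t}^{2}\in\mathbb{D}_{W}^{1,2}$, follows from the same kernel estimate combined with the $\mathbb{D}^{1,2}$ bounds on $\bar{\sigma}^{2}$.
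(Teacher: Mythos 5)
Your proposal is correct and follows essentially the same route as the paper: reduce everything by linearity to the Malliavin derivative of the CIR variance, observe that the jump and deterministic contributions have vanishing $W$-derivative, and pull $D_{s}^{W}$ through the Riemann--Liouville kernel, cutting the lower limit to $s$ because $D_{s}^{W}\bar{\sigma}_{r}^{2}=0$ for $r<s$. You are in fact more careful than the paper on the delicate points, namely the membership $\bar{\sigma}_{t}^{2}\in\mathbb{D}_{W}^{1,2}$ despite the non-Lipschitz $\sqrt{x}$ coefficient and the Fubini-type interchange of $D_{s}^{W}$ with the fractional integral, which the paper's proof takes for granted (it simply invokes Theorem 2.1 of \cite{Detemple2005} to write $D_{s}^{W}\bar{\sigma}_{t}^{2}$ explicitly).
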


\begin{proof}
Starting from the definition of $\sigma_{t}^{2}$, we have that 
\begin{align*}
D_{s}^{W}\sigma_{t}^{2} & =D_{s}^{W}\left(Y_{t}+c_{1}\nu Z_{t}+c_{2}\nu I_{0+}^{\alpha}Z_{t}+c_{3}\eta J_{t}\right)\\
 & =c_{1}\nu D_{s}^{W}Z_{t}+c_{2}I_{0+}^{\alpha}\nu D_{s}^{W}Z_{t}.
\end{align*}

We have to note that $D_{s}^{W}J_{t}=0$ and that $D_{s}^{W}\bar{\sigma}_{t}^{2}=\nu D_{s}^{W}Z_{t}$,
hence we can rewrite the previous expression as 
\[
D_{s}^{W}\sigma_{t}^{2}=c_{1}D_{s}^{W}\bar{\sigma}_{t}^{2}+\frac{c_{2}}{\Gamma(\alpha)}\int_{s}^{t}(t-r)^{\alpha-1}D_{s}^{W}\bar{\sigma}_{r}^{2}dr.
\]
Now, for instance, we have to solve $D_{s}^{W}\bar{\sigma}_{t}^{2}$.
To achieve this we will make use of Theorem 2.1 in \cite{Detemple2005}
where in our particular case $\mu(s,\bar{\sigma}_{t}^{2})\triangleq\kappa(\theta-\bar{\sigma}_{t}^{2})$
and $\sigma(s,\bar{\sigma}_{t}^{2})\triangleq\nu\sqrt{\bar{\sigma}_{t}^{2}}$.
Therefore, we have that by replacing and doing some basic algebraic
manipulations we obtain the following expression.
\begin{align*}
D_{s}^{W}\bar{\sigma}_{t}^{2} & =\sigma\,\exp\left\{ \int_{s}^{t}\left[\partial_{2}\mu-\frac{\mu\partial_{2}\sigma}{\sigma}-\frac{1}{2}(\partial_{22}\sigma)\sigma-\frac{\partial_{1}\sigma}{\sigma}\right](u,\bar{\sigma}_{u}^{2})du\right\} \\
 & =\nu\,\sqrt{\bar{\sigma}_{t}^{2}}\exp\left\{ \int_{s}^{t}\left[-\frac{\kappa}{2}-\left(\frac{\kappa\theta}{2}-\frac{\nu^{2}}{8}\right)\frac{1}{\bar{\sigma}_{u}^{2}}\right]du\right\} \\
 & =\nu\,\sqrt{\bar{\sigma}_{t}^{2}}f(t,s),
\end{align*}
where $f(t,s)\triangleq\exp\left\{ \int_{s}^{t}\left[-\frac{\kappa}{2}-\left(\frac{\kappa\theta}{2}-\frac{\nu^{2}}{8}\right)\frac{1}{\bar{\sigma}_{u}^{2}}\right]du\right\} $.
\end{proof}
\begin{prop}
\label{prop: Martingale_Dynamics}Assume that $2k\theta\geq\nu^{2}$
and $\left(1-c_{1}-c_{2}\frac{T^{\alpha}}{\alpha\Gamma(\alpha)}\right)\geq0$.
Then, 
\begin{equation}
dM_{t}=\nu A(T,t)\sqrt{\bar{\sigma}_{t}^{2}}dW_{t}+c_{3}\eta dJ_{t}.\label{eq: Martingale_Dynamics}
\end{equation}
\end{prop}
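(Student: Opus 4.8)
The plan is to recognise $M$ as the martingale generated by a single, explicit terminal random variable and to read off its dynamics from the Clark--Ocone--Haussmann (COH) formula; this is precisely why the two preceding propositions compute the Malliavin derivatives of $\sigma_t^2$. First I would note that, by the tower property, $M_t=\mathbb{E}_t\!\left[M_T\right]$ with $M_T=\int_0^T\sigma_s^2\,ds$, so that $M$ is a square-integrable martingale with respect to the filtration generated by $W$ and $J$; since $\sigma_s^2$ does not depend on $\tilde W$ we have $D^{\tilde W}_\cdot M_T=0$, which both explains the $\mathcal{F}^{\tilde W}$-independence asserted in Section \ref{sec: Notation} and guarantees that no $d\tilde W$ term can appear. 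The COH formula then gives
\[
dM_t=\mathbb{E}_t\!\left[D_t^W M_T\right]dW_t+\int_0^\infty\mathbb{E}_t\!\left[D_{t,z}^N M_T\right]\tilde N(dt,dz),
\]
so the entire statement reduces to evaluating these two conditional expectations and matching them against $\nu A(T,t)\sqrt{\bar\sigma_t^2}$ and the jump coefficient in $\left(\ref{eq: Martingale_Dynamics}\right)$.

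The second step is to commute the Malliavin derivatives with the Lebesgue integral defining $M_T$, that is, to justify $D M_T=\int_0^T D\sigma_s^2\,ds$ by a Fubini/closability argument, and then to substitute the two propositions. The jump half is routine: Proposition \ref{prop: Malliavin_Derivative(JumpCase)} identifies $D_{s,z}^N\sigma_u^2$ with the constant $c_3\eta z$ for $u\geq s$, so its conditional expectation is deterministic and assembles directly into the jump term of $dM_t$. All the real work is in the Brownian half, where Proposition \ref{prop: Malliavin_Derivative(BrownianCase)} reduces matters to the single quantity $\mathbb{E}_t\!\left[D_t^W\bar\sigma_r^2\right]$ through $D_t^W\sigma_s^2=c_1 D_t^W\bar\sigma_s^2+\frac{c_2}{\Gamma(\alpha)}\int_t^s(s-r)^{\alpha-1}D_t^W\bar\sigma_r^2\,dr$.

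The main obstacle is this last conditional expectation, and specifically the fact that it must factor as a \emph{deterministic} kernel times $\sqrt{\bar\sigma_t^2}$ in order for the clean coefficient $\nu A(T,t)\sqrt{\bar\sigma_t^2}$ to emerge, rather than an anticipating functional. The closed form $D_t^W\bar\sigma_r^2=\nu\sqrt{\bar\sigma_r^2}f(r,t)$ from the proof of Proposition \ref{prop: Malliavin_Derivative(BrownianCase)} is not manifestly of this shape, so rather than substituting it I would work with the first-variation equation directly: differentiating the Malliavin derivative of the CIR dynamics $\left(\ref{eq: CIRprocess_dynamics}\right)$ and setting $g(r)\triangleq\mathbb{E}_t\!\left[D_t^W\bar\sigma_r^2\right]$, the constant drift term drops out, the $dW$-integral contributes a boundary value $\nu\sqrt{\bar\sigma_t^2}$ at $r=t$ together with a martingale that vanishes under $\mathbb{E}_t$, and the mean-reversion term gives $g'(r)=-\kappa g(r)$. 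Hence $g(r)=\nu e^{-\kappa(r-t)}\sqrt{\bar\sigma_t^2}$, and substituting this into the two Brownian terms lets me pull out $\nu\sqrt{\bar\sigma_t^2}$ and read off $A(T,t)$ as the remaining purely deterministic double integral in $s$ and $r$. Collecting the Brownian and jump contributions then yields $\left(\ref{eq: Martingale_Dynamics}\right)$.

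Finally I would discharge the technical hypotheses. The assumption $2\kappa\theta\geq\nu^2$ keeps $\bar\sigma^2$ strictly positive, so that $\sqrt{\bar\sigma_r^2}$, the kernel $f(r,t)$ and the first-variation ODE are all well defined and $M_T\in\mathbb{D}_W^{1,2}$; the exponential-moment bound on $\ell$ controls the jump derivative and gives $M_T\in\mathbb{D}_N^{1,2}$; and the condition $1-c_1-c_2 T^\alpha/(\alpha\Gamma(\alpha))\geq 0$ secures the integrability of the combined volatility needed for the Fubini interchange. I expect the genuine difficulty to lie entirely in this factorisation-and-integrability step, the COH formula and the jump computation being essentially mechanical once the two Malliavin derivatives are in hand.
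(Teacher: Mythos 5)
Your proposal is correct and follows essentially the same route as the paper: both rest on the Clark--Ocone--Haussmann formula combined with the Malliavin derivatives from Propositions \ref{prop: Malliavin_Derivative(JumpCase)} and \ref{prop: Malliavin_Derivative(BrownianCase)}, and both reduce the Brownian coefficient to the identity $\mathbb{E}_t\left[D_t^W\bar{\sigma}_r^2\right]=\nu e^{-\kappa(r-t)}\sqrt{\bar{\sigma}_t^2}$ before a Fubini step produces $A(T,t)$. The only cosmetic differences are that you apply COH to $M_T$ directly rather than to each $\sigma_s^2$ and then integrating in $s$, and that you obtain the conditional expectation via the first-variation ODE instead of reading it off the representation $Z_t=\int_0^t e^{-\kappa(t-u)}\sqrt{\bar{\sigma}_u^2}\,dW_u$ (where the off-diagonal stochastic integral vanishes under $\mathbb{E}_t$) -- both yield the same deterministic kernel.
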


\begin{proof}
Using the Clark-Ocone-Haussmann formula we have
\begin{equation}
\sigma_{t}^{2}=\mathbb{E}\left[\sigma_{t}^{2}\right]+\int_{0}^{t}\mathbb{E}_{s}\left[D_{s}^{W}\sigma_{t}^{2}\right]dW_{s}+\int_{0}^{t}\int_{0}^{\infty}\mathbb{E}_{s}\left[D_{s,z}^{N}\sigma_{t}^{2}\right]\tilde{N}(ds,dz).\label{eq:Clark-Ocone-Haussmann}
\end{equation}
Now we will treat the second addend using \ref{eq: BM_MalliavinDerivative}
and doing some algebraic manipulations, to obtain the following relationship
\begin{align*}
\int_{0}^{t}\mathbb{E}_{s}\left[D_{s}^{W}\sigma_{t}^{2}\right]dW_{s} & =\int_{0}^{t}\mathbb{E}_{s}\left[c_{1}D_{s}^{W}\bar{\sigma}_{t}^{2}+\frac{c_{2}}{\Gamma(\alpha)}\int_{s}^{t}(t-r)^{\alpha-1}D_{s}^{W}\bar{\sigma}_{r}^{2}dr\right]dW_{s}\\
 & =\int_{0}^{t}\left[c_{1}\mathbb{E}_{s}\left[D_{s}^{W}\bar{\sigma}_{t}^{2}\right]+\frac{c_{2}}{\Gamma(\alpha)}\int_{s}^{t}(t-r)^{\alpha-1}\mathbb{E}_{s}\left[D_{s}^{W}\bar{\sigma}_{r}^{2}\right]dr\right]dW_{s}.
\end{align*}
Remembering from proposition $\left(\ref{prop: Malliavin_Derivative(BrownianCase)}\right)$
that $D_{s}^{W}\bar{\sigma}_{t}^{2}=\nu D_{s}^{W}Z_{t}$ we can rewrite
the previous equation as
\begin{align*}
 & \int_{0}^{t}\mathbb{E}_{s}\left[D_{s}^{W}\sigma_{t}^{2}\right]dW_{s}\\
 & =\int_{0}^{t}\left[c_{1}\nu\mathbb{E}_{s}\left[D_{s}^{W}Z_{t}\right]+\frac{c_{2}\nu}{\Gamma(\alpha)}\int_{s}^{t}(t-r)^{\alpha-1}\mathbb{E}_{s}\left[D_{s}^{W}Z_{r}\right]dr\right]dW_{s}\\
 & =\int_{0}^{t}\left[c_{1}\nu\,\exp\left\{ -\kappa(t-s)\right\} +\left(\frac{c_{2}\nu}{\Gamma(\alpha)}\int_{s}^{t}(t-r)^{\alpha-1}\exp\left\{ -\kappa(t-s)\right\} dr\right)\right]\sqrt{\bar{\sigma}_{s}^{2}}dW_{s},
\end{align*}
as $Z_{t}=\int_{0}^{t}e^{-\kappa(t-s)}\sqrt{\bar{\sigma}_{s}^{2}}dW_{s}$.
Finally if we set
\[
a(t,s):=\left[c_{1}\nu\,\exp\left\{ -\kappa(t-s)\right\} +\left(\frac{c_{2}\nu}{\Gamma(\alpha)}\int_{s}^{t}(t-r)^{\alpha-1}\exp\left\{ -\kappa(t-s)\right\} dr\right)\right]\sqrt{\bar{\sigma}_{s}^{2}},
\]
we can write $\int_{0}^{t}\mathbb{E}_{s}\left[D_{s}^{W}\sigma_{t}^{2}\right]dW_{s}$
in a more compact way as
\[
\int_{0}^{t}\mathbb{E}_{s}\left[D_{s}^{W}\sigma_{t}^{2}\right]dW_{s}=\int_{0}^{t}a(t,s)dW_{s}.
\]

We will continue by treating the third addend in equation \ref{eq:Clark-Ocone-Haussmann},
using proposition $\left(\ref{prop: Malliavin_Derivative(JumpCase)}\right)$.

\[
\int_{0}^{t}\int_{0}^{\infty}\mathbb{E}_{s}\left[D_{s,z}^{N}\sigma_{t}^{2}\right]\tilde{N}(ds,dz)=\int_{0}^{t}\int_{0}^{\infty}c_{3}\eta z\tilde{N}(ds,dz),
\]
which is a martingale process. Therefore, recalling that we defined
$M_{t}=\int_{0}^{T}\mathbb{E}_{t}\left[\sigma_{s}^{2}\right]ds$,
we can write $dM_{t}$ as
\begin{align*}
dM_{t} & =\left(\int_{t}^{T}a(r,t)dr\right)dW_{t}+c_{3}\eta d\tilde{J}_{t}\\
 & =\nu\int_{t}^{T}c_{1}\exp\left\{ -\kappa(r-t)\right\} \sqrt{\bar{\sigma}_{r}^{2}}dr\,dW_{t}\\
 & \qquad+\frac{c_{2}\nu}{\Gamma(\alpha)}\int_{t}^{T}\int_{t}^{r}(r-u)^{\alpha-1}\exp\left\{ -\kappa(u-t)\right\} \sqrt{\bar{\sigma}_{r}^{2}}dudr\,dW_{t}+c_{3}\eta d\tilde{J}_{t}.
\end{align*}
Using Fubini's Theorem and integrating the expression we obtain
\begin{align*}
dM_{t} & =\nu\int_{t}^{T}c_{1}\exp\left\{ -\kappa(u-t)\right\} \sqrt{\bar{\sigma}_{t}^{2}}dr\\
 & \qquad+\frac{c_{2}\nu}{\Gamma(\alpha)}\int_{t}^{r}\int_{t}^{T}(r-u)^{\alpha-1}\exp\left\{ -\kappa(u-t)\right\} \sqrt{\bar{\sigma}_{t}^{2}}drdu\,dW_{t}+c_{3}\eta d\tilde{J}_{t}\\
 & =\nu\int_{t}^{T}c_{1}\exp\left\{ -\kappa(u-t)\right\} \sqrt{\bar{\sigma}_{t}^{2}}dr\,dW_{t}\\
 & \qquad+\frac{c_{2}\nu}{\alpha\Gamma(\alpha)}\int_{t}^{r}(T-u)^{\alpha}\exp\left\{ -\kappa(u-t)\right\} \sqrt{\bar{\sigma}_{t}^{2}}du\,dW_{t}+c_{3}\eta d\tilde{J}_{t}\\
 & =\nu\left(\int_{t}^{T}\left(\frac{c_{2}}{\alpha\Gamma(\alpha)}\left(T-u\right)^{\alpha}+c_{1}\right)\exp\left\{ -\kappa(u-t)\right\} du\right)\sqrt{\bar{\sigma}_{t}^{2}}dW_{t}+c_{3}\eta d\tilde{J}_{t}.
\end{align*}
Defining $A(T,t)\triangleq\int_{t}^{T}\left(\frac{c_{2}}{\alpha\Gamma(\alpha)}\left(T-u\right)^{\alpha}+c_{1}\right)\exp\left\{ -\kappa(u-t)\right\} du,$
we can rewrite the previous expression as
\[
dM_{t}=\nu A(T,t)\sqrt{\bar{\sigma}_{t}^{2}}dW_{t}+c_{3}\eta d\tilde{J}_{t}.
\]
\end{proof}

\section{\label{sec: Call-Price-Approximations}Call Price Approximations
in the Fractional Heston Model with Jumps}

This section is aimed at providing an approximation formula for the
pricing of vanilla options presented in Theorem \ref{thm: BS expansion formula},
where we assumed the FSVJJ model presented in previous sections. In
particular, we will deduce a first order approximation formula with
respect to the vol-of-vol parameter $\nu$, and the jump parameter
$\eta$. In order to do so, we will need to introduce a series of
technical lemmas. These will help us bound the conditional expectation
of the integrated future variance, needed to bound the error terms
of the approximating formula. 
\begin{lem}
\label{lem: LD_derivatives}The following results regarding some of
the expressions considered in previous sections hold.
\end{lem}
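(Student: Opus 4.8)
The plan is to establish each listed identity by reducing it, through the explicit Black-Scholes formula, to a computation with the standard normal density $\varphi=\Phi'$, and then to read off the accompanying bounds from Gaussian estimates together with the volatility lower bound \eqref{eq: Vol_LowerBound}. First I would record the spatial Greeks of $BS(t,x,\sigma)=e^{x}\Phi(d_+)-Ke^{-r(T-t)}\Phi(d_-)$: differentiating in $x$ yields $\partial_x BS=e^{x}\Phi(d_+)$, and iterating shows that each $\partial_x^{n}BS$ equals $e^{x}$ times a polynomial in $(T-t)^{-1/2}$, $\sigma^{-1}$ and $d_\pm$ multiplied by the density $\varphi(d_+)$. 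Since $\Lambda=\partial_x$ and $\Gamma=\partial_x^{2}-\partial_x$, every operator appearing in the decomposition, namely $\Lambda$, $\Gamma$, $\Lambda\Gamma$ and $\Gamma^{2}$, is a fixed linear combination of these $\partial_x^{n}BS$, so the required derivative formulas follow by collecting terms; where a $y$-derivative intervenes I would eliminate it using the Delta-Gamma-Vega relationship \eqref{eq: Functional_DGV_Relationship}.

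For the bounds I would use that $|d_\pm|^{k}\varphi(d_\pm)$ is uniformly bounded for every $k$, so that $\partial_x^{n}BS$ is controlled by $C\,\sigma^{-(n-1)}(T-t)^{-(n-1)/2}$ up to the harmless factor $e^{x}$. Setting $\sigma=v_s$ and invoking \eqref{eq: Vol_LowerBound}, which keeps $v_s^{2}$ bounded away from zero on $[0,T)$, turns these into pure powers of $(T-s)^{-1/2}$. The $L$- and $D$-type expressions are then handled by pulling the deterministic estimates out of the conditional expectation and using the explicit covariation densities from Proposition \ref{prop: Martingale_Dynamics}, where $dM_t^{c}=\nu A(T,t)\sqrt{\bar{\sigma}_t^{2}}\,dW_t$ gives $d[W,M^{c}]_s=\nu A(T,s)\sqrt{\bar{\sigma}_s^{2}}\,ds$ and $d[M^{c},M^{c}]_s=\nu^{2}A(T,s)^{2}\bar{\sigma}_s^{2}\,ds$, together with the moment bounds on $\bar{\sigma}^{2}$ following from the Feller condition.

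The step I expect to be the main obstacle is controlling the apparent singularities as $s\nearrow T$. The operator $\Gamma^{2}BS$ carries a factor of order $(T-s)^{-3/2}$ and $\Lambda\Gamma BS$ one of order $(T-s)^{-1}$, while the covariation densities vanish at maturity, since $A(T,s)=O(T-s)$ forces $d[W,M^{c}]_s=O(T-s)$ and $d[M^{c},M^{c}]_s=O((T-s)^{2})$. The delicate point is to match these negative powers of $(T-s)$ against the positive powers supplied by $A(T,s)$, together with the Vega-weights $1/(v_s(T-s))$, so that every integrand in the decomposition is integrable up to $T$; this is precisely where the explicit form of $A(T,t)$ and the lower bound \eqref{eq: Vol_LowerBound} must be used in tandem to close the estimates.
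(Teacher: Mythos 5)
Your proposal does not actually address the content of Lemma \ref{lem: LD_derivatives}. That lemma is a list of identities about the \emph{volatility side} of the model: the conditional CIR mean $\mathbb{E}_{t}\left[\bar{\sigma}_{s}^{2}\right]$, the first-order expansion $\mathbb{E}_{t}\left[\sigma_{s}\sqrt{\bar{\sigma}_{s}^{2}}\right]=\bar{\sigma}_{t}^{2}e^{-\kappa(s-t)}+\theta\left(1-e^{-\kappa(s-t)}\right)+\mathcal{O}\left(\nu^{2}+\eta^{2}\right)$, the covariations of $M^{c}$, and the processes $L\left[W,M^{c}\right]$ and $D\left[M^{c},M^{c}\right]$ together with their dynamics. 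No Black--Scholes Greeks, no operators $\Lambda$, $\Gamma$, $\Gamma^{2}$, no Delta-Gamma-Vega relationship and no singularity as $s\nearrow T$ appear anywhere in the statement. Almost everything you propose --- iterated $\partial_{x}^{n}BS$ formulas, Gaussian bounds on $|d_{\pm}|^{k}\varphi(d_{\pm})$, balancing powers of $(T-s)$ against $A(T,s)$ --- belongs to Lemma \ref{lem: BS_BoundedDerivatives} and to the error estimates in the proof of Theorem \ref{thm: FirstOrder_ApproxFormula}, not here. The only part of your argument that lands on the lemma is the observation that $dM_{t}^{c}=\nu A(T,t)\sqrt{\bar{\sigma}_{t}^{2}}\,dW_{t}$ yields $d\left[W,M^{c}\right]_{t}$ and $d\left[M^{c},M^{c}\right]_{t}$; that correctly disposes of items 4--6 (and, after conditioning, 7--8), which the paper itself treats as immediate.

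What you are missing are precisely the three items the paper bothers to prove. Item 3 is the real content: one must show that the cross term $\mathbb{E}_{t}\left[\sqrt{\sigma_{s}^{2}\bar{\sigma}_{s}^{2}}\right]$ agrees with $\mathbb{E}_{t}\left[\bar{\sigma}_{s}^{2}\right]$ up to $\mathcal{O}\left(\nu^{2}+\eta^{2}\right)$. The paper does this by introducing the auxiliary two-parameter processes $Z_{r}^{s}$ and $\Theta_{r}^{s}$, applying the It\^{o} formula for jump processes to $f\left(x,y\right)=\sqrt{\left(U_{s}+\nu x\right)\left(U_{s}+y\right)}$, taking expectations so the martingale parts vanish, and observing that every surviving non-leading term carries a factor $\nu^{2}$ or, via $\Delta_{y}^{2}f$, a factor $\eta^{2}$. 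Nothing in your plan produces this expansion. Items 9 and 10 then require differentiating $L\left[W,M^{c}\right]_{t}$ and $D\left[M^{c},M^{c}\right]_{t}$ as semimartingales, using item 3 (resp. item 2) inside the time integral and the Leibniz rule to extract the $-\nu\bar{\sigma}_{t}^{2}A\left(T,t\right)dt$ drift and the $dW_{t}$ term coming from $d\bar{\sigma}_{t}^{2}$; this step is also absent from your proposal. As written, the proof would establish a different lemma and leave the one actually stated unproved.
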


\begin{enumerate}
\item $A\left(T,t\right)=\int_{t}^{T}\left(\frac{c_{2}}{\Gamma\left(\alpha\right)}\left(T-u\right)^{\alpha}+c_{1}\right)e^{-\kappa\left(u-t\right)}du$.
\item $\mathbb{E}_{t}\left[\bar{\sigma}_{s}^{2}\right]=\theta+\left(\bar{\sigma}_{t}^{2}-\theta\right)e^{-\kappa\left(s-t\right)}=\bar{\sigma}_{t}^{2}e^{-\kappa\left(s-t\right)}+\theta\left(1-e^{-\kappa\left(s-t\right)}\right)$
\item $\mathbb{E}_{t}\left[\sigma_{s}\sqrt{\bar{\sigma}_{s}^{2}}\right]=\bar{\sigma}_{t}^{2}e^{-\kappa\left(s-t\right)}+\theta\left(1-e^{-\kappa\left(s-t\right)}\right)+\mathcal{O}\left(\nu^{2}+\eta^{2}\right).$ 
\item $dM_{t}^{c}=\nu A(T,t)\sqrt{\bar{\sigma}_{t}^{2}}dW_{t}$.
\item $d\left[W,M^{c}\right]_{t}=\nu A(T,t)\sqrt{\bar{\sigma}_{t}^{2}}dt$.
\item $d\left[M^{c},M^{c}\right]_{t}=\nu^{2}A^{2}(T,t)\bar{\sigma}_{t}^{2}dt$.
\item $L\left[W,M^{c}\right]_{t}=\nu\int_{t}^{T}A(T,s)\mathbb{E}_{t}\left[\sqrt{\sigma_{s}^{2}\bar{\sigma}_{s}^{2}}\right]ds$.
\item $D\left[M^{c},M^{c}\right]_{t}=\nu^{2}\int_{t}^{T}A^{2}(T,s)\mathbb{E}_{t}\left[\bar{\sigma}_{s}^{2}\right]ds.$
\item 
\begin{align*}
dL\left[W,M^{c}\right]_{t} & =d\left(\mathbb{E}_{t}\left[\int_{t}^{T}\sigma_{s}d\left[W,M^{c}\right]_{s}\right]\right)\\
 & =\nu^{2}\left(\int_{t}^{T}A(T,s)e^{-\kappa\left(s-t\right)}ds\right)\bar{\sigma}_{t}dW_{t}-\nu\bar{\sigma}_{t}^{2}A\left(T,t\right)dt+\mathcal{O}\left(\nu^{3}+\nu\eta^{2}\right).
\end{align*}
\item 
\begin{align*}
dD\left[M^{c},M^{c}\right]_{t} & =d\left(\mathbb{E}_{t}\left[\int_{t}^{T}d\left[M^{c},M^{c}\right]_{s}\right]\right)\\
 & =\nu^{3}\left(\int_{t}^{T}A^{2}(T,s)e^{-\kappa\left(s-t\right)}ds\right)\bar{\sigma}_{t}dW_{t}-\nu^{2}\bar{\sigma}_{t}^{2}A^{2}\left(T,t\right)dt.
\end{align*}
\end{enumerate}
\begin{proof}
The proof of this result can be found in the subsection \ref{subsec: Proof Lemma LD_derivatives}
of the Appendix. 
\end{proof}
\begin{lem}
\label{lem: BS_BoundedDerivatives}Let $0\leq t<s\leq T$ and $\mathcal{G}_{t}\triangleq\mathcal{F}_{t}\vee\mathcal{F}_{T}^{W}\vee\mathcal{F}_{T}^{J^{0}}$.
Then for every $n\geq0$, there exists $C=C\left(n,\rho_{1}\right)$
such that
\[
\left|\mathbb{E}\left[\partial_{x}^{n}G\left(s,X_{s}+a,v_{s}\right)\mid\mathcal{G}_{t}\right]\right|\leq C\left(\int_{s}^{T}\mathbb{E}_{s}\left[\sigma_{\theta}\right]d\theta\right)^{-\frac{1}{2}\left(n+1\right)},
\]
where $a\geq0$, and $G\left(t,x,y\right)\triangleq\left(\partial_{x}^{2}-\partial_{x}\right)BS\left(t,x,y\right)$.
\end{lem}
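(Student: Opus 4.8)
The lemma bounds the conditional expectation of $\partial_x^n G(s, X_s + a, v_s)$ where $G = \Gamma BS$ (the Gamma of Black-Scholes), conditioning on $\mathcal{G}_t = \mathcal{F}_t \vee \mathcal{F}_T^W \vee \mathcal{F}_T^{J^0}$.

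The key structural point: conditioning on $\mathcal{F}_T^W \vee \mathcal{F}_T^{J^0}$ fixes all the randomness EXCEPT the independent Brownian motion $\tilde{W}$. Since $v_s$ depends only on $W$ and $J$ (via $M$), conditioning on $\mathcal{G}_t$ makes $v_s$ essentially deterministic/known. And $X_s$ still has the $\tilde{W}$-randomness in it.

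The bound has the form $\left(\int_s^T \mathbb{E}_s[\sigma_\theta]d\theta\right)^{-\frac{1}{2}(n+1)}$, which looks like $v_s^{-(n+1)}$ times powers of $(T-s)$. This is the standard "Black-Scholes Greeks decay" estimate — derivatives of $BS$ in $x$ produce inverse powers of the volatility-time.

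Let me think about the structure of the proof.

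---

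The plan is to exploit the conditional independence structure induced by $\mathcal{G}_t$. First I would observe that conditioning on $\mathcal{G}_t = \mathcal{F}_t \vee \mathcal{F}_T^W \vee \mathcal{F}_T^{J^0}$ fixes the entire trajectory of $W$ and $J^0$ on $[0,T]$, leaving only the independent Brownian motion $\tilde{W}$ as a source of randomness. Since $v_s = v_s(\omega)$ is built from $M_s$ and $Y_s$, which depend only on $W$ and $J$, the quantity $v_s$ becomes $\mathcal{G}_t$-measurable. Moreover, from the dynamics in $\left(\ref{eq: log-price_SDE(cont.version)}\right)$, conditionally on $\mathcal{G}_t$ the random variable $X_s$ is Gaussian: its $\tilde{W}$-driven component $\sqrt{1-\rho_1^2}\int_t^s \sigma_u d\tilde{W}_u$ is, conditionally, a centered Gaussian with a $\mathcal{G}_t$-measurable variance, while the $W$-driven part, drift, and jump part are all $\mathcal{G}_t$-measurable constants.

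The core of the argument is then a Black-Scholes Greek estimate. I would recall that $\partial_x^n G(s, x, y) = \partial_x^n \Gamma BS(s,x,y)$ is, up to the discount factor $e^{-r(T-s)}$, a finite linear combination of products of Hermite-type polynomials in $d_\pm$ with the Gaussian density $\varphi(d_\pm)$, all divided by powers of $y\sqrt{T-s}$. Explicitly, each $x$-derivative of the Black-Scholes Gamma contributes a factor of order $(y\sqrt{T-s})^{-1}$, so that $|\partial_x^n G(s,x,y)|$ is bounded by $C(y\sqrt{T-s})^{-(n+1)}$ times a bounded function of the moneyness (because $P(d)\varphi(d)$ is uniformly bounded in $d$ for any fixed polynomial $P$). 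This gives a pointwise bound of the form $|\partial_x^n G(s, X_s + a, v_s)| \leq C \, (v_s \sqrt{T-s})^{-(n+1)}$, uniformly in $X_s$ and $a \geq 0$.

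Next I would connect $v_s \sqrt{T-s}$ to the integral appearing in the statement. By definition $v_s^2 (T-s) = \int_s^T \mathbb{E}_s[\sigma_\theta^2]\,d\theta$, so $v_s\sqrt{T-s} = \left(\int_s^T \mathbb{E}_s[\sigma_\theta^2]\,d\theta\right)^{1/2}$. The bound in the statement, however, involves $\int_s^T \mathbb{E}_s[\sigma_\theta]\,d\theta$ rather than $\int_s^T \mathbb{E}_s[\sigma_\theta^2]\,d\theta$; by Jensen's inequality $\mathbb{E}_s[\sigma_\theta]^2 \leq \mathbb{E}_s[\sigma_\theta^2]$, which after integrating (using Cauchy--Schwarz in the time variable) yields $\left(\int_s^T \mathbb{E}_s[\sigma_\theta]d\theta\right)^2 \leq (T-s)\int_s^T \mathbb{E}_s[\sigma_\theta^2]d\theta$, the inequality in the right direction to pass from the $\sigma^2$-integral to the $\sigma$-integral; combined with the lower bound on $\sigma_s^2$ from Proposition $\left(\ref{eq: Vol_LowerBound}\right)$ the denominators are controlled. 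Finally, since the pointwise bound is already $\mathcal{G}_t$-measurable (it depends only on $v_s$ and $T-s$), taking conditional expectation $\mathbb{E}[\,\cdot \mid \mathcal{G}_t]$ leaves it unchanged, delivering the claimed estimate.

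The main obstacle I anticipate is making the Black-Scholes Greek bound fully rigorous and uniform in the moneyness $a \geq 0$: one must verify that the $n$-th $x$-derivative of the Gamma is genuinely $O\big((v_s\sqrt{T-s})^{-(n+1)}\big)$ with a constant depending only on $n$ and $\rho_1$, and that the shift by $a \geq 0$ does not degrade the bound — this rests on the fact that each factor of $d_\pm$ that a derivative produces is absorbed by the exponential decay of $\varphi(d_\pm)$, uniformly in the log-moneyness. The secondary technical point is the careful bookkeeping to move from the $\int \mathbb{E}_s[\sigma_\theta^2]$ normalization natural to the Black-Scholes formula to the $\int \mathbb{E}_s[\sigma_\theta]$ normalization in the statement, which is where the Jensen/Cauchy--Schwarz step and the strictly positive lower bound on the variance both enter.
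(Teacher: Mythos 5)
Your proof is correct, and it reaches the same estimate as the paper, but by a genuinely shorter route. The paper first computes the conditional law of $X_{s}$ given $\mathcal{G}_{t}$ explicitly (a Gaussian with variance $\tilde{\Sigma}=\left(1-\rho_{1}^{2}\right)\int_{t}^{s}\mathbb{E}_{s}\left[\sigma_{\theta}^{2}\right]d\theta$), convolves it with the density $\phi\left(\cdot,v_{s}\sqrt{T-s}\right)$ appearing in $G$ via the Gaussian product formula of Proposition \ref{prop: GaussianPDF_Product}, obtains the exact value $\phi\bigl(\tilde{\mu}+a-\mu_{-},\sqrt{\tilde{\Sigma}+v_{s}^{2}\left(T-s\right)}\bigr)$ for the conditional expectation, and only then applies the bound $\left|\partial_{x}^{n}\phi\left(x,\sigma\right)\right|\leq C\sigma^{-\left(n+1\right)}$ to the \emph{inflated} variance $\tilde{\Sigma}+v_{s}^{2}\left(T-s\right)$, which it finally discards down to $\int_{s}^{T}\mathbb{E}_{s}\left[\sigma_{\theta}^{2}\right]d\theta=v_{s}^{2}\left(T-s\right)$. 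You instead apply the same Gaussian-derivative bound pointwise, before conditioning, to get $\left|\partial_{x}^{n}G\left(s,X_{s}+a,v_{s}\right)\right|\leq C\left(v_{s}\sqrt{T-s}\right)^{-\left(n+1\right)}$ uniformly in $X_{s}$ and $a$, and then use only the $\mathcal{G}_{t}$-measurability of $v_{s}$ to pull this bound through the conditional expectation. This avoids the Gaussian product formula and the explicit conditional law of $X_{s}$ entirely; what it gives up is the intermediate sharper bound in terms of $\tilde{\Sigma}+v_{s}^{2}\left(T-s\right)$, which the paper does not use anyway, so nothing is lost for the purposes of this lemma or its later applications.

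One point to be careful about: your Jensen/Cauchy--Schwarz step to convert $\int_{s}^{T}\mathbb{E}_{s}\left[\sigma_{\theta}^{2}\right]d\theta$ into $\int_{s}^{T}\mathbb{E}_{s}\left[\sigma_{\theta}\right]d\theta$ does not actually deliver the exponent in the displayed statement --- Cauchy--Schwarz gives $\bigl(\int_{s}^{T}\mathbb{E}_{s}\left[\sigma_{\theta}^{2}\right]d\theta\bigr)^{-\frac{1}{2}\left(n+1\right)}\leq\left(T-s\right)^{\frac{1}{2}\left(n+1\right)}\bigl(\int_{s}^{T}\mathbb{E}_{s}\left[\sigma_{\theta}\right]d\theta\bigr)^{-\left(n+1\right)}$, with the power $-\left(n+1\right)$ rather than $-\frac{1}{2}\left(n+1\right)$. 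You should not try to bridge this gap: the integrand $\mathbb{E}_{s}\left[\sigma_{\theta}\right]$ in the statement is a typo for $\mathbb{E}_{s}\left[\sigma_{\theta}^{2}\right]$, as the paper's own proof terminates at $\bigl(\int_{s}^{T}\mathbb{E}_{s}\left[\sigma_{\theta}^{2}\right]d\theta\bigr)^{-\frac{1}{2}\left(n+1\right)}$ and every subsequent use of the lemma (the bounds in terms of $a_{s}^{2}=\int_{s}^{T}\mathbb{E}_{s}\left[\bar{\sigma}_{z}^{2}\right]dz$ in Theorem \ref{thm: FirstOrder_ApproxFormula}) invokes the squared-volatility version. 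Your core argument proves exactly that version, which is the one that is needed.
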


\begin{proof}
By means of simple computations we know that
\[
G\left(s,X_{s}+a,v_{s}\right)=Ke^{-r\left(T-s\right)}\phi\left(X_{s}+a-\mu_{-},v_{s}\sqrt{T-s}\right),
\]
where $\mu_{-}\triangleq\ln\left(K\right)-\left(r-v_{s}^{2}/2\right)\left(T-s\right)$
and $\phi$ is the normal density distribution function. Notice that
$v_{s}\sqrt{T-s}=\sqrt{\int_{s}^{T}\mathbb{E}_{s}\left[\sigma_{\theta}^{2}\right]d\theta}$.
The properties of $\phi$ under the derivation sign, allow us to write
the following equality
\[
\partial_{x}^{n}G\left(s,X_{s}+a,v_{s}\right)=\left(-1\right)^{n}Ke^{-r\left(T-s\right)}\partial_{\mu_{-}}^{n}\phi\left(X_{s}+a-\mu_{-},v_{s}\sqrt{T-s}\right).
\]
Now we consider the conditional expectation with respect to the filtration
$\mathcal{G}_{t}$ that allow us to know the trajectories of the instantaneous
variance and the jump process up to time $T$. Therefore, we can write
the following,
\begin{equation}
\mathbb{E}\left[\partial_{x}^{n}G\left(s,X_{s}+a,v_{s}\right)\mid\mathcal{G}_{t}\right]=\left(-1\right)^{n}Ke^{-r\left(T-s\right)}\partial_{\mu_{-}}^{n}\mathbb{E}\left[\phi\left(X_{s}+a-\mu_{-},v_{s}\sqrt{T-s}\right)\mid\mathcal{G}_{t}\right].\label{eq: expect_npartder}
\end{equation}
The law of $X_{s}$ given $\mathcal{G}_{t}$ is a normal random variable
with mean
\[
\tilde{\mu}\triangleq X_{t}+\int_{t}^{s}\left(r-\frac{1}{2}\sigma_{\theta}^{2}\right)d\theta+\rho_{2}\eta\left(\tilde{J}_{s}-\tilde{J}_{t}\right)-\zeta\left(\rho_{2},\eta\right)\left(s-t\right)+\rho_{1}\int_{t}^{s}\sigma_{\theta}dW_{\theta},
\]
and variance $\tilde{\Sigma}\triangleq\left(1-\rho_{1}^{2}\right)\int_{t}^{s}\sigma_{\theta}^{2}d\theta=\left(1-\rho_{1}^{2}\right)\int_{t}^{s}\mathbb{E}_{s}\left[\sigma_{\theta}^{2}\right]d\theta$.
Now, we can compute the conditional expectation as follows,
\begin{align*}
 & \mathbb{E}\left[\phi\left(X_{s}+a-\mu_{-},v_{s}\sqrt{T-s}\right)\mid\mathcal{G}_{t}\right]\\
 & \qquad=\int_{\mathbb{R}}\phi\left(x+a-\mu_{-},v_{s}\sqrt{T-s}\right)f_{X}\left(x\right)dx\\
 & \qquad=\int_{\mathbb{R}}\phi\left(x+a-\mu_{-},v_{s}\sqrt{T-s}\right)\phi\left(x-\tilde{\mu},\tilde{\Sigma}\right)dx\\
 & \qquad=\int_{\mathbb{R}}\left[\frac{1}{v_{s}\sqrt{2\pi\left(T-s\right)}}\exp\left\{ -\frac{1}{2}\left(\frac{x+a-\mu_{-}}{v_{s}\sqrt{T-s}}\right)^{2}\right\} \right]\left[\frac{1}{\sqrt{2\pi\tilde{\Sigma}}}\exp\left\{ -\frac{1}{2}\left(\frac{x-\tilde{\mu}}{\sqrt{\tilde{\Sigma}}}\right)^{2}\right\} \right]dx.
\end{align*}
We know that the product of two Gaussian probability density functions,
results in the expression given by equation $\left(\ref{eq: Gaussian_PDF_Product}\right)$
from proposition $\left(\ref{prop: GaussianPDF_Product}\right)$ in
the Appendix \ref{sec:Appendix}. Therefore, we can rewrite the previous
conditional expectation as
\begin{align*}
 & \mathbb{E}\left[\phi\left(X_{s}+a-\mu_{-},v_{s}\sqrt{T-s}\right)\mid\mathcal{G}_{t}\right]\\
 & \qquad=\frac{1}{\sqrt{2\pi\left(\tilde{\Sigma}+v_{s}^{2}\left(T-s\right)\right)}}\exp\left\{ -\frac{\left(\tilde{\mu}+a-\mu_{-}\right)^{2}}{2\left(\tilde{\Sigma}+v_{s}^{2}\left(T-s\right)\right)}\right\} \\
 & \qquad\qquad\times\int_{\mathbb{R}}\left[\frac{1}{\sqrt{2\pi\frac{\tilde{\Sigma}v_{s}^{2}\left(T-s\right)}{\tilde{\Sigma}+v_{s}^{2}\left(T-s\right)}}}\exp\left\{ \frac{-\left(x-\frac{\tilde{\mu}v_{s}^{2}\left(T-s\right)+\left(\mu_{-}-a\right)\tilde{\Sigma}}{\tilde{\Sigma}+v_{s}^{2}\left(T-s\right)}\right)^{2}}{2\frac{\tilde{\Sigma}v_{s}^{2}\left(T-s\right)}{\tilde{\Sigma}+v_{s}^{2}\left(T-s\right)}}\right\} \right]dx\\
 & \qquad=\frac{1}{\sqrt{2\pi\left(\tilde{\Sigma}+v_{s}^{2}\left(T-s\right)\right)}}\exp\left\{ -\frac{\left(\tilde{\mu}+a-\mu_{-}\right)^{2}}{2\left(\tilde{\Sigma}+v_{s}^{2}\left(T-s\right)\right)}\right\} \\
 & =\phi\left(\tilde{\mu}+a-\mu_{-},\sqrt{\tilde{\Sigma}+v_{s}^{2}\left(T-s\right)}\right).
\end{align*}
The last equality results from the fact that the integral of the Gaussian
density equals to one.

Putting this result in $\left(\ref{eq: expect_npartder}\right)$,
we have 
\begin{align*}
\left|\mathbb{E}\left[\partial_{x}^{n}G\left(s,X_{s},v_{s}\right)\mid\mathcal{G}_{t}\right]\right| & =\left|\left(-1\right)^{n}Ke^{-r\left(T-s\right)}\partial_{\mu_{-}}^{n}\phi\left(\tilde{\mu}+a-\mu_{-},\sqrt{\tilde{\Sigma}+v_{s}^{2}\left(T-s\right)}\right)\right|.
\end{align*}
Now, notice that $\left|\partial_{x}^{n}\phi\left(x,\sigma\right)\right|\leq\left|\frac{x^{n}}{\sigma^{2n+1}}e^{-\frac{x^{2}}{2\sigma^{2}}}\right|$,
for all $x\in\mathbb{R}$, $\sigma\in\mathbb{R}^{+}$ and $n\geq1$.
Let $C\left(n\right)=n+1$ be a positive constant, and $d=\frac{1}{\sigma^{2}}$.
Then, trivially the following holds
\begin{align*}
\left|\partial_{x}^{n}\phi\left(x,\sigma\right)\right| & \leq x^{C}e^{-dx^{2}}.
\end{align*}
Note that the function $\psi\left(x\right)=x^{C}e^{-dx^{2}}$ has
a global maximum at $x=\pm\sqrt{\frac{C}{2d}}$ and therefore, $\left|\psi\left(x\right)\right|\leq\psi\left(\sqrt{\frac{C}{2d}}\right)\leq C\sigma^{\left(n+1\right)}$.
Therefore, since $\sigma=\sqrt{\tilde{\Sigma}+v_{s}^{2}\left(T-s\right)}$,
we can write
\begin{align*}
 & \left|\mathbb{E}\left[\partial_{x}^{n}G\left(s,X_{s},v_{s}\right)\mid\mathcal{G}_{t}\right]\right|\\
 & \leq C\left(\tilde{\Sigma}+v_{s}^{2}\left(T-s\right)\right)^{-\frac{1}{2}\left(n+1\right)}\\
 & \leq C\left(\left(1-\rho_{1}^{2}\right)\int_{t}^{s}\mathbb{E}_{s}\left[\sigma_{\theta}^{2}\right]d\theta+\int_{s}^{T}\mathbb{E}_{s}\left[\sigma_{\theta}^{2}\right]d\theta\right)^{-\frac{1}{2}\left(n+1\right)}\\
 & \leq C\left(\left(1-\rho_{1}^{2}\right)\int_{t}^{T}\mathbb{E}_{s}\left[\sigma_{\theta}^{2}\right]d\theta-\left(1-\rho_{1}^{2}\right)\int_{s}^{T}\mathbb{E}_{s}\left[\sigma_{\theta}^{2}\right]d\theta+\int_{s}^{T}\mathbb{E}_{s}\left[\sigma_{\theta}^{2}\right]d\theta\right)^{-\frac{1}{2}\left(n+1\right)}\\
 & \leq C\left(\left(1-\rho_{1}^{2}\right)\int_{t}^{T}\mathbb{E}_{s}\left[\sigma_{\theta}^{2}\right]d\theta+\rho_{1}^{2}\int_{s}^{T}\mathbb{E}_{s}\left[\sigma_{\theta}^{2}\right]d\theta\right)^{-\frac{1}{2}\left(n+1\right)}\\
 & \leq C\left(\int_{s}^{T}\mathbb{E}_{s}\left[\sigma_{\theta}^{2}\right]d\theta\right)^{-\frac{1}{2}\left(n+1\right)}.
\end{align*}
 
\end{proof}
\begin{lem}
\label{lem: LowerBound_IntExpectedVol}Assume that $2k\theta>\nu^{2}$
and let $\varphi\left(t\right)\triangleq\int_{t}^{T}e^{-\kappa\left(z-t\right)}dz=\frac{1}{\kappa}\left(1-e^{-\kappa\left(T-t\right)}\right)$.
Then, for all $0\leq s<t\leq T$
\end{lem}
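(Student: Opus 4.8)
The plan is to bound $\mathbb{E}_s\big[\sigma_u^2\big]$ from below by a deterministic (or $\mathcal{F}_s$-measurable) quantity for each $u\in[t,T]$, and then integrate in $u$, so that the exponential decay integrates precisely to $\varphi$. First I would dispose of the contributions that only help: since $J$ is a subordinator its conditional increments are non-negative, so $c_3\eta\,\mathbb{E}_s[J_u]\ge 0$ for $u\ge s$ and this jump term may be dropped from any lower bound. This reduces the problem to the purely continuous (Heston plus fractional-integral) part of $\sigma_u^2$.

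Second, I would reuse the pathwise estimate behind the positivity statement~\eqref{eq: Vol_LowerBound}. Its two ingredients are the a.s.\ bound $\nu Z_r\ge-\theta(1-e^{-\kappa r})\ge-\theta(1-e^{-\kappa u})$ for $r\le u$, and the elementary identity $\frac{1}{\Gamma(\alpha)}\int_0^u(u-r)^{\alpha-1}\,dr=\frac{u^\alpha}{\alpha\Gamma(\alpha)}\le\frac{T^\alpha}{\alpha\Gamma(\alpha)}$. Feeding these into $\sigma_u^2=U_u+c_1\nu Z_u+c_2\nu I_{0+}^{\alpha}Z_u+c_3\eta J_u$ and taking $\mathbb{E}_s[\cdot]$ (the bound being deterministic, it is left unchanged) gives the pointwise estimate $\mathbb{E}_s[\sigma_u^2]\ge\bar\sigma_0^2 e^{-\kappa u}+\theta(1-e^{-\kappa u})\big(1-c_1-c_2\tfrac{T^\alpha}{\alpha\Gamma(\alpha)}\big)$, which is non-negative by hypothesis. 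If the sharper, $\bar\sigma_s^2$-dependent version is wanted, one replaces $U_u+c_1\nu Z_u$ by the conditional Heston mean $\mathbb{E}_s[\bar\sigma_u^2]=\bar\sigma_s^2 e^{-\kappa(u-s)}+\theta(1-e^{-\kappa(u-s)})$ from Lemma~\ref{lem: LD_derivatives}(2), and controls only the fractional term by the constraint.

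Third, I would integrate the resulting lower bound over $u\in[t,T]$. The single non-elementary primitive is $\int_t^T e^{-\kappa(u-s)}\,du=e^{-\kappa(t-s)}\varphi(t)$ (equivalently $\int_t^T e^{-\kappa u}\,du=e^{-\kappa t}\varphi(t)$ in the deterministic version), which is exactly where $\varphi$ enters the statement; the remaining terms integrate to multiples of $(T-t)$. Collecting the pieces and using $\varphi(t)\le T-t$ (from $1-e^{-x}\le x$) to fix the sign of the coefficient multiplying $(\bar\sigma_s^2-\theta)$ then yields the claimed bound.

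The only genuine obstacle is the Riemann--Liouville term $c_2\nu I_{0+}^{\alpha}Z_u$: it is the one contribution that can be negative and hence cannot simply be discarded. The point, inherited from the positivity proof, is that its worst-case magnitude is exactly $c_2\theta(1-e^{-\kappa u})\tfrac{T^\alpha}{\alpha\Gamma(\alpha)}$, which the standing assumption $1-c_1-c_2\tfrac{T^\alpha}{\alpha\Gamma(\alpha)}\ge 0$ absorbs against the mean-reversion level $\theta(1-e^{-\kappa u})$; once this cancellation is in hand, the remainder is a one-line integration producing $\varphi$.
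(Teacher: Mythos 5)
Your proposal proves a different statement from the one the lemma actually asserts. The two conclusions of Lemma \ref{lem: LowerBound_IntExpectedVol} are $(i)$ $\int_{s}^{T}\mathbb{E}_{s}\left[\bar{\sigma}_{u}^{2}\right]du\geq\bar{\sigma}_{s}^{2}\varphi\left(s\right)$ and $(ii)$ $\int_{s}^{T}\mathbb{E}_{s}\left[\bar{\sigma}_{u}^{2}\right]du\geq\frac{\theta\kappa}{2}\varphi\left(s\right)^{2}$, and both concern only the CIR component $\bar{\sigma}^{2}$, not the full volatility $\sigma^{2}$. The paper's proof is a two-line computation from the explicit conditional mean $\mathbb{E}_{s}\left[\bar{\sigma}_{u}^{2}\right]=\bar{\sigma}_{s}^{2}e^{-\kappa\left(u-s\right)}+\theta\left(1-e^{-\kappa\left(u-s\right)}\right)$ of Lemma \ref{lem: LD_derivatives}: for $(i)$ one discards the nonnegative $\theta$-term and integrates the exponential, which produces exactly $\varphi\left(s\right)$; for $(ii)$ one discards the $\bar{\sigma}_{s}^{2}$-term and uses
\[
\int_{s}^{T}\left(1-e^{-\kappa\left(u-s\right)}\right)du\geq\int_{s}^{T}\left(1-e^{-\kappa\left(u-s\right)}\right)e^{-\kappa\left(u-s\right)}du=\frac{\kappa}{2}\varphi\left(s\right)^{2}.
\]
Neither the jump term, nor the Riemann--Liouville integral, nor the constraint $1-c_{1}-c_{2}\frac{T^{\alpha}}{\alpha\Gamma\left(\alpha\right)}\geq0$ plays any role in the paper's argument.

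Your route --- taking $\mathbb{E}_{s}$ of the pathwise bound behind \eqref{eq: Vol_LowerBound} and integrating --- yields a bound of the form $\bar{\sigma}_{0}^{2}e^{-\kappa t}\varphi\left(t\right)$ plus a nonnegative remainder, i.e.\ a bound anchored at the \emph{initial} value $\bar{\sigma}_{0}^{2}$ rather than at $\bar{\sigma}_{s}^{2}$, and nothing in your sketch produces the quadratic term $\frac{\theta\kappa}{2}\varphi\left(s\right)^{2}$ of item $(ii)$. The ``sharper version'' you gesture at does not go through as written: $U_{u}+c_{1}\nu Z_{u}$ equals $c_{1}\bar{\sigma}_{u}^{2}+\left(1-c_{1}\right)U_{u}$, not $\bar{\sigma}_{u}^{2}$, and the parameter constraint only absorbs the worst case of the fractional term against $\theta\left(1-e^{-\kappa u}\right)$, leaving no mechanism to recover the $\bar{\sigma}_{s}^{2}e^{-\kappa\left(u-s\right)}$ contribution that $(i)$ requires. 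In fairness, the statement you were shown was truncated and the paper itself later invokes this lemma (in Lemma \ref{lem: fwdvar_boundedderivatives} and in the proof of Theorem \ref{thm: FirstOrder_ApproxFormula}) as though it bounded $\int_{s}^{T}\mathbb{E}_{s}\left[\sigma_{u}^{2}\right]du$, which is the statement you set out to prove; but that transfer from $\bar{\sigma}^{2}$ to $\sigma^{2}$ is nontrivial, is not what the lemma as stated and proved establishes, and is not achieved by your argument either.
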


\begin{enumerate}
\item $\left(i\right)$ $\int_{s}^{T}\mathbb{E}_{s}\left[\bar{\sigma}_{u}^{2}\right]du\geq\bar{\sigma}_{s}^{2}\varphi\left(s\right)$,
\item $\left(ii\right)$ $\int_{s}^{T}\mathbb{E}_{s}\left[\bar{\sigma}_{u}^{2}\right]du\geq\frac{\theta\kappa}{2}\varphi\left(s\right)^{2}$.
\end{enumerate}
\begin{proof}
From statement 2 in Lemma $\left(\ref{lem: LD_derivatives}\right)$
we know that $\mathbb{E}_{t}\left[\bar{\sigma}_{s}^{2}\right]=\bar{\sigma}_{t}^{2}e^{-\kappa\left(s-t\right)}+\theta\left(1-e^{-\kappa\left(s-t\right)}\right)$.
Now $\left(i\right)$ results from lower bounding the conditional
expectation by considering only the first term of the equality and
integrating in the interval $\left[s,T\right]$ as follows,
\begin{align*}
\int_{s}^{T}\mathbb{E}_{s}\left[\bar{\sigma}_{u}^{2}\right]du & \geq\int_{s}^{T}\bar{\sigma}_{s}^{2}e^{-\kappa\left(u-s\right)}du\geq\frac{\bar{\sigma}_{s}^{2}}{\kappa}\left(1-e^{-\kappa\left(T-s\right)}\right).
\end{align*}

In order to prove $\left(ii\right)$, we lower bound $\mathbb{E}_{t}\left[\bar{\sigma}_{s}^{2}\right]\geq\theta\kappa\varphi\left(s\right)$
and integrate in the interval $\left[s,T\right]$. Therefore, we can
write
\begin{align*}
\int_{s}^{T}\mathbb{E}_{s}\left[\bar{\sigma}_{u}^{2}\right] & \geq\frac{\theta\kappa}{2}\varphi\left(s\right)^{2}.
\end{align*}
\end{proof}
\begin{lem}
\label{lem: fwdvar_boundedderivatives} Let $g\left(t,m,y\right)\triangleq\sqrt{\frac{m-y}{T-t}}$.
Then, the following inequality holds,
\[
\left|\partial_{m}^{2}g\left(t,M_{t},Y_{t}\right)\right|\leq\frac{\left(\bar{\sigma}_{t}^{2}\varphi\left(t\right)\right)^{-\frac{3}{2}}}{4\sqrt{T-t}}.
\]
\end{lem}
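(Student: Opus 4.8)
The plan is to compute $\partial_m^2 g$ explicitly and then reduce the estimate to a lower bound on the integrated projected future variance. Writing $g(t,m,y)=(T-t)^{-1/2}(m-y)^{1/2}$, a direct differentiation in $m$ gives $\partial_m g(t,m,y)=\frac{1}{2}(T-t)^{-1/2}(m-y)^{-1/2}$ and hence
\[
\partial_m^2 g(t,m,y)=-\frac{1}{4}(T-t)^{-1/2}(m-y)^{-3/2},
\]
so that $\lvert\partial_m^2 g(t,M_t,Y_t)\rvert=\frac{1}{4\sqrt{T-t}}(M_t-Y_t)^{-3/2}$. Since $x\mapsto x^{-3/2}$ is decreasing on $(0,\infty)$, it suffices to bound $M_t-Y_t$ from below by $\bar{\sigma}_t^2\varphi(t)$: raising to the power $-\frac{3}{2}$ reverses the inequality and produces exactly the claimed right-hand side.

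Next I would put $M_t-Y_t$ into a tractable form. Because $\sigma_s^2$ is $\mathcal{F}_t$-measurable for $s\le t$, we have $\mathbb{E}_t[\sigma_s^2]=\sigma_s^2$ on $[0,t]$, so
\[
M_t-Y_t=\int_0^T\mathbb{E}_t\left[\sigma_s^2\right]ds-\int_0^t\sigma_s^2\,ds=\int_t^T\mathbb{E}_t\left[\sigma_s^2\right]ds,
\]
which is just $(T-t)v_t^2$. The task is therefore reduced to showing $\int_t^T\mathbb{E}_t[\sigma_s^2]\,ds\ge\bar{\sigma}_t^2\varphi(t)$.

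For this lower bound I would first discard the jump contribution: since $J$ is a subordinator and $c_3,\eta\ge0$, one has $\mathbb{E}_t[c_3\eta J_s]\ge0$ for every $s\ge t$, so it is enough to bound the CIR-driven part $\int_t^T\mathbb{E}_t[U_s+c_1\nu Z_s+c_2\nu(I_{0+}^{\alpha}Z)_s]\,ds$ from below, with $\alpha=H-\frac{1}{2}$. The natural route is a conditional analogue of the positivity estimate behind \eqref{eq: Vol_LowerBound}: restarting the analysis at time $t$ from the value $\bar{\sigma}_t^2$, using the a.s.\ bound $\nu Z_r\ge-\theta(1-e^{-\kappa r})$ on the history together with the standing constraint $1-c_1-c_2\frac{T^{\alpha}}{\alpha\Gamma(\alpha)}\ge0$, the aim is to isolate the term $\bar{\sigma}_t^2 e^{-\kappa(s-t)}$ coming from the initial value and to absorb all remaining contributions into a nonnegative remainder. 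Integrating $\bar{\sigma}_t^2 e^{-\kappa(s-t)}$ over $[t,T]$ then reproduces $\bar{\sigma}_t^2\varphi(t)$, exactly as in Lemma \ref{lem: LowerBound_IntExpectedVol}(i).

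The main obstacle is precisely this last step. The fractional integral $(I_{0+}^{\alpha}Z)_s$ depends on the whole trajectory of $Z$ on $[0,s]$, so neither $c_1\nu Z_s$ nor $c_2\nu(I_{0+}^{\alpha}Z)_s$ is sign-definite and $\sigma^2$ is not Markovian; in particular one cannot simply compare $\mathbb{E}_t[\sigma_s^2]$ with $\mathbb{E}_t[\bar{\sigma}_s^2]$ termwise. The constraint $1-c_1-c_2\frac{T^{\alpha}}{\alpha\Gamma(\alpha)}\ge0$ is what renders the combination of the history-dependent terms controllable, and care is needed to carry the conditional expectation through the Riemann--Liouville kernel, splitting the inner integral at $r=t$ and using $\mathbb{E}_t[Z_r]=e^{-\kappa(r-t)}Z_t$ for $r>t$. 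Once the bound $M_t-Y_t\ge\bar{\sigma}_t^2\varphi(t)$ is established, substituting it into the expression for $\lvert\partial_m^2 g(t,M_t,Y_t)\rvert$ completes the proof.
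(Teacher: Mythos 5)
Your computation of $\partial_m^2 g$ and the identity $M_t-Y_t=\int_t^T\mathbb{E}_t\left[\sigma_s^2\right]ds$ coincide with the paper's proof, and the reduction of the lemma to the single inequality $\int_t^T\mathbb{E}_t\left[\sigma_s^2\right]ds\ge\bar{\sigma}_t^2\varphi\left(t\right)$ is the same in both. The divergence is in how that inequality is handled. The paper simply invokes Lemma \ref{lem: LowerBound_IntExpectedVol}(i) and stops; you correctly notice that this lemma is stated for the CIR process $\bar{\sigma}^2$ rather than for the full variance $\sigma^2$ (which carries the coefficient $c_1$, the fractional integral and the jump term), and you set out to prove the bound for $\sigma^2$ directly. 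But you never actually do it: the inequality is announced as an ``aim'', and you yourself flag this step as ``the main obstacle''. That is the gap.

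It is not a cosmetic gap, because the route you sketch does not obviously close. The term $\bar{\sigma}_t^2e^{-\kappa\left(s-t\right)}$ you want to isolate comes from $\mathbb{E}_t\left[U_s+\nu Z_s\right]=\mathbb{E}_t\left[\bar{\sigma}_s^2\right]$, i.e.\ from the coefficient $1$ in front of $\nu Z_s$; in $\sigma_s^2$ that coefficient is $c_1$, and conditioning gives $\mathbb{E}_t\left[U_s+c_1\nu Z_s\right]=c_1\bar{\sigma}_t^2e^{-\kappa\left(s-t\right)}+c_1\theta\left(1-e^{-\kappa\left(s-t\right)}\right)+\left(1-c_1\right)U_s$, which only yields $c_1\bar{\sigma}_t^2e^{-\kappa\left(s-t\right)}$, while $c_2\nu\,\mathbb{E}_t\left[I_{0+}^{\alpha}Z_s\right]$ must still be bounded below without destroying that term. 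Applying the crude a.s.\ bound $\nu Z_r\ge-\theta\left(1-e^{-\kappa r}\right)$ to the history, as you propose, removes all dependence on $\bar{\sigma}_t^2$ and leads back to a bound in terms of $\bar{\sigma}_0^2$ as in \eqref{eq: Vol_LowerBound}, not $\bar{\sigma}_t^2$. So either the inequality needs a different, explicit argument, or the constant on the right-hand side of the lemma has to be adjusted. To be fair, the paper's own proof has exactly the same soft spot---it applies a statement proved for $\bar{\sigma}^2$ to $\sigma^2$ without comment---so you have put your finger on a real imprecision; but identifying the gap is not the same as closing it, and your proposal does not close it.
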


\begin{proof}
By a simple calculation, we know that
\[
\partial_{m}^{2}g\left(t,m,y\right)=\frac{-1}{4\left(T-t\right)^{2}}\left(\frac{m-y}{T-t}\right)^{-\frac{3}{2}}.
\]
Recalling the definition of processes $M_{t}$ and $Y_{t}$ , the
following holds
\begin{align*}
\left|\partial_{m}^{2}g\left(t,M_{t},Y_{t}\right)\right| & =\left|\frac{-1}{4\left(T-t\right)^{2}}\left(\frac{M_{t}-Y_{t}}{T-t}\right)^{-\frac{3}{2}}\right|\\
 & =\frac{1}{4\sqrt{T-t}}\left(M_{t}-Y_{t}\right)^{-\frac{3}{2}}\\
 & =\frac{1}{4\sqrt{T-t}}\left(\int_{t}^{T}\mathbb{E}_{t}\left[\sigma_{s}^{2}\right]ds\right)^{-\frac{3}{2}}.
\end{align*}
From Lemma $\left(\ref{lem: LowerBound_IntExpectedVol}\right)$ $\left(i\right)$,
we know that $\int_{t}^{T}\mathbb{E}_{t}\left[\sigma_{u}^{2}\right]du\geq\bar{\sigma}_{t}^{2}\varphi\left(t\right)$,
finishing the proof.
\end{proof}
\begin{thm}
\label{thm: FirstOrder_ApproxFormula}(1st order approximation formula).
Fix $T>0$. Assume the model \ref{eq: fHestonJump_Volatility}, where
the volatility process $\sigma=\left\{ \sigma_{s},s\in\left[0,T\right]\right\} $
satisfies the conditions $2k\theta>\nu^{2}$ and $\left(1-c_{1}-c_{2}\frac{T^{\alpha}}{\alpha\Gamma(\alpha)}\right)>C$
for some positive constant $C$. Then 
\begin{align*}
V_{t} & \triangleq e^{-r\left(T-t\right)}\mathbb{E}_{t}\left[\left(e^{X_{T}}-K\right)^{+}\right]\\
 & =BS\left(t,X_{t},v_{t}\right)+\frac{\rho_{1}}{2}\Lambda\Gamma BS\left(t,X_{t},v_{t}\right)L\left[W,M^{c}\right]_{t}\\
 & \quad+\mathbb{E}_{t}\left[\int_{t}^{T}\int_{0}^{\infty}e^{-r\left(s-t\right)}\left(1+\frac{\rho_{1}}{2}L\left[W,M^{c}\right]_{s}\right)\Delta_{x}^{2}\Lambda\Gamma BS\left(s,X_{s-},v_{s-}\right)\ell\left(dz\right)ds\right]\\
 & \quad+\epsilon_{t},
\end{align*}
where $\epsilon_{t}$ is the error term and satisfies
\[
\left|\epsilon_{t}\right|\in\mathcal{O}\left(\nu^{2}+\eta^{2}\right).
\]
\end{thm}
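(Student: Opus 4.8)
The plan is to run the exact decomposition once more as an approximation engine. I would take as input the exact identity \eqref{eq: BS expansion formula} of Theorem \ref{thm: BS expansion formula}, which is itself Corollary \ref{cor: Funct_DGV_Relationship} applied to $A=BS$ and $B\equiv1$:
\[
V_{t}=BS\left(t,X_{t},v_{t}\right)-\zeta\left(\rho_{2},\eta\right)\int_{t}^{T}e^{-rs}\Lambda BS\,ds+\left(I\right)+\left(II\right)+\left(III\right)+\left(IV\right).
\]
The decisive structural feature, inherited from the Delta--Gamma--Vega relation \eqref{eq: Functional_DGV_Relationship}, is that the spot-minus-forward variance contribution $\sigma_{s}^{2}-v_{s}^{2}$ has already cancelled out in Corollary \ref{cor: Funct_DGV_Relationship}; consequently every correction term carries an explicit factor that is small in $\nu$ or $\eta$, namely a bracket $d\left[M^{c},M^{c}\right]$ or $d\left[W,M^{c}\right]$, a jump compensator $\Delta_{m}^{2}g$, or $\zeta\left(\rho_{2},\eta\right)$ itself.

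First I would fix the orders of these factors from Lemma \ref{lem: LD_derivatives}. Since $d\left[M^{c},M^{c}\right]_{s}=\nu^{2}A^{2}\left(T,s\right)\bar{\sigma}_{s}^{2}ds$, the term $\left(II\right)$ is $\mathcal{O}\left(\nu^{2}\right)$, and a second order Taylor expansion gives $\zeta\left(\rho_{2},\eta\right)=\mathcal{O}\left(\eta^{2}\right)$, so the drift correction is $\mathcal{O}\left(\eta^{2}\right)$; both are moved into $\epsilon_{t}$. What must be kept is the $\mathcal{O}\left(\nu\right)$ term $\left(I\right)$, since $d\left[W,M^{c}\right]_{s}=\nu A\left(T,s\right)\sqrt{\bar{\sigma}_{s}^{2}}ds$, together with the jump contributions $\left(III\right)$ and $\left(IV\right)$, which I would reorganise using the Delta--Gamma--Vega relation to turn the volatility-jump difference $\Delta_{y}^{2}BS$ and the compensator $v_{s}\left(T-s\right)\Delta_{m}^{2}g$ into $x$-derivatives, forming the single explicit jump integral of the statement.

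To turn $\left(I\right)$ into the stated term I would freeze the Greek at the base time, writing
\[
\left(I\right)-\frac{\rho_{1}}{2}\Lambda\Gamma BS\left(t,X_{t},v_{t}\right)L\left[W,M^{c}\right]_{t}=\frac{\rho_{1}}{2}\mathbb{E}_{t}\left[\int_{t}^{T}\left(e^{-r\left(s-t\right)}\Lambda\Gamma BS\left(s,X_{s},v_{s}\right)-\Lambda\Gamma BS\left(t,X_{t},v_{t}\right)\right)\sigma_{s}d\left[W,M^{c}\right]_{s}\right],
\]
using $L\left[W,M^{c}\right]_{t}=\mathbb{E}_{t}\left[\int_{t}^{T}\sigma_{s}d\left[W,M^{c}\right]_{s}\right]$. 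The point is that $\Lambda\Gamma BS$ again satisfies $\mathcal{L}_{v}\left(\Lambda\Gamma BS\right)=0$ and the Delta--Gamma--Vega relation, so Corollary \ref{cor: Funct_DGV_Relationship} applies to it verbatim and exhibits the conditional increment $\mathbb{E}_{t}\left[e^{-r\left(s-t\right)}\Lambda\Gamma BS\left(s,X_{s},v_{s}\right)\right]-\Lambda\Gamma BS\left(t,X_{t},v_{t}\right)$ as a sum of terms each carrying a further bracket or jump factor, hence $\mathcal{O}\left(\nu+\eta\right)$; multiplied by the $\mathcal{O}\left(\nu\right)$ integrator, and controlling the coupling to $\sigma_{s}\sqrt{\bar{\sigma}_{s}^{2}}$ (whose fluctuation about its deterministic value is itself $\mathcal{O}\left(\nu+\eta\right)$) by Cauchy--Schwarz, this leaves $\mathcal{O}\left(\nu^{2}+\eta^{2}\right)$. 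Freezing the Black--Scholes coefficient inside the jump integral in the same way produces the multiplicative factor $1+\frac{\rho_{1}}{2}L\left[W,M^{c}\right]_{s}$, which is precisely its first-order correlation correction, the discarded pieces again being $\mathcal{O}\left(\nu^{2}+\eta^{2}\right)$.

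The genuine difficulty is the integrability of every remainder up to maturity, not the bookkeeping above. Each remainder has the schematic form $\mathbb{E}_{t}\left[\int_{t}^{T}\partial_{x}^{n}\Gamma BS\left(s,X_{s},v_{s}\right)\,(\text{bracket or jump})\,ds\right]$, and the Greek $\partial_{x}^{n}\Gamma BS$ blows up as $s\uparrow T$. The controlling estimate is Lemma \ref{lem: BS_BoundedDerivatives}, which bounds the conditional Greek by $C\left(\int_{s}^{T}\mathbb{E}_{s}\left[\sigma_{\theta}^{2}\right]d\theta\right)^{-\frac{1}{2}\left(n+1\right)}$, a negative power of the integrated forward variance. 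This singular factor must be integrated against the coefficients $A\left(T,s\right)$, $A^{2}\left(T,s\right)$ and $\Delta_{m}^{2}g$, whose near-maturity sizes are pinned down by Lemma \ref{lem: fwdvar_boundedderivatives}; and here the standing hypotheses $2\kappa\theta>\nu^{2}$ and $1-c_{1}-c_{2}T^{\alpha}/\left(\alpha\Gamma\left(\alpha\right)\right)>C>0$ are essential, since through Lemma \ref{lem: LowerBound_IntExpectedVol} they force $\int_{s}^{T}\mathbb{E}_{s}\left[\sigma_{\theta}^{2}\right]d\theta$ to stay bounded below by a multiple of $\bar{\sigma}_{s}^{2}\varphi\left(s\right)$, so the negative powers remain integrable in $s$. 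Balancing the blow-up against this decay bounds all remainders by $\mathcal{O}\left(\nu^{2}+\eta^{2}\right)$, and finally letting $\delta\searrow0$ with dominated convergence removes the regularisation, giving $\left|\epsilon_{t}\right|\in\mathcal{O}\left(\nu^{2}+\eta^{2}\right)$.
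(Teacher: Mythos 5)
Your overall architecture coincides with the paper's: start from the exact identity of Theorem \ref{thm: BS expansion formula}, discard $\zeta\left(\rho_{2},\eta\right)$ and $\left(II\right)$ as $\mathcal{O}\left(\eta^{2}\right)$ and $\mathcal{O}\left(\nu^{2}\right)$ respectively, iterate the decomposition on $\left(I\right)$ to extract the frozen term $\frac{\rho_{1}}{2}\Lambda\Gamma BS\left(t,X_{t},v_{t}\right)L\left[W,M^{c}\right]_{s}$ plus the jump integral, and control every remainder near maturity by playing Lemma \ref{lem: BS_BoundedDerivatives} against Lemmas \ref{lem: LowerBound_IntExpectedVol} and \ref{lem: fwdvar_boundedderivatives}. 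The identification of which hypotheses enter where, and the final $\delta\searrow0$ step, are all correct and match the paper.

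The one step where your mechanism diverges from the paper's and, as written, has a gap is the treatment of the coupling in
\[
\frac{\rho_{1}}{2}\,\mathbb{E}_{t}\left[\int_{t}^{T}\left(e^{-r\left(s-t\right)}\Lambda\Gamma BS\left(s,X_{s},v_{s}\right)-\Lambda\Gamma BS\left(t,X_{t},v_{t}\right)\right)\sigma_{s}\,d\left[W,M^{c}\right]_{s}\right].
\]
The pathwise increment of the Greek is $\mathcal{O}\left(1\right)$ (its martingale part carries $\partial_{x}\Lambda\Gamma BS\cdot\sigma\,dW$, which does not vanish as $\nu,\eta\to0$), so only the \emph{conditional} increment is small, and the product with the random weight $\sigma_{s}\sqrt{\bar{\sigma}_{s}^{2}}$ cannot be conditioned away. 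Your proposed fix --- Cauchy--Schwarz against the $\mathcal{O}\left(\nu+\eta\right)$ fluctuation of $\sigma_{s}\sqrt{\bar{\sigma}_{s}^{2}}$ --- is the right idea in principle, but it requires an $L^{2}$ bound on stochastic integrals of the Greeks $\partial_{x}^{n}\Gamma BS$, uniformly up to maturity. Lemma \ref{lem: BS_BoundedDerivatives} only bounds \emph{conditional expectations} of these Greeks, via Gaussian smoothing after conditioning on $\mathcal{G}_{t}$; their second moments have strictly worse singularities as $s\uparrow T$ and are not controlled anywhere in the paper, so this step does not close with the available estimates. The paper sidesteps the issue entirely by applying Corollary \ref{cor: Funct_DGV_Relationship} a second time with $B_{s}=\frac{\rho_{1}}{2}L\left[W,M^{c}\right]_{s}$ (and $B_{s}=\frac{1}{8}D\left[M^{c},M^{c}\right]_{s}$ for $\left(II\right)$), using $B_{T}=0$: the coupling then appears as the explicit covariation terms $d\left[W,L\left[W,M^{c}\right]\right]_{s}$ and $d\left[M^{c},L\left[W,M^{c}\right]\right]_{s}$, whose orders $\mathcal{O}\left(\nu^{2}\right)$ and $\mathcal{O}\left(\nu^{3}\right)$ are read off from the dynamics in Lemma \ref{lem: LD_derivatives} (items 9--10), and every surviving remainder is again of the form (conditional Greek) $\times$ (small coefficient), amenable to Lemma \ref{lem: BS_BoundedDerivatives}. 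To repair your argument you should either supply the missing $L^{2}$ Greek estimates or, more economically, replace the freeze-and-Cauchy--Schwarz step by this second application of Corollary \ref{cor: Funct_DGV_Relationship}.
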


\begin{proof}
This proof relies on applying Corollary \ref{cor: Funct_DGV_Relationship}
iteratively to the different terms appearing in the call price formula
given by equation $\left(\ref{eq: BS expansion formula}\right)$ in
Theorem \ref{thm: BS expansion formula}. This way, the resulting
formula will only contain terms of order $\mathcal{O}\left(\nu^{2}+\eta^{2}\right)$
which will be incorporated into the error term. 

Note that we will omit the term $-\zeta\left(\rho_{2},\eta\right)\mathbb{E}_{t}\left[\int_{t}^{T}e^{-rs}\Lambda A\left(s,X_{s},v_{s}\right)B_{s}ds\right]$
in the application of Corollary \ref{cor: Funct_DGV_Relationship}
and treat it as part of the error term in the approximating formula.
Since $\mathbb{E}_{t}\left[\left|\int_{t}^{T}e^{-rs}\Lambda A\left(s,X_{s},v_{s}\right)B_{s}ds\right|\right]<+\infty$
and from Lemma \ref{lem: Discontinuous parts} we know that
\begin{align*}
\zeta\left(\rho_{2},\eta\right) & =\int_{0}^{\infty}\left(e^{\rho_{2}\eta z}-1-\rho_{2}\eta z\right)\ell\left(dz\right)\\
 & =\rho_{2}^{2}\eta^{2}\int_{0}^{\infty}\int_{0}^{1}z^{2}e^{\lambda\rho_{2}\eta z}\left(1-\lambda\right)d\lambda\ell\left(dz\right)\leq\rho_{2}^{2}\eta^{2}\int_{0}^{\infty}z^{2}\ell\left(dz\right),
\end{align*}
where we have used that \c{ }$\rho_{2}\leq0$ so $e^{\lambda\rho_{2}\eta z}\left(1-\lambda\right)\leq1,$
and we also have that$\int_{0}^{\infty}z^{2}\ell\left(dz\right)<\infty$. 
\end{proof}
\begin{itemize}
\item Step 1: Applying Corollary $\left(\ref{cor: Funct_DGV_Relationship}\right)$
to term $\left(I\right)$ in equation $\left(\ref{eq: BS expansion formula}\right)$
with $A\left(t,X_{t},v_{t}\right)=\Lambda\Gamma BS\left(t,X_{t},v_{t}\right)$
and $B_{t}=\frac{\rho_{1}}{2}L\left[W,M^{c}\right]_{t}$ and recalling
that $B_{T}=0$ by definition, this gives
\begin{align}
\left(I\right) & =\frac{\rho_{1}}{2}\Lambda\Gamma BS\left(t,X_{t},v_{t}\right)L\left[W,M^{c}\right]_{t}\label{eq: term(I)_expansion}\\
 & \quad+\frac{\rho_{1}^{2}}{4}\mathbb{E}_{t}\left[\int_{t}^{T}e^{-r\left(s-t\right)}\Lambda^{2}\Gamma^{2}BS\left(s,X_{s},v_{s}\right)L\left[W,M^{c}\right]_{s}\sigma_{s}d\left[W,M^{c}\right]_{s}\right]\nonumber \\
 & \quad+\frac{\rho_{1}}{16}\mathbb{E}_{t}\left[\int_{t}^{T}e^{-r\left(s-t\right)}\Lambda\Gamma^{3}BS\left(s,X_{s},v_{s}\right)L\left[W,M^{c}\right]_{s}d\left[M^{c},M^{c}\right]_{s}\right]\nonumber \\
 & \quad+\frac{\rho_{1}^{2}}{2}\mathbb{E}_{t}\left[\int_{t}^{T}e^{-r\left(s-t\right)}\Lambda^{2}\Gamma BS\left(s,X_{s},v_{s}\right)\sigma_{s}d\left[W,L\left[W,M^{c}\right]\right]_{s}\right]\nonumber \\
 & \quad+\frac{\rho_{1}}{4}\mathbb{E}_{t}\left[\int_{t}^{T}e^{-r\left(s-t\right)}\Lambda\Gamma^{2}BS\left(s,X_{s},v_{s}\right)d\left[M^{c},L\left[W,M^{c}\right]\right]_{s}\right]\nonumber \\
 & \quad+\frac{\rho_{1}}{2}\mathbb{E}_{t}\left[\int_{t}^{T}e^{-r\left(s-t\right)}\Lambda\Gamma^{2}BS\left(s,X_{s},v_{s}\right)L\left[W,M^{c}\right]_{s}\right.\nonumber \\
 & \qquad\qquad\qquad\times\left.v_{s}\left(T-s\right)\int_{0}^{\infty}\Delta_{m}^{2}g\left(s,M_{s-},Y_{s-}\right)\ell\left(dz\right)ds\right]\nonumber \\
 & \quad+\frac{\rho_{1}}{2}\mathbb{E}_{t}\left[\int_{t}^{T}\int_{0}^{\infty}e^{-r\left(s-t\right)}L\left[W,M^{c}\right]_{s}\right.\nonumber \\
 & \qquad\qquad\qquad\times\left.\left[\Delta_{x}^{2}\Lambda\Gamma BS\left(s,X_{s-},v_{s-}\right)+\Delta_{y}^{2}\Lambda\Gamma BS\left(s,X_{s-},v_{s-}\right)\right]\ell(dz)ds\right]\\
 & \triangleq\left(I.I\right)+\left(I.II\right)+\ldots+\left(I.VII\right).
\end{align}
Notice also, that we can apply Lemma \ref{lem: LD_derivatives} since
we are working under the fractional Heston model with jumps. Therefore
the previous equation can be rewritten as
\begin{align*}
\left(I\right) & =\frac{\rho_{1}\nu}{2}\Lambda\Gamma BS\left(t,X_{t},v_{t}\right)\left(\int_{t}^{T}A(T,s)\mathbb{E}_{t}\left[\sqrt{\sigma_{s}^{2}\bar{\sigma}_{s}^{2}}\right]ds\right)\\
 & +\frac{\rho_{1}^{2}\nu^{2}}{4}\mathbb{E}_{t}\left[\int_{t}^{T}e^{-r\left(s-t\right)}\Lambda^{2}\Gamma^{2}BS\left(s,X_{s},v_{s}\right)\left(\int_{s}^{T}A(T,z)\mathbb{E}_{s}\left[\sqrt{\sigma_{z}^{2}\bar{\sigma}_{z}^{2}}\right]dz\right)A\left(T,s\right)\sigma_{s}\sqrt{\bar{\sigma}_{s}^{2}}ds\right]\\
 & +\frac{\rho_{1}\nu^{3}}{16}\mathbb{E}_{t}\left[\int_{t}^{T}e^{-r\left(s-t\right)}\Lambda\Gamma^{3}BS\left(s,X_{s},v_{s}\right)\left(\int_{s}^{T}A(T,z)\mathbb{E}_{s}\left[\sqrt{\sigma_{z}^{2}\bar{\sigma}_{z}^{2}}\right]dz\right)A^{2}(T,s)\bar{\sigma}_{s}^{2}ds\right]\\
 & +\frac{\rho_{1}^{2}\nu^{2}}{2}\mathbb{E}_{t}\left[\int_{t}^{T}e^{-r\left(s-t\right)}\Lambda^{2}\Gamma BS\left(s,X_{s},v_{s}\right)\left(\int_{s}^{T}A(T,z)e^{-\kappa\left(z-t\right)}dz\right)\sigma_{s}\bar{\sigma}_{s}ds\right]\\
 & +\frac{\rho_{1}\nu^{3}}{4}\mathbb{E}_{t}\left[\int_{t}^{T}e^{-r\left(s-t\right)}\Lambda\Gamma^{2}BS\left(s,X_{s},v_{s}\right)\left(\int_{s}^{T}A(T,z)e^{-\kappa\left(z-t\right)}dz\right)A(T,s)\bar{\sigma}_{s}^{2}ds\right]\\
 & +\frac{\rho_{1}\nu}{2}\mathbb{E}_{t}\left[\int_{t}^{T}e^{-r\left(s-t\right)}\Lambda\Gamma^{2}BS\left(s,X_{s},v_{s}\right)\left(\int_{s}^{T}A(T,u)\mathbb{E}_{s}\left[\sqrt{\sigma_{u}^{2}\bar{\sigma}_{u}^{2}}\right]du\right)\right.\\
 & \qquad\qquad\times\left.v_{s}\left(T-s\right)\int_{0}^{\infty}\Delta_{m}^{2}g\left(s,M_{s-},Y_{s-}\right)\ell\left(dz\right)ds\right]\\
 & +\frac{\rho_{1}\nu}{2}\mathbb{E}_{t}\left[\int_{t}^{T}\int_{0}^{\infty}e^{-r\left(s-t\right)}\left(\int_{s}^{T}A(T,u)\mathbb{E}_{s}\left[\sqrt{\sigma_{u}^{2}\bar{\sigma}_{u}^{2}}\right]du\right)\right.\\
 & \qquad\qquad\times\left.\left[\Delta_{x}^{2}\Lambda\Gamma BS\left(s,X_{s-},v_{s-}\right)+\Delta_{y}^{2}\Lambda\Gamma BS\left(s,X_{s-},v_{s-}\right)\right]\ell(dz)ds\right].
\end{align*}
The terms $\left(I.II\right)\ldots\left(I.VII\right)$, belong to
the error term $\mathcal{O}\left(\nu^{2}+\eta^{2}\right)$. We will
prove the previous statement for the terms $\left(I.III\right)$ and
$\left(I.V\right)$, the proof for the rest of the terms is analogous.
\begin{align*}
 & \left(I.III\right)+\left(I.V\right)\\
 & \quad=\frac{\rho_{1}\nu^{3}}{16}\mathbb{E}_{t}\left[\int_{t}^{T}e^{-r\left(s-t\right)}\Lambda\Gamma^{3}BS\left(s,X_{s},v_{s}\right)\left(\int_{s}^{T}A(T,z)\mathbb{E}_{s}\left[\sqrt{\sigma_{z}^{2}\bar{\sigma}_{z}^{2}}\right]dz\right)A^{2}(T,s)\bar{\sigma}_{s}^{2}ds\right]\\
 & \quad+\frac{\rho_{1}\nu^{3}}{4}\mathbb{E}_{t}\left[\int_{t}^{T}e^{-r\left(s-t\right)}\Lambda\Gamma^{2}BS\left(s,X_{s},v_{s}\right)\left(\int_{s}^{T}A(T,z)e^{-\kappa\left(z-t\right)}dz\right)A(T,s)\bar{\sigma}_{s}^{2}ds\right].
\end{align*}
Using the fact that $A\left(T,z\right)$ is a decreasing function,
defining $a_{s}\triangleq v_{s}\sqrt{T-s}$, and using the inequality
from Lemma $\left(\ref{lem: BS_BoundedDerivatives}\right)$, we can
upper bound the previous expression by
\begin{align*}
 & \left|\left(I.III\right)+\left(I.V\right)\right|\\
 & \quad\leq C\frac{\rho_{1}\nu^{3}}{16}\mathbb{E}_{t}\left[\int_{t}^{T}e^{-r\left(s-t\right)}\left(\frac{1}{a_{s}^{6}}+\frac{2}{a_{s}^{5}}+\frac{1}{a_{s}^{4}}\right)\left(\int_{s}^{T}\mathbb{E}_{s}\left[\sqrt{\sigma_{z}^{2}\bar{\sigma}_{z}^{2}}\right]dz\right)A^{3}(T,s)\bar{\sigma}_{s}^{2}ds\right]\\
 & \quad+C\frac{\rho_{1}\nu^{3}}{4}\mathbb{E}_{t}\left[\int_{t}^{T}e^{-r\left(s-t\right)}\left(\frac{1}{a_{s}^{4}}+\frac{1}{a_{s}^{3}}\right)A^{3}(T,s)\bar{\sigma}_{s}^{2}ds\right].
\end{align*}
Remember that it follows from Lemma $\left(\ref{lem: LD_derivatives}\right)$
that $\int_{s}^{T}\mathbb{E}_{s}\left[\sqrt{\sigma_{z}^{2}\bar{\sigma}_{z}^{2}}\right]dz\leq\int_{s}^{T}\mathbb{E}_{s}\left[\bar{\sigma}_{z}^{2}\right]dz+C\left(T-s\right)\nu^{2}=a_{s}^{2}+C\left(T-s\right)\nu^{2}$,
and from Lemma $\left(\ref{lem: LowerBound_IntExpectedVol}\right)$
$\left(i\right)$, that $\frac{a_{s}^{2}}{\varphi\left(s\right)}\geq\bar{\sigma}_{s}^{2}$.
Hence
\begin{align*}
 & \left|\left(I.III\right)+\left(I.V\right)\right|\\
 & \leq C\frac{\rho_{1}\nu^{3}}{16}\mathbb{E}_{t}\left[\int_{t}^{T}e^{-r\left(s-t\right)}\left(\frac{1}{a_{s}^{6}}+\frac{2}{a_{s}^{5}}+\frac{1}{a_{s}^{4}}\right)\frac{a_{s}^{4}}{\varphi\left(s\right)}A^{3}(T,s)ds\right]\\
 & \quad+C\frac{\rho_{1}\nu^{5}}{16}\mathbb{E}_{t}\left[\int_{t}^{T}e^{-r\left(s-t\right)}\left(\frac{1}{a_{s}^{6}}+\frac{2}{a_{s}^{5}}+\frac{1}{a_{s}^{4}}\right)\frac{a_{s}^{2}}{\varphi\left(s\right)}C\left(T-s\right)A^{3}(T,s)ds\right]\\
 & \quad+C\frac{\rho_{1}\nu^{3}}{16}\mathbb{E}_{t}\left[\int_{t}^{T}e^{-r\left(s-t\right)}\left(\frac{4}{a_{s}^{4}}+\frac{4}{a_{s}^{3}}\right)A^{3}(T,s)\frac{a_{s}^{2}}{\varphi\left(s\right)}ds\right]\\
 & \leq C\frac{\rho_{1}\nu^{3}}{16}\mathbb{E}_{t}\left[\int_{t}^{T}e^{-r\left(s-t\right)}\left(\frac{5}{a_{s}^{2}}+\frac{6}{a_{s}}+1\right)\frac{A^{3}(T,s)}{\varphi\left(s\right)}ds\right]\\
 & \quad+C\frac{\rho_{1}\nu^{5}}{16}\mathbb{E}_{t}\left[\int_{t}^{T}e^{-r\left(s-t\right)}\left(\frac{1}{a_{s}^{6}}+\frac{2}{a_{s}^{5}}+\frac{1}{a_{s}^{4}}\right)\frac{a_{s}^{2}}{\varphi\left(s\right)}C\left(T-s\right)A^{3}(T,s)ds\right].
\end{align*}
From $\left(ii\right)$ in Lemma $\left(\ref{lem: LowerBound_IntExpectedVol}\right)$,
we know that $a_{s}\geq\sqrt{\frac{\theta\kappa}{2}}\varphi\left(s\right)$.
Note also that $\varphi\left(s\right)\leq\frac{1}{\kappa}$ for all
$s\in\left[0,T\right]$. Taking into account that $A\left(T,s\right)\leq\varphi\left(s\right)\left(\frac{c_{2}}{\Gamma\left(\alpha\right)}\left(T-t\right)^{\alpha}+c_{1}\right)$,
we can finally upper bound the previous sum as follows
\[
\left|\left(I.III\right)+\left(I.V\right)\right|\leq C\frac{\rho_{1}\nu^{3}}{16}\mathbb{E}_{t}\left[\int_{t}^{T}e^{-r\left(s-t\right)}ds\right].
\]
 The same reasoning applies to obtain an upper bound of terms $\left(I.II\right)$,
$\left(I.IV\right)$. Notice the term $\left(I.I\right)$, depends
linearly on $\nu$ and therefore, it is part of the first order approximation
formula. We will now provide upper bounds for the discontinuous term
$\left(I.VI\right)$. We start applying Lemma $\left(\ref{lem: Discontinuous parts}\right)$
to the functions 
\begin{align*}
G_{1}\left(x\right) & =g\left(s,x,Y_{s-}\right),\\
\Delta^{2}G_{1}\left(M_{s-},c_{3}\eta z\right) & =c_{3}^{2}\eta^{2}z^{2}\int_{0}^{1}\partial_{m}^{2}g\left(s,M_{s-}+\lambda c_{3}\eta z,Y_{s-}\right)\left(1-\lambda\right)d\lambda,\\
G_{2}\left(x\right) & =\Lambda\Gamma BS\left(s,X_{s-},x\right)\\
\Delta^{2}G_{2}\left(v_{s-},c_{3}\eta z\right) & =c_{3}^{2}\eta^{2}z^{2}\int_{0}^{1}\partial_{v}^{2}\Lambda\Gamma BS\left(s,X_{s-},v_{s-}+\lambda c_{3}\eta z\right)\left(1-\lambda\right)d\lambda.
\end{align*}
Given that the terms $\Delta^{2}G_{i}$ are proportional to $\eta^{2}$,
all we need to prove is that the integrals in the term $\left(I.VI\right)$
have an upper bound, in order to properly justify that they belong
to the error term. 
\begin{itemize}
\item Using the inequality from Lemma $\left(\ref{lem: BS_BoundedDerivatives}\right)$,
we can upper bound the term $\left(I.VI\right)$ as follows
\begin{align*}
\left|\left(I.VI\right)\right| & \leq C\frac{\rho_{1}}{2}c_{3}^{2}\eta^{2}z^{2}\mathbb{E}_{t}\left[\int_{t}^{T}e^{-r\left(s-t\right)}\left(\frac{1}{a_{s}^{4}}+\frac{1}{a_{s}^{3}}\right)L\left[W,M^{c}\right]_{s}\right.\\
 & \qquad\times\left.v_{s}\left(T-s\right)\int_{0}^{\infty}\int_{0}^{1}\partial_{m}^{2}g\left(s,M_{s-}+\lambda c_{3}\eta z,Y_{s-}\right)\left(1-\lambda\right)d\lambda\ell\left(dz\right)ds\right].
\end{align*}
From Lemma $\left(\ref{lem: fwdvar_boundedderivatives}\right)$ we
can upper bound $\left|\partial_{m}^{2}g\left(s,M_{s-}+\lambda c_{3}\eta z,Y_{s-}\right)\right|$.
We also know that $L\left[W,M^{c}\right]_{s}=\nu\int_{s}^{T}A(T,r)\mathbb{E}_{s}\left[\sqrt{\sigma_{r}^{2}\bar{\sigma}_{r}^{2}}\right]dr\leq\nu A(T,s)\int_{s}^{T}\mathbb{E}_{s}\left[\sqrt{\sigma_{r}^{2}\bar{\sigma}_{r}^{2}}\right]dr$,
as $A\left(T,t\right)$ is a decreasing function of $t\in\left[0,T\right]$.
From Lemma $\left(\ref{lem: LD_derivatives}\right)$ follows that
$\int_{s}^{T}\mathbb{E}_{s}\left[\sqrt{\sigma_{r}^{2}\bar{\sigma}_{r}^{2}}\right]dr\geq\int_{s}^{T}\mathbb{E}_{s}\left[\bar{\sigma}_{r}^{2}\right]dr\triangleq a_{s}^{2}$,
and from Lemma $\left(\ref{lem: LowerBound_IntExpectedVol}\right)$
$\left(i\right)$, that $\frac{a_{s}^{2}}{\varphi\left(s\right)}\geq\bar{\sigma}_{s}^{2}$.
Therefore we can upper bound the term $\left(I.VI\right)$ as follows
\begin{align*}
\left|\left(I.VI\right)\right| & \leq C\frac{\rho_{1}}{2}c_{3}^{2}\eta^{2}z^{2}\mathbb{E}_{t}\left[\int_{t}^{T}e^{-r\left(s-t\right)}\left(\frac{1}{a_{s}^{4}}+\frac{1}{a_{s}^{3}}\right)L\left[W,M^{c}\right]_{s}\right.\\
 & \qquad\times\left.v_{s}\left(T-s\right)\int_{0}^{\infty}\left[\int_{0}^{1}\frac{1}{4\sqrt{T-s}}\left(\bar{\sigma}_{s}^{2}\varphi\left(s\right)\right)^{-\frac{3}{2}}\left(1-\lambda\right)d\lambda\right]\ell\left(dz\right)ds\right]\\
 & \leq C\frac{\rho_{1}}{8}c_{3}^{2}\eta^{2}z^{3}\mathbb{E}_{t}\left[\int_{t}^{T}e^{-r\left(s-t\right)}\left(\frac{1}{a_{s}^{4}}+\frac{1}{a_{s}^{3}}\right)L\left[W,M^{c}\right]_{s}v_{s}\sqrt{T-s}\left(\bar{\sigma}_{s}^{2}\varphi\left(s\right)\right)^{-\frac{3}{2}}ds\right]\\
 & \leq C\frac{\rho_{1}}{8}c_{3}^{2}\eta^{2}z^{3}\mathbb{E}_{t}\left[\int_{t}^{T}e^{-r\left(s-t\right)}\left(\frac{1}{a_{s}^{3}}+\frac{1}{a_{s}^{2}}\right)L\left[W,M^{c}\right]_{s}\left(a_{s}^{2}\right)^{-\frac{3}{2}}ds\right]\\
 & \leq C\frac{\rho_{1}\nu}{8}c_{3}^{2}\eta^{2}z^{3}\mathbb{E}_{t}\left[\int_{t}^{T}e^{-r\left(s-t\right)}\left(\frac{1}{a_{s}^{3}}+\frac{1}{a_{s}^{2}}\right)\frac{a_{s}^{2}}{\varphi\left(s\right)}\left(a_{s}^{2}\right)^{-\frac{3}{2}}A(T,s)ds\right]\\
 & \leq C\frac{\rho_{1}\nu}{8}c_{3}^{2}\eta^{2}z^{3}\mathbb{E}_{t}\left[\left(\frac{c_{2}}{\Gamma\left(\alpha\right)}\left(T-t\right)^{\alpha}+c_{1}\right)\int_{t}^{T}e^{-r\left(s-t\right)}\left(\frac{1}{a_{s}^{4}}+\frac{1}{a_{s}^{3}}\right)ds\right].
\end{align*}
Again, from $\left(ii\right)$ in Lemma $\left(\ref{lem: LowerBound_IntExpectedVol}\right)$,
we know that $a_{s}\geq\sqrt{\frac{\theta\kappa}{2}}\varphi\left(s\right)$
Replacing it in the previous inequality and taking into account that
$\varphi\left(s\right)\leq\frac{1}{\kappa}$ for all $s\in\left[0,T\right]$,
proves that the term $\left(I.VI\right)\in\mathcal{O}\left(\eta^{2}\right)$.
\item We can rewrite the term $\left(I.VII\right)$ recalling $\Delta^{2}G_{2}$
and the Delta-Gamma-Vega relationship given by equation $\left(\ref{eq: Delta-Gamma-Vega_Relationship(log-price)}\right)$.
If we also make use of Lemma $\left(\ref{lem: BS_BoundedDerivatives}\right)$
and use the same bounding techniques we have already been using along
the proof, we can rewrite the term as follows,
\begin{align*}
\left(I.VII\right) & =\frac{\rho_{1}}{2}\mathbb{E}_{t}\left[\int_{t}^{T}\int_{0}^{\infty}e^{-r\left(s-t\right)}L\left[W,M^{c}\right]_{s}\Delta_{x}^{2}\Lambda\Gamma BS\left(s,X_{s-},v_{s-}\right)\ell\left(dz\right)ds\right]\\
 & +\frac{\rho_{1}c_{3}^{2}\eta^{2}}{2}\mathbb{E}_{t}\left[\int_{t}^{T}\int_{0}^{\infty}e^{-r\left(s-t\right)}L\left[W,M^{c}\right]_{s}\right.\\
 & \qquad\times\left.\left[z^{2}\int_{0}^{1}\Lambda\Gamma^{3}BS\left(s,X_{s-},v_{s-}+\lambda c_{3}\eta z\right)\left(v_{s-}+\lambda c_{3}\eta z\right)^{2}\left(T-s\right)^{2}\left(1-\lambda\right)d\lambda\right]\ell(dz)ds\right]\\
 & \leq\frac{\rho_{1}}{2}\mathbb{E}_{t}\left[\int_{t}^{T}\int_{0}^{\infty}e^{-r\left(s-t\right)}L\left[W,M^{c}\right]_{s}\Delta_{x}^{2}\Lambda\Gamma BS\left(s,X_{s-},v_{s-}\right)\ell\left(dz\right)ds\right]\\
 & +C\frac{\nu\rho_{1}c_{3}^{2}\eta^{2}}{2}\mathbb{E}_{t}\left[\int_{t}^{T}e^{-r\left(s-t\right)}\left(\frac{A(T,s)}{\varphi\left(s\right)}\right)\left(\frac{1}{a_{s}^{4}}+\frac{2}{a_{s}^{3}}+\frac{1}{a_{s}^{2}}\right)\right.\\
 & \qquad\times\left.\int_{0}^{\infty}\left[\left(v_{s-}+\lambda c_{3}\eta z\right)^{2}\left(T-s\right)^{2}\right]\ell(dz)ds\right].
\end{align*}
The bounds used in the proof of the previous term also apply here,
ending the proof that shows the term $\left(I.VII\right)$ can be
written as
\[
\left(I.VII\right)=\frac{\rho_{1}}{2}\mathbb{E}_{t}\left[\int_{t}^{T}\int_{0}^{\infty}e^{-r\left(s-t\right)}L\left[W,M^{c}\right]_{s}\Delta_{x}^{2}\Lambda\Gamma BS\left(s,X_{s-},v_{s-}\right)\ell\left(dz\right)ds\right]+\mathcal{O}\left(\eta^{2}\right)
\]
\end{itemize}
\end{itemize}
This concludes the proof to show that
\begin{align*}
\left(I\right) & =\frac{\rho_{1}\nu}{2}\Lambda\Gamma BS\left(t,X_{t},v_{t}\right)\left(\int_{t}^{T}A(T,s)\mathbb{E}_{t}\left[\sqrt{\sigma_{s}^{2}\bar{\sigma}_{s}^{2}}\right]ds\right)\\
 & +\frac{\rho_{1}\nu}{2}\mathbb{E}_{t}\left[\int_{t}^{T}\int_{0}^{\infty}e^{-r\left(s-t\right)}\left(\int_{s}^{T}A(T,u)\mathbb{E}_{s}\left[\sqrt{\sigma_{u}^{2}\bar{\sigma}_{u}^{2}}\right]du\right)\Delta_{x}^{2}\Lambda\Gamma BS\left(s,X_{s-},v_{s-}\right)\ell\left(dz\right)ds\right]\\
 & +\mathcal{O}\left(v^{2}+\eta^{2}\right).
\end{align*}

\begin{itemize}
\item Step 2: Applying Corollary \ref{cor: Funct_DGV_Relationship} to term
$\left(II\right)$ in equation $\left(\ref{eq: BS expansion formula}\right)$
with $A\left(t,X_{t},v_{t}\right)=\Gamma^{2}BS\left(t,X_{t},v_{t}\right)$
and $B_{t}=\frac{1}{8}D\left[M^{c},M^{c}\right]_{t}$ and recalling
that $B_{T}=0$ by definition, we have
\begin{align*}
\left(II\right) & \qquad=\frac{1}{8}\Gamma^{2}BS\left(t,X_{t},v_{t}\right)D\left[M^{c},M^{c}\right]_{t}\\
 & \qquad+\frac{\rho_{1}}{16}\mathbb{E}_{t}\left[\int_{t}^{T}e^{-r\left(s-t\right)}\Lambda\Gamma^{3}BS\left(s,X_{s},v_{s}\right)D\left[M^{c},M^{c}\right]_{s}\sigma_{s}d\left[W,M^{c}\right]_{s}\right]\\
 & \qquad+\frac{1}{64}\mathbb{E}_{t}\left[\int_{t}^{T}e^{-r\left(s-t\right)}\Gamma^{4}BS\left(s,X_{s},v_{s}\right)D\left[M^{c},M^{c}\right]_{s}d\left[M^{c},M^{c}\right]_{s}\right]\\
 & \qquad+\frac{\rho_{1}}{8}\mathbb{E}_{t}\left[\int_{t}^{T}e^{-r\left(s-t\right)}\Lambda\Gamma^{2}BS\left(s,X_{s},v_{s}\right)\sigma_{s}d\left[W,D\left[M^{c},M^{c}\right]\right]_{s}\right]\\
 & \qquad+\frac{1}{16}\mathbb{E}_{t}\left[\int_{t}^{T}e^{-r\left(s-t\right)}\Gamma^{3}BS\left(s,X_{s},v_{s}\right)d\left[M^{c},D\left[M^{c},M^{c}\right]\right]_{s}\right]\\
 & \qquad+\frac{1}{8}\mathbb{E}_{t}\left[\int_{t}^{T}e^{-r\left(s-t\right)}\Gamma^{3}BS\left(s,X_{s},v_{s}\right)D\left[M^{c},M^{c}\right]_{s}\right.\\
 & \qquad\qquad\qquad\times\left.v_{s}\left(T-s\right)\int_{0}^{\infty}\Delta_{m}^{2}g\left(s,M_{s-},Y_{s-}\right)\ell\left(dz\right)ds\right]\\
 & \qquad+\frac{1}{8}\mathbb{E}_{t}\left[\int_{t}^{T}\int_{0}^{\infty}e^{-r\left(s-t\right)}D\left[M^{c},M^{c}\right]_{s}\right.\\
 & \qquad\qquad\qquad\times\left.\left[\Delta_{x}^{2}\Gamma^{2}BS\left(s,X_{s-},v_{s-}\right)+\Delta_{y}^{2}\Gamma^{2}BS\left(s,X_{s-},v_{s-}\right)\right]\ell(dz)ds\right].
\end{align*}
Applying Lemma \ref{lem: LD_derivatives}, the previous equation is
rewritten as
\begin{align*}
\left(II\right) & =\frac{\nu^{2}}{8}\Gamma^{2}BS\left(t,X_{t},v_{t}\right)\int_{t}^{T}A^{2}(T,s)\mathbb{E}_{t}\left[\bar{\sigma}_{s}^{2}\right]ds\\
 & \quad+\frac{\rho_{1}\nu^{3}}{16}\mathbb{E}_{t}\left[\int_{t}^{T}e^{-r\left(s-t\right)}\Lambda\Gamma^{3}BS\left(s,X_{s},v_{s}\right)\left(\int_{s}^{T}A^{2}(T,z)\mathbb{E}_{s}\left[\bar{\sigma}_{z}^{2}\right]dz\right)A(T,s)\sigma_{s}\sqrt{\bar{\sigma}_{s}^{2}}ds\right]\\
 & \quad+\frac{\nu^{4}}{64}\mathbb{E}_{t}\left[\int_{t}^{T}e^{-r\left(s-t\right)}\Gamma^{4}BS\left(s,X_{s},v_{s}\right)\left(\int_{s}^{T}A^{2}(T,z)\mathbb{E}_{s}\left[\bar{\sigma}_{z}^{2}\right]dz\right)A^{2}(T,s)\bar{\sigma}_{s}^{2}ds\right]\\
 & \quad+\frac{\rho_{1}\nu^{3}}{8}\mathbb{E}_{t}\left[\int_{t}^{T}e^{-r\left(s-t\right)}\Lambda\Gamma^{2}BS\left(s,X_{s},v_{s}\right)\left(\int_{s}^{T}A^{2}(T,z)e^{-\kappa\left(z-t\right)}dz\right)\sigma_{s}\bar{\sigma}_{s}ds\right]\\
 & \quad+\frac{\nu^{4}}{16}\mathbb{E}_{t}\left[\int_{t}^{T}e^{-r\left(s-t\right)}\Gamma^{3}BS\left(s,X_{s},v_{s}\right)\left(\int_{s}^{T}A^{2}(T,z)e^{-\kappa\left(z-t\right)}dz\right)A(T,s)\bar{\sigma}_{s}^{2}ds\right]\\
 & \quad+\frac{\nu^{2}}{8}\mathbb{E}_{t}\left[\int_{t}^{T}e^{-r\left(s-t\right)}\Gamma^{3}BS\left(s,X_{s},v_{s}\right)\left(\int_{s}^{T}A^{2}(T,u)\mathbb{E}_{s}\left[\bar{\sigma}_{u}^{2}\right]du\right)\right.\\
 & \qquad\qquad\times\left.v_{s}\left(T-s\right)\int_{0}^{\infty}\Delta_{m}^{2}g\left(s,M_{s-},Y_{s-}\right)\ell\left(dz\right)ds\right]\\
 & \quad+\frac{\nu^{2}}{8}\mathbb{E}_{t}\left[\int_{t}^{T}\int_{0}^{\infty}e^{-r\left(s-t\right)}\left(\int_{s}^{T}A^{2}(T,u)\mathbb{E}_{s}\left[\bar{\sigma}_{u}^{2}\right]du\right)\right.\\
 & \qquad\qquad\times\left.\left[\Delta_{x}^{2}\Gamma^{2}BS\left(s,X_{s-},v_{s-}\right)+\Delta_{y}^{2}\Gamma^{2}BS\left(s,X_{s-},v_{s-}\right)\right]\ell(dz)ds\right].
\end{align*}
Observe that all the terms in $\left(II\right)$ can be incorporated
into the error term, i.e. $\left(II\right)=\mathcal{O}\left(\nu^{2}\right).$
This is due to the dependency of each term on higher order of $\nu$
and the fact that all terms can be upper bounded following the reasoning
we provided for terms $\left(I.III\right)$ and $\left(I.V\right)$.
\item Step 3: Proving the term $\left(III\right)$ in equation $\left(\ref{eq: BS expansion formula}\right)$
belongs to the error term $\mathcal{O}\left(\eta^{2}\right)$, is
analogous to the discussion we did for the term $\left(I.VI\right)$.
\item Step 4: Again, an analogous discussion as the one performed in the
study of $\left(I.VII\right)$ applies here, showing that
\[
\left(IV\right)=\mathbb{E}_{t}\left[\int_{t}^{T}\int_{0}^{\infty}e^{-r\left(s-t\right)}\Delta_{x}^{2}\Lambda\Gamma BS\left(s,X_{s-},v_{s-}\right)\ell\left(dz\right)ds\right]+\mathcal{O}\left(\eta^{2}\right).
\]
 
\end{itemize}

\section{\label{sec:Appendix}Appendix}

In this appendix we gather additional technical lemmas.
\begin{lem}
\label{lem: Discontinuous parts}Let $G\in C^{2}\left(\mathbb{R}\right)$
and consider the expression $\Delta^{2}G\left(x,h\right)$ defined
as
\[
\Delta^{2}G\left(x,h\right)\triangleq G\left(x+h\right)-G\left(x\right)-hG^{\prime}\left(x\right).
\]
Then, the following equality holds
\begin{align}
\Delta^{2}G\left(x,h\right) & =h^{2}\int_{0}^{1}G^{\prime\prime}\left(x+\lambda h\right)\left(1-\lambda\right)d\lambda.\label{eq: Delta2_TaylorApprox}
\end{align}
\end{lem}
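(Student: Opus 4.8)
The plan is to recognize the asserted identity as the second-order Taylor expansion of $G$ about the point $x$, with the remainder written in integral form, and to derive it directly from the fundamental theorem of calculus followed by a single integration by parts. First I would use that $G\in C^{2}\left(\mathbb{R}\right)$ to write
\[
G\left(x+h\right)-G\left(x\right)=\int_{0}^{h}G^{\prime}\left(x+s\right)ds=h\int_{0}^{1}G^{\prime}\left(x+\lambda h\right)d\lambda,
\]
where the second equality is the substitution $s=\lambda h$, $ds=h\,d\lambda$.

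Next I would integrate the right-hand side by parts, but choosing the antiderivative of $d\lambda$ to be $\lambda-1$ rather than $\lambda$, so that the boundary contribution at the upper endpoint $\lambda=1$ vanishes. Concretely, with $u=G^{\prime}\left(x+\lambda h\right)$ and $dv=h\,d\lambda$, taking $v=h\left(\lambda-1\right)$ yields
\[
h\int_{0}^{1}G^{\prime}\left(x+\lambda h\right)d\lambda=\Bigl[h\left(\lambda-1\right)G^{\prime}\left(x+\lambda h\right)\Bigr]_{0}^{1}-h^{2}\int_{0}^{1}\left(\lambda-1\right)G^{\prime\prime}\left(x+\lambda h\right)d\lambda.
\]
The boundary term equals $hG^{\prime}\left(x\right)$, since the contribution at $\lambda=1$ is zero while the contribution at $\lambda=0$ is $-\left(-h\right)G^{\prime}\left(x\right)=hG^{\prime}\left(x\right)$.

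Finally, rewriting $-\left(\lambda-1\right)=1-\lambda$ in the remaining integral gives
\[
G\left(x+h\right)-G\left(x\right)=hG^{\prime}\left(x\right)+h^{2}\int_{0}^{1}\left(1-\lambda\right)G^{\prime\prime}\left(x+\lambda h\right)d\lambda,
\]
and subtracting $G\left(x\right)+hG^{\prime}\left(x\right)$ from both sides produces exactly $\Delta^{2}G\left(x,h\right)=h^{2}\int_{0}^{1}G^{\prime\prime}\left(x+\lambda h\right)\left(1-\lambda\right)d\lambda$. I do not expect a genuine obstacle here; the only point requiring care is the choice of integration constant so that the $hG^{\prime}\left(x\right)$ term is generated cleanly from the boundary, and the hypothesis $G\in C^{2}$ ensures that $G^{\prime\prime}$ is continuous, so every integral above is well defined and the manipulations are justified.
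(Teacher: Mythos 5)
Your proof is correct and amounts to the same thing as the paper's one-line proof, which simply cites Taylor's Theorem with integral remainder: your fundamental-theorem-of-calculus step followed by integration by parts with the antiderivative $v=h\left(\lambda-1\right)$ is precisely the standard derivation of that remainder formula, and all your computations (the boundary term yielding $hG^{\prime}\left(x\right)$ and the sign flip $-\left(\lambda-1\right)=1-\lambda$) check out. The only difference is that you spell out the details the paper leaves to the reference.
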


\begin{proof}
It is Taylor's Theorem with integral remainder.
\end{proof}
\begin{prop}
\label{prop: BlackScholes_DGV_Relationship}(Delta-Gamma-Vega Relationship)
Let 
\[
BS\left(t,x,y\right)=e^{x}\Phi\left(d_{+}\right)-e^{-r\left(T-t\right)}K\Phi\left(d_{-}\right),
\]
 where $\Phi$ denotes the cumulative distribution function of a standard
normal distribution and 
\[
d_{\pm}=\frac{x-\ln K+\left(r\pm\frac{y^{2}}{2}\right)\left(T-t\right)}{y\sqrt{T-t}};\qquad d_{+}=d_{-}+y\sqrt{T-t}.
\]
Then for every $t\in\left[0,T\right]$, the following formula, known
as the Delta-Gamma-Vega relationship, holds.
\begin{equation}
\partial_{y}BS\left(t,x,y\right)\frac{1}{y\left(T-t\right)}=\left(\partial_{xx}^{2}-\partial_{x}\right)BS\left(t,x,y\right).\label{eq: Delta-Gamma-Vega_Relationship(log-price)}
\end{equation}
\end{prop}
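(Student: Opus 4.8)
The plan is to reduce everything to explicit differentiation of the closed-form expression for $BS$, exploiting the single algebraic identity that makes all Black-Scholes computations collapse. Write $\tau\triangleq T-t$ and let $\phi$ denote the standard normal density, so that $\Phi'=\phi$. The cornerstone is the identity
\[
e^{x}\phi\left(d_{+}\right)=Ke^{-r\tau}\phi\left(d_{-}\right),
\]
which I would verify first by computing $d_{+}^{2}-d_{-}^{2}=\left(d_{+}-d_{-}\right)\left(d_{+}+d_{-}\right)$. Since $d_{+}-d_{-}=y\sqrt{\tau}$ and $d_{+}+d_{-}=2\left(x-\ln K+r\tau\right)/\left(y\sqrt{\tau}\right)$, this difference equals $2\left(x-\ln K+r\tau\right)$, and substituting into $\phi(d_+)/\phi(d_-)=\exp\left(-\tfrac{1}{2}\left(d_{+}^{2}-d_{-}^{2}\right)\right)$ gives the claim after exponentiating.

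Second, I would compute the right-hand side. Differentiating $BS$ in $x$ and using $\partial_{x}d_{+}=\partial_{x}d_{-}=1/(y\sqrt{\tau})$ together with the cornerstone identity, the two density contributions cancel and leave the Delta $\partial_{x}BS=e^{x}\Phi\left(d_{+}\right)$. Differentiating once more produces the Gamma, so that
\[
\left(\partial_{xx}^{2}-\partial_{x}\right)BS\left(t,x,y\right)=\frac{e^{x}\phi\left(d_{+}\right)}{y\sqrt{\tau}}.
\]

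Third, for the Vega I would use the form $d_{\pm}=\left(x-\ln K+r\tau\right)/(y\sqrt{\tau})\pm\tfrac{1}{2}y\sqrt{\tau}$, from which $\partial_{y}d_{+}-\partial_{y}d_{-}=\sqrt{\tau}$, and apply the cornerstone identity once more to the two terms of $\partial_{y}BS=e^{x}\phi(d_{+})\partial_{y}d_{+}-Ke^{-r\tau}\phi(d_{-})\partial_{y}d_{-}$, obtaining $\partial_{y}BS=e^{x}\phi\left(d_{+}\right)\sqrt{\tau}$. Dividing by $y\tau$ then reproduces exactly the expression for $\left(\partial_{xx}^{2}-\partial_{x}\right)BS$ displayed above, which is the asserted relationship.

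The computation is entirely routine; the only point requiring care is the bookkeeping of which $\phi\left(d_{\pm}\right)$ terms survive. The cornerstone identity is precisely what guarantees, on the one hand, that the $\Phi'$-terms in the Delta cancel and, on the other, that the two summands in the Vega combine into a single multiple of $e^{x}\phi(d_{+})$; keeping the two Gaussian densities expressed consistently through that identity is the main thing to watch.
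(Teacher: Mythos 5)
Your proposal is correct and follows essentially the same route as the paper: compute the Delta, Gamma and Vega explicitly and observe that both sides equal $e^{x}\phi\left(d_{+}\right)/\left(y\sqrt{T-t}\right)$. The only difference is that you isolate and verify the identity $e^{x}\phi\left(d_{+}\right)=Ke^{-r\left(T-t\right)}\phi\left(d_{-}\right)$ explicitly, which the paper uses tacitly when the density terms cancel in the Delta and when the two summands of the Vega combine; this is a welcome addition rather than a deviation.
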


\begin{proof}
We start by computing the log-Delta.\footnote{We will name log-Delta the change in the option price with respect
to the change in the underlying asset log-price. This is a different
sensitivity of what is commonly referred to as the Delta, i.e. the
change in the option price with respect to the change in the underlying
asset price.}
\begin{align*}
\partial_{x}BS\left(t,x,y\right) & =e^{x}\Phi\left(d_{+}\right).
\end{align*}
Now the log-Gamma is computed as
\begin{align*}
\partial_{xx}^{2}BS\left(t,x,y\right) & =e^{x}\Phi\left(d_{+}\right)+\frac{e^{x}\phi\left(d_{+}\right)}{y\sqrt{T-t}}.
\end{align*}
Therefore, we have that 
\[
\left(\partial_{xx}^{2}-\partial_{x}\right)BS\left(t,x,y\right)=\frac{e^{x}\phi\left(d_{+}\right)}{y\sqrt{T-t}}.
\]
On the other hand, the Vega is derived as follows
\begin{align*}
\partial_{y}BS\left(t,x,y\right) & =e^{x}\phi\left(d_{+}\right)\partial_{y}d_{+}-e^{-r\left(T-t\right)}K\phi\left(d_{-}\right)\partial_{y}d_{-}\\
 & =e^{x}\phi\left(d_{+}\right)\left[\partial_{y}d_{-}+\partial_{y}\left(y\sqrt{T-t}\right)\right]-e^{-r\left(T-t\right)}K\phi\left(d_{-}\right)\partial_{y}d_{-}\\
 & =e^{x}\phi\left(d_{+}\right)\partial_{y}\left(y\sqrt{T-t}\right)\\
 & =e^{x}\phi\left(d_{+}\right)\sqrt{T-t}.
\end{align*}
The relationship follows trivially from these computations.
\end{proof}
\begin{prop}
\label{prop: GaussianPDF_Product}Let $\phi\left(x,\sigma\right)$
denote the density function of the normal law with mean zero and standard
deviation $\sigma$. Then, for $\mu_{1},\mu_{2}\in\mathbb{R}$ and
$\sigma_{1},\sigma_{2}$ strictly positive real numbers we have
\begin{equation}
\phi\left(x-\mu_{1},\sigma_{1}\right)\phi\left(x-\mu_{2},\sigma_{2}\right)=\phi\left(x-\frac{\mu_{1}\sigma_{2}^{2}+\mu_{2}\sigma_{1}^{2}}{\sigma_{1}^{2}+\sigma_{2}^{2}},\sqrt{\frac{\sigma_{1}^{2}\sigma_{2}^{2}}{\sigma_{1}^{2}+\sigma_{2}^{2}}}\right)\phi\left(\mu_{1}-\mu_{2},\sqrt{\sigma_{1}^{2}+\sigma_{2}^{2}}\right).\label{eq: Gaussian_PDF_Product}
\end{equation}
\end{prop}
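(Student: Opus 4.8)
The plan is to prove the identity by direct computation, expanding both sides with the explicit Gaussian density $\phi(x-\mu,\sigma)=\frac{1}{\sqrt{2\pi}\,\sigma}\exp\left(-\frac{(x-\mu)^{2}}{2\sigma^{2}}\right)$ and then matching the exponential factors and the normalizing prefactors separately. First I would write the left-hand side as $\frac{1}{2\pi\sigma_{1}\sigma_{2}}\exp\left(-\frac{1}{2}\left[\frac{(x-\mu_{1})^{2}}{\sigma_{1}^{2}}+\frac{(x-\mu_{2})^{2}}{\sigma_{2}^{2}}\right]\right)$ and concentrate on the bracketed quadratic. Putting it over the common denominator $\sigma_{1}^{2}\sigma_{2}^{2}$, the numerator is $N=(\sigma_{1}^{2}+\sigma_{2}^{2})x^{2}-2(\mu_{1}\sigma_{2}^{2}+\mu_{2}\sigma_{1}^{2})x+(\mu_{1}^{2}\sigma_{2}^{2}+\mu_{2}^{2}\sigma_{1}^{2})$, a quadratic in $x$ with leading coefficient $\sigma_{1}^{2}+\sigma_{2}^{2}$.

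Next I would complete the square in $x$. Factoring out $\sigma_{1}^{2}+\sigma_{2}^{2}$ identifies the vertex at $\bar{\mu}\triangleq\frac{\mu_{1}\sigma_{2}^{2}+\mu_{2}\sigma_{1}^{2}}{\sigma_{1}^{2}+\sigma_{2}^{2}}$, so that $N=(\sigma_{1}^{2}+\sigma_{2}^{2})(x-\bar{\mu})^{2}+C$ for a residual constant $C$. Dividing by $\sigma_{1}^{2}\sigma_{2}^{2}$ and writing $\bar{\sigma}^{2}\triangleq\frac{\sigma_{1}^{2}\sigma_{2}^{2}}{\sigma_{1}^{2}+\sigma_{2}^{2}}$, the bracketed expression becomes $\frac{(x-\bar{\mu})^{2}}{\bar{\sigma}^{2}}+\frac{C}{\sigma_{1}^{2}\sigma_{2}^{2}}$, since $\frac{\sigma_{1}^{2}+\sigma_{2}^{2}}{\sigma_{1}^{2}\sigma_{2}^{2}}=\frac{1}{\bar{\sigma}^{2}}$.

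The one genuinely algebraic step, which I regard as the main (though routine) obstacle, is to show that the residual collapses to $\frac{C}{\sigma_{1}^{2}\sigma_{2}^{2}}=\frac{(\mu_{1}-\mu_{2})^{2}}{\sigma_{1}^{2}+\sigma_{2}^{2}}$. This reduces to the polynomial identity $(\mu_{1}^{2}\sigma_{2}^{2}+\mu_{2}^{2}\sigma_{1}^{2})(\sigma_{1}^{2}+\sigma_{2}^{2})-(\mu_{1}\sigma_{2}^{2}+\mu_{2}\sigma_{1}^{2})^{2}=\sigma_{1}^{2}\sigma_{2}^{2}(\mu_{1}-\mu_{2})^{2}$, which I would verify by expanding both sides. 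With this in hand the exponent equals $-\frac{1}{2}\frac{(x-\bar{\mu})^{2}}{\bar{\sigma}^{2}}-\frac{1}{2}\frac{(\mu_{1}-\mu_{2})^{2}}{\sigma_{1}^{2}+\sigma_{2}^{2}}$, which is exactly the exponent appearing on the right-hand side.

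Finally I would reconcile the prefactors. The two factors on the right contribute $\frac{1}{\sqrt{2\pi}\,\bar{\sigma}}\cdot\frac{1}{\sqrt{2\pi}\sqrt{\sigma_{1}^{2}+\sigma_{2}^{2}}}$, and since $\bar{\sigma}\sqrt{\sigma_{1}^{2}+\sigma_{2}^{2}}=\sqrt{\sigma_{1}^{2}\sigma_{2}^{2}}=\sigma_{1}\sigma_{2}$, this equals $\frac{1}{2\pi\sigma_{1}\sigma_{2}}$, matching the left-hand side. Combining the matched exponential factor with the matched prefactor gives the stated factorization, completing the proof.
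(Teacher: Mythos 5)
Your proposal is correct and follows essentially the same route as the paper: expand both densities, complete the square in $x$ around $\bar{\mu}=\frac{\mu_{1}\sigma_{2}^{2}+\mu_{2}\sigma_{1}^{2}}{\sigma_{1}^{2}+\sigma_{2}^{2}}$, and reduce the residual to the identity $(\mu_{1}^{2}\sigma_{2}^{2}+\mu_{2}^{2}\sigma_{1}^{2})(\sigma_{1}^{2}+\sigma_{2}^{2})-(\mu_{1}\sigma_{2}^{2}+\mu_{2}\sigma_{1}^{2})^{2}=\sigma_{1}^{2}\sigma_{2}^{2}(\mu_{1}-\mu_{2})^{2}$, which is exactly the computation underlying the paper's constants $A$ and $C$. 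Your separate treatment of the normalizing prefactors is in fact cleaner than the paper's device of folding $\frac{1}{\sqrt{2\pi(\sigma_{1}^{2}+\sigma_{2}^{2})}}$ into the exponent as a logarithm, but this is a cosmetic difference, not a different argument.
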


\begin{proof}
This proof results from basic algebraic manipulations. Note that we
can trivially write the product of densities as follows
\begin{align*}
\phi\left(x-\mu_{1},\sigma_{1}\right)\phi\left(x-\mu_{2},\sigma_{2}\right) & =\frac{1}{\sqrt{2\pi\sigma_{1}^{2}}}\exp\left\{ -\frac{\left(x-\mu_{1}\right)^{2}}{2\sigma_{1}^{2}}\right\} \frac{1}{\sqrt{2\pi\sigma_{2}^{2}}}\exp\left\{ -\frac{\left(x-\mu_{2}\right)^{2}}{2\sigma_{2}^{2}}\right\} \\
 & =\frac{1}{\sqrt{2\pi\sigma_{1}^{2}\sigma_{2}^{2}2\pi}}\exp\left\{ -\frac{\left(x-\mu_{1}\right)^{2}}{2\sigma_{1}^{2}}-\frac{\left(x-\mu_{2}\right)^{2}}{2\sigma_{2}^{2}}\right\} \\
 & =\frac{1}{\sqrt{2\pi\sigma_{1}^{2}\sigma_{2}^{2}2\pi}}\exp\left\{ \frac{-\sigma_{2}^{2}\left(x-\mu_{1}\right)^{2}-\sigma_{1}^{2}\left(x-\mu_{2}\right)^{2}}{2\sigma_{1}^{2}\sigma_{2}^{2}}\right\} .
\end{align*}
Since $\sigma_{1}^{2}+\sigma_{2}^{2}>0$ we can rewrite the previous
expression as
\begin{align*}
\phi\left(x-\mu_{1},\sigma_{1}\right)\phi\left(x-\mu_{2},\sigma_{2}\right) & =\frac{1}{\sqrt{2\pi\frac{\sigma_{1}^{2}\sigma_{2}^{2}}{\sigma_{1}^{2}+\sigma_{2}^{2}}\left(\sigma_{1}^{2}+\sigma_{2}^{2}\right)2\pi}}\\
 & \times\exp\left\{ \frac{-\frac{\sigma_{2}^{2}}{\sigma_{1}^{2}+\sigma_{2}^{2}}\left(x-\mu_{1}\right)^{2}-\frac{\sigma_{1}^{2}}{\sigma_{1}^{2}+\sigma_{2}^{2}}\left(x-\mu_{2}\right)^{2}}{2\frac{\sigma_{1}^{2}\sigma_{2}^{2}}{\sigma_{1}^{2}+\sigma_{2}^{2}}}\right\} .
\end{align*}
Notice from the previous expression that 
\[
\sqrt{2\pi\frac{\sigma_{1}^{2}\sigma_{2}^{2}}{\sigma_{1}^{2}+\sigma_{2}^{2}}\left(\sigma_{1}^{2}+\sigma_{2}^{2}\right)2\pi}=\sqrt{2\pi\frac{\sigma_{1}^{2}\sigma_{2}^{2}}{\sigma_{1}^{2}+\sigma_{2}^{2}}}\sqrt{2\pi\left(\sigma_{1}^{2}+\sigma_{2}^{2}\right)},
\]
therefore, replacing this in the product of the two pdf's and considering
that $x=e^{\ln x}$, one obtains that
\begin{align*}
 & \phi\left(x-\mu_{1},\sigma_{1}\right)\phi\left(x-\mu_{2},\sigma_{2}\right)\\
 & =\frac{1}{\sqrt{2\pi\frac{\sigma_{1}^{2}\sigma_{2}^{2}}{\sigma_{1}^{2}+\sigma_{2}^{2}}}}\exp\left\{ \ln\frac{1}{\sqrt{2\pi\left(\sigma_{1}^{2}+\sigma_{2}^{2}\right)}}\right\} \exp\left\{ \frac{-\frac{\sigma_{2}^{2}}{\sigma_{1}^{2}+\sigma_{2}^{2}}\left(x-\mu_{1}\right)^{2}-\frac{\sigma_{1}^{2}}{\sigma_{1}^{2}+\sigma_{2}^{2}}\left(x-\mu_{2}\right)^{2}}{2\frac{\sigma_{1}^{2}\sigma_{2}^{2}}{\sigma_{1}^{2}+\sigma_{2}^{2}}}\right\} \\
 & \frac{1}{\sqrt{2\pi\frac{\sigma_{1}^{2}\sigma_{2}^{2}}{\sigma_{1}^{2}+\sigma_{2}^{2}}}}\exp\left\{ -\frac{1}{2}\ln\left(2\pi\left(\sigma_{1}^{2}+\sigma_{2}^{2}\right)\right)+\frac{-\frac{\sigma_{2}^{2}}{\sigma_{1}^{2}+\sigma_{2}^{2}}\left(x-\mu_{1}\right)^{2}-\frac{\sigma_{1}^{2}}{\sigma_{1}^{2}+\sigma_{2}^{2}}\left(x-\mu_{2}\right)^{2}}{2\frac{\sigma_{1}^{2}\sigma_{2}^{2}}{\sigma_{1}^{2}+\sigma_{2}^{2}}}\right\} .
\end{align*}
Focusing on the exponential term, we will perform some further algebraic
manipulations on it as follows. 
\begin{align*}
 & -\frac{1}{2}\ln\left(2\pi\left(\sigma_{1}^{2}+\sigma_{2}^{2}\right)\right)+\frac{-\frac{\sigma_{2}^{2}}{\sigma_{1}^{2}+\sigma_{2}^{2}}\left(x-\mu_{1}\right)^{2}-\frac{\sigma_{1}^{2}}{\sigma_{1}^{2}+\sigma_{2}^{2}}\left(x-\mu_{2}\right)^{2}}{2\frac{\sigma_{1}^{2}\sigma_{2}^{2}}{\sigma_{1}^{2}+\sigma_{2}^{2}}}\\
 & \quad=\frac{-\left(\frac{\sigma_{1}^{2}\sigma_{2}^{2}}{\sigma_{1}^{2}+\sigma_{2}^{2}}\right)\ln\left(2\pi\left(\sigma_{1}^{2}+\sigma_{2}^{2}\right)\right)-\frac{\sigma_{2}^{2}}{\sigma_{1}^{2}+\sigma_{2}^{2}}\left(x^{2}-2x\mu_{1}+\mu_{1}^{2}\right)-\frac{\sigma_{1}^{2}}{\sigma_{1}^{2}+\sigma_{2}^{2}}\left(x^{2}-2x\mu_{2}+\mu_{2}^{2}\right)}{2\frac{\sigma_{1}^{2}\sigma_{2}^{2}}{\sigma_{1}^{2}+\sigma_{2}^{2}}}\\
 & \quad=\frac{\frac{-\sigma_{1}^{2}-\sigma_{2}^{2}}{\sigma_{1}^{2}+\sigma_{2}^{2}}x^{2}-2x\left(\frac{-\mu_{1}\sigma_{2}^{2}-\mu_{2}\sigma_{1}^{2}}{\sigma_{1}^{2}+\sigma_{2}^{2}}\right)+\left(\frac{-\mu_{1}^{2}\sigma_{2}^{2}-\mu_{2}^{2}\sigma_{1}^{2}-\sigma_{1}^{2}\sigma_{2}^{2}\ln\left(2\pi\left(\sigma_{1}^{2}+\sigma_{2}^{2}\right)\right)}{\sigma_{1}^{2}+\sigma_{2}^{2}}\right)}{2\frac{\sigma_{1}^{2}\sigma_{2}^{2}}{\sigma_{1}^{2}+\sigma_{2}^{2}}}\\
 & \quad=\frac{-x^{2}+2x\left(\frac{-\mu_{1}\sigma_{2}^{2}-\mu_{2}\sigma_{1}^{2}}{\sigma_{1}^{2}+\sigma_{2}^{2}}\right)-\left(\frac{\mu_{1}^{2}\sigma_{2}^{2}+\mu_{2}^{2}\sigma_{1}^{2}+\sigma_{1}^{2}\sigma_{2}^{2}\ln\left(2\pi\left(\sigma_{1}^{2}+\sigma_{2}^{2}\right)\right)}{\sigma_{1}^{2}+\sigma_{2}^{2}}\right)}{2\frac{\sigma_{1}^{2}\sigma_{2}^{2}}{\sigma_{1}^{2}+\sigma_{2}^{2}}}.
\end{align*}
Now, one can write the previous equality as a second order polynomial
of the form $-x^{2}+2Ax-A^{2}-C=-\left(x-A\right)^{2}-C$, where $A$
and $C$ are given by
\begin{align*}
A & =\frac{\mu_{1}\sigma_{2}^{2}+\mu_{2}\sigma_{1}^{2}}{\sigma_{1}^{2}+\sigma_{2}^{2}},\\
C & =\frac{\sigma_{1}^{2}\sigma_{2}^{2}}{\left(\sigma_{1}^{2}+\sigma_{2}^{2}\right)^{2}}\left(\mu_{1}-\mu_{2}\right)^{2}+\frac{\sigma_{1}^{2}\sigma_{2}^{2}}{\sigma_{1}^{2}+\sigma_{2}^{2}}\ln\left(2\pi\left(\sigma_{1}^{2}+\sigma_{2}^{2}\right)\right).
\end{align*}
Equiped with this, we can now rewrite the product of the two Gaussian
densities as
\begin{align*}
\phi\left(x-\mu_{1},\sigma_{1}\right)\phi\left(x-\mu_{2},\sigma_{2}\right) & =\frac{1}{\sqrt{2\pi\frac{\sigma_{1}^{2}\sigma_{2}^{2}}{\sigma_{1}^{2}+\sigma_{2}^{2}}}}\exp\left\{ \frac{-\left(x-A\right)^{2}-C}{2\frac{\sigma_{1}^{2}\sigma_{2}^{2}}{\sigma_{1}^{2}+\sigma_{2}^{2}}}\right\} ,\\
 & =\phi\left(x-A,\sqrt{\frac{\sigma_{1}^{2}\sigma_{2}^{2}}{\sigma_{1}^{2}+\sigma_{2}^{2}}}\right)\exp\left\{ \frac{-C}{2\frac{\sigma_{1}^{2}\sigma_{2}^{2}}{\sigma_{1}^{2}+\sigma_{2}^{2}}}\right\} .
\end{align*}
We will focus on the second exponential term to further expand it.
\begin{align*}
\exp\left\{ \frac{-C}{2\frac{\sigma_{1}^{2}\sigma_{2}^{2}}{\sigma_{1}^{2}+\sigma_{2}^{2}}}\right\}  & =\exp\left\{ -\frac{\frac{\sigma_{1}^{2}\sigma_{2}^{2}}{\left(\sigma_{1}^{2}+\sigma_{2}^{2}\right)^{2}}\left(\mu_{1}-\mu_{2}\right)^{2}+\frac{\sigma_{1}^{2}\sigma_{2}^{2}}{\sigma_{1}^{2}+\sigma_{2}^{2}}\ln\left(2\pi\left(\sigma_{1}^{2}+\sigma_{2}^{2}\right)\right)}{2\frac{\sigma_{1}^{2}\sigma_{2}^{2}}{\sigma_{1}^{2}+\sigma_{2}^{2}}}\right\} \\
 & =\frac{1}{\sqrt{2\pi\left(\sigma_{1}^{2}+\sigma_{2}^{2}\right)}}\exp\left\{ -\frac{\left(\mu_{1}-\mu_{2}\right)^{2}}{2\left(\sigma_{1}^{2}+\sigma_{2}^{2}\right)}\right\} \\
 & =\phi\left(\mu_{1}-\mu_{2},\sqrt{\sigma_{1}^{2}+\sigma_{2}^{2}}\right).
\end{align*}
This ends the proof.
\end{proof}

\subsection{\label{subsec: Proof Lemma LD_derivatives}Proof of Lemma \ref{lem: LD_derivatives}}

We will only prove by order statements 3, 9 and 10, as the rest are
trivially deduced from the definitions and results provided in previous
sections. 
\begin{itemize}
\item In order to see that $\mathbb{E}_{t}\left[\sigma_{s}\sqrt{\bar{\sigma}_{s}^{2}}\right]=\bar{\sigma}_{t}^{2}e^{-\kappa\left(s-t\right)}+\theta\left(1-e^{-\kappa\left(s-t\right)}\right)+\mathcal{O}\left(\nu^{2}+\eta^{2}\right)$
we consider the process 
\[
u_{r}\triangleq\sqrt{\left(U_{s}+\nu Z_{r}^{s}\right)\left(U_{s}+c_{1}\nu Z_{r}^{s}+c_{2}\nu I_{0+}^{H-\frac{1}{2}}Z_{r}^{s}+c_{3}\eta J_{r}\right)},
\]
where 
\begin{align*}
U_{t} & \triangleq\theta+e^{-\kappa t}\left(\bar{\sigma}_{0}^{2}-\theta\right),\\
Z_{r}^{s} & \triangleq\int_{0}^{r}e^{-\kappa\left(s-u\right)}\sqrt{\bar{\sigma}_{u}^{2}}dW_{u},\\
I_{0+}^{H-\frac{1}{2}}Z_{r}^{s} & =\frac{1}{\Gamma\left(H-\frac{1}{2}\right)}\int_{0}^{r}Z_{u}\left(s-u\right)^{H-\frac{3}{2}}du\\
 & =\frac{1}{\Gamma\left(H-\frac{1}{2}\right)}\int_{0}^{r}\left(\int_{0}^{u}e^{-\kappa\left(u-v\right)}\sqrt{\bar{\sigma}_{v}^{2}}dW_{v}\right)\left(s-u\right)^{H-\frac{3}{2}}du,\qquad r\in\left[0,s\right].
\end{align*}
Note that $u_{s}=\sqrt{\sigma_{s}^{2}\bar{\sigma}_{s}^{2}}$, $Z_{s}^{s}=Z_{s},s\in\left[0,T\right]$
and $I_{0+}^{H-\frac{1}{2}}Z_{s}^{s}=I_{0+}^{H-\frac{1}{2}}Z_{s}$.
Applying Fubini, we can write 
\begin{align*}
I_{0+}^{H-\frac{1}{2}}Z_{r}^{s} & =\frac{1}{\Gamma\left(H-\frac{1}{2}\right)}\int_{0}^{r}\left(\int_{v}^{r}e^{-\kappa\left(u-v\right)}\left(s-u\right)^{H-\frac{3}{2}}du\right)\sqrt{\bar{\sigma}_{v}^{2}}dW_{v}\\
 & =\int_{0}^{r}\psi\left(r,s,v\right)\sqrt{\bar{\sigma}_{v}^{2}}dW_{v},
\end{align*}
where 
\begin{align*}
\psi\left(r,s,v\right) & =\frac{1}{\Gamma\left(H-\frac{1}{2}\right)}\int_{v}^{r}e^{-\kappa\left(u-v\right)}\left(s-u\right)^{H-\frac{3}{2}}du\\
 & =\frac{e^{-\kappa\left(s-v\right)}}{\Gamma\left(H-\frac{1}{2}\right)}\int_{s-v}^{s-r}e^{\kappa w}w^{H-\frac{3}{2}}dw.
\end{align*}
Next, consider the process 
\begin{align*}
\Pi_{r}^{s} & =c_{1}\nu Z_{r}^{s}+c_{2}\nu I_{0+}^{H-\frac{1}{2}}Z_{r}^{s}\\
 & =c_{1}\nu\int_{0}^{r}e^{-\kappa\left(s-u\right)}\sqrt{\bar{\sigma}_{u}^{2}}dW_{u}+c_{2}\nu\int_{0}^{r}\psi\left(r,s,u\right)\sqrt{\bar{\sigma}_{u}^{2}}dW_{u}\\
 & =\int_{0}^{r}\zeta\left(r,s,u\right)\sqrt{\bar{\sigma}_{u}^{2}}dW_{u},
\end{align*}
where 
\[
\zeta\left(r,s,u\right)=c_{1}\nu e^{-\kappa\left(s-u\right)}+c_{2}\nu\psi\left(r,s,u\right).
\]
Note that 
\[
\zeta\left(r,s,r\right)=c_{1}\nu e^{-\kappa\left(s-r\right)}+c_{2}\nu\psi\left(r,s,r\right)=c_{1}\nu e^{-\kappa\left(s-r\right)},
\]
and 
\[
\partial_{1}\zeta\left(r,s,u\right)=c_{2}\nu\partial_{1}\psi\left(r,s,u\right)=\frac{c_{2}\nu}{\Gamma\left(H-\frac{1}{2}\right)}e^{-\kappa\left(r-u\right)}\left(s-r\right)^{H-\frac{3}{2}}.
\]
Therefore, 
\begin{align*}
d\Pi_{r}^{s} & =\left(c_{1}\nu e^{-\kappa\left(s-u\right)}+c_{2}\nu\psi\left(r,s,u\right)\right)\sqrt{\bar{\sigma}_{r}^{2}}dW_{r}\\
 & +\left(\frac{c_{2}\nu}{\Gamma\left(H-\frac{1}{2}\right)}\int_{0}^{r}e^{-\kappa\left(r-u\right)}\left(s-r\right)^{H-\frac{3}{2}}\sqrt{\bar{\sigma}_{u}^{2}}dW_{u}\right)dr.
\end{align*}
Consider the following process $\Theta_{r}^{s}=\Pi_{r}^{s}+c_{3}\eta J_{r}$.
We have that 
\begin{align*}
dZ_{r}^{s} & =e^{-\kappa\left(s-r\right)}\sqrt{\bar{\sigma}_{r}^{2}}dW_{r},\\
d\Theta_{r}^{s} & =d\Pi_{r}^{s}+c_{3}\eta dJ_{r},\\
d\Theta_{r}^{s,c} & =d\Pi_{r}^{s},\\
d\langle Z^{s},Z^{s}\rangle_{r} & =e^{-2\kappa\left(s-r\right)}\bar{\sigma}_{r}^{2}dr,\\
d\langle Z^{r},\Theta^{s,c}\rangle_{r} & =\left\{ c_{1}\nu e^{-\kappa\left(s-u\right)}+c_{2}\nu\psi\left(r,s,u\right)\right\} e^{-\kappa\left(s-r\right)}\bar{\sigma}_{r}^{2}dr,\\
d\langle\Theta^{s,c},\Theta^{s,c}\rangle_{t} & =\left(c_{1}\nu e^{-\kappa\left(s-u\right)}+c_{2}\nu\psi\left(r,s,u\right)\right)^{2}\bar{\sigma}_{r}^{2}dr.
\end{align*}
Next, we can apply It\^{o} formula to $f\left(Z_{r}^{s},\Theta_{r}^{s}\right),$where
$f$ is the function defined by 
\[
f\left(x,y\right)\triangleq\sqrt{\left(U_{s}+\nu x\right)\left(U_{s}+y\right)}.
\]
We have that 
\begin{align*}
\partial_{1}f\left(x,y\right) & =\frac{\nu}{2}\left(\frac{U_{s}+y}{U_{s}+\nu x}\right)^{1/2},\qquad\partial_{2}f\left(x,y\right)=\frac{1}{2}\left(\frac{U_{s}+\nu x}{U_{s}+y}\right)^{1/2},\\
\partial_{11}f\left(x,y\right) & =-\frac{\nu^{2}}{4}\left(U_{s}+y\right)^{1/2}\left(U_{s}+\nu x\right)^{-3/2},\\
\partial_{12}f\left(x,y\right) & =\frac{\nu}{4}\left(U_{s}+y\right)^{-1/2}\left(U_{s}+\nu x\right)^{-1/2},\\
\partial_{22}f\left(x,y\right) & =-\frac{1}{4}\left(U_{s}+y\right)^{-3/2}\left(U_{s}+\nu x\right)^{1/2}.
\end{align*}
Note that 
\[
f\left(Z_{s}^{s},\Theta_{s}^{s}\right)=\sqrt{\bar{\sigma}_{s}^{2}\sigma_{s}^{2}}=\sqrt{\left(U_{s}+\nu Z_{s}\right)\left(U_{s}+c_{1}\nu Z_{s}+c_{2}\nu I_{0+}^{H-\frac{1}{2}}Z_{s}+c_{3}\eta J_{s}\right)}.
\]
Now, an application of the It\^{o} formula yields the following expression
\begin{align*}
 & f\left(Z_{s}^{s},\Theta_{s}^{s}\right)\\
 & =f\left(Z_{0}^{s},\Theta_{0}^{s}\right)+\int_{0}^{s}\partial_{1}f\left(Z_{r}^{s},\Theta_{r}^{s}\right)dZ_{r}^{s}+\int_{0}^{s}\partial_{2}f\left(Z_{r}^{s},\Theta_{r}^{s}\right)d\Theta_{r}^{s,c}\\
 & \quad+\int_{0}^{s}\frac{1}{2}\partial_{11}^{2}f\left(Z_{r}^{s},\Theta_{r}^{s}\right)d\langle Z^{s},Z^{s}\rangle_{r}+\int_{0}^{s}\partial_{12}^{2}f\left(Z_{r}^{s},\Theta_{r}^{s}\right)d\langle Z^{s},\Theta^{s,c}\rangle_{r}\\
 & \quad+\int_{0}^{s}\frac{1}{2}\partial_{22}^{2}f\left(Z_{r}^{s},\Theta_{r}^{s}\right)d\langle\Theta^{s,c},\Theta^{s,c}\rangle_{r}\\
 & \quad+\int_{0}^{s}\int_{0}^{\infty}\Delta_{y}f\left(Z_{r}^{s},\Theta_{r-}^{s}\right)\tilde{N}\left(dr,dz\right)+\int_{0}^{s}\int_{0}^{\infty}\Delta_{y}^{2}f\left(Z_{r}^{s},\Theta_{r-}^{s}\right)\ell\left(dz\right)dr\\
 & =f\left(Z_{0}^{s},\Theta_{0}^{s}\right)+\frac{\nu}{2}\int_{0}^{s}\left(\frac{U_{s}+\Theta_{r}^{s}}{U_{s}+\nu Z_{r}^{s}}\right)^{1/2}e^{-\kappa\left(s-r\right)}\sqrt{\bar{\sigma}_{r}^{2}}dW_{r}\\
 & \quad+\frac{\nu}{2}\int_{0}^{s}\left(\frac{U_{s}+\nu Z_{r}^{s}}{U_{s}+\Theta_{r}^{s}}\right)^{1/2}\left[\left(c_{1}e^{-\kappa\left(s-u\right)}+c_{2}\psi\left(r,s,u\right)\right)\sqrt{\bar{\sigma}_{r}^{2}}dW_{r}\right.\\
 & \quad\qquad\left.+\left(\frac{c_{2}}{\Gamma\left(H-\frac{1}{2}\right)}\int_{0}^{r}e^{-\kappa\left(r-u\right)}\left(s-r\right)^{H-\frac{3}{2}}\sqrt{\bar{\sigma}_{u}^{2}}dW_{u}\right)dr\right]\\
 & \quad-\frac{\nu^{2}}{8}\int_{0}^{s}\left(U_{s}+\Theta_{r}^{s}\right)^{1/2}\left(U_{s}+\nu Z_{r}^{s}\right)^{-3/2}e^{-2\kappa\left(s-r\right)}\bar{\sigma}_{r}^{2}dr\\
 & \quad+\frac{\nu^{2}}{4}\int_{0}^{s}\left(U_{s}+\Theta_{r}^{s}\right)^{-1/2}\left(U_{s}+\nu Z_{r}^{s}\right)^{-1/2}\left\{ c_{1}e^{-\kappa\left(s-u\right)}+c_{2}\psi\left(r,s,u\right)\right\} e^{-\kappa\left(s-r\right)}\bar{\sigma}_{r}^{2}dr\\
 & \quad-\frac{\nu^{2}}{8}\int_{0}^{s}\left(U_{s}+\Theta_{r}^{s}\right)^{-3/2}\left(U_{s}+\nu Z_{r}^{s}\right)^{1/2}\left(c_{1}e^{-\kappa\left(s-u\right)}+c_{2}\psi\left(r,s,u\right)\right)^{2}\bar{\sigma}_{r}^{2}dr\\
 & \quad+\int_{0}^{s}\int_{0}^{\infty}\Delta_{y}f\left(Z_{r}^{s},\Theta_{r-}^{s}\right)\tilde{N}\left(dr,dz\right)+\int_{0}^{s}\int_{0}^{\infty}\Delta_{y}^{2}f\left(Z_{r}^{s},\Theta_{r-}^{s}\right)\ell\left(dz\right)dr.
\end{align*}
Now, taking expectations and writing the terms with $\nu^{2}$ as
an error term of order $\mathcal{O}\left(\nu^{2}\right)$, we can
rewrite the previous equation as 
\begin{align*}
\mathbb{E}\left[f\left(Z_{s}^{s},\Theta_{s}^{s}\right)\right] & =\mathbb{E}\left[f\left(Z_{0}^{s},\Theta_{0}^{s}\right)\right]\\
 & \quad+\frac{c_{2}\nu}{2\Gamma\left(H-\frac{1}{2}\right)}\mathbb{E}\left[\int_{0}^{s}\int_{0}^{r}\left(\frac{U_{s}+\nu Z_{r}^{s}}{U_{s}+\Theta_{r}^{s}}\right)^{1/2}e^{-\kappa\left(r-u\right)}\left(s-r\right)^{H-\frac{3}{2}}\sqrt{\bar{\sigma}_{u}^{2}}dW_{u}dr\right]\\
 & \quad+\mathbb{E}\left[\int_{0}^{s}\int_{0}^{\infty}\Delta_{y}^{2}f\left(Z_{r}^{s},\Theta_{r-}^{s}\right)\ell\left(dz\right)dr\right]+\mathcal{O}\left(\nu^{2}\right).
\end{align*}
Again, using Fubini and noting the integrand inside the expectation
for the continuous term is square integrable, we obtain 
\begin{align*}
\mathbb{E}\left[f\left(Z_{s}^{s},\Theta_{s}^{s}\right)\right] & =\mathbb{E}\left[U_{s}\right]+\mathbb{E}\left[\int_{0}^{s}\int_{0}^{\infty}\Delta_{y}^{2}f\left(Z_{r}^{s},\Theta_{r-}^{s}\right)\ell\left(dz\right)dr\right]+\mathcal{O}\left(\nu^{2}\right),
\end{align*}
where $f\left(Z_{0}^{s},\Theta_{0}^{s}\right)=U_{s}$. Finally, it
only remains to notice that the term $\Delta_{y}^{2}f\left(Z_{r}^{s},\Theta_{r-}^{s}\right)=\partial_{22}f\left(Z_{s}^{s},\Theta_{s}^{s}\right)c_{3}^{2}\eta^{2}z^{2}\in\mathcal{O}\left(\eta^{2}\right)$,
allowing us to write 
\[
\mathbb{E}\left[f\left(Z_{s}^{s},\Theta_{s}^{s}\right)\right]=\mathbb{E}\left[\sqrt{\sigma_{s}^{2}\bar{\sigma}_{s}^{2}}\right]=\mathbb{E}\left[U_{s}\right]+\mathcal{O}\left(\nu^{2}+\eta^{2}\right).
\]
\item In order to see that $dL\left[W,M^{c}\right]_{t}=\nu^{2}\left(\int_{t}^{T}A(T,s)e^{-\kappa\left(s-t\right)}ds\right)\bar{\sigma}_{t}dW_{t}-\nu\bar{\sigma}_{t}^{2}A\left(T,t\right)dt+\mathcal{O}\left(\nu^{3}+\nu\eta^{2}\right)dt$,
we can compute the following: 
\begin{align*}
dL\left[W,M^{c}\right]_{t} & =d\left(\nu\int_{t}^{T}A(T,s)\mathbb{E}_{t}\left[\sqrt{\sigma_{s}^{2}\bar{\sigma}_{s}^{2}}\right]ds\right)\\
 & =d\left(\nu\int_{t}^{T}A(T,s)\left[\bar{\sigma}_{t}^{2}e^{-\kappa\left(s-t\right)}+\theta\left(1-e^{-\kappa\left(s-t\right)}\right)+\mathcal{O}\left(\nu^{2}+\eta^{2}\right)\right]ds\right)\\
 & =d\left(\nu\bar{\sigma}_{t}^{2}\int_{t}^{T}A(T,s)e^{-\kappa\left(s-t\right)}ds\right)+d\left(\nu\theta\int_{t}^{T}A(T,s)\left(1-e^{-\kappa\left(s-t\right)}\right)ds\right)\\
 & \qquad+d\left(\nu\int_{t}^{T}A(T,s)\mathcal{O}\left(\nu^{2}+\eta^{2}\right)ds\right)\\
 & =\nu d\bar{\sigma}_{t}^{2}\int_{t}^{T}A(T,s)e^{-\kappa\left(s-t\right)}ds+\nu\bar{\sigma}_{t}^{2}d\left(\int_{t}^{T}A(T,s)e^{-\kappa\left(s-t\right)}ds\right)\\
 & \qquad+\nu\theta d\left(\int_{t}^{T}A(T,s)\left(1-e^{-\kappa\left(s-t\right)}\right)ds\right)+\nu\mathcal{O}\left(\nu^{2}+\eta^{2}\right)d\left(\int_{t}^{T}A(T,s)ds\right).
\end{align*}
Applying Leibniz rule to derivate under the integral sign we can rewrite
the previous expression as 
\begin{align*}
dL\left[W,M^{c}\right]_{t} & =\nu d\bar{\sigma}_{t}^{2}\int_{t}^{T}A(T,s)e^{-\kappa\left(s-t\right)}ds\\
 & \qquad+\nu\bar{\sigma}_{t}^{2}\left[-A\left(T,t\right)dt+\kappa\left(\int_{t}^{T}A(T,s)e^{-\kappa\left(s-t\right)}ds\right)dt\right]\\
 & \qquad+\nu\theta\left[-\kappa\left(\int_{t}^{T}A(T,s)e^{-\kappa\left(s-t\right)}ds\right)dt\right]\\
 & \qquad-\nu\mathcal{O}\left(\nu^{2}+\eta^{3}\right)A(T,t)dt\\
 & =\nu^{2}\left(\int_{t}^{T}A(T,s)e^{-\kappa\left(s-t\right)}ds\right)\bar{\sigma}_{t}dW_{t}-\nu\bar{\sigma}_{t}^{2}A\left(T,t\right)dt+\mathcal{O}\left(\nu^{3}+\nu\eta^{2}\right).
\end{align*}
\item In order to see that $dD\left[M^{c},M^{c}\right]_{t}=\nu^{3}\left(\int_{t}^{T}A^{2}(T,s)e^{-\kappa\left(s-t\right)}ds\right)\bar{\sigma}_{t}dW_{t}-\nu^{2}\bar{\sigma}_{t}^{2}A^{2}\left(T,t\right)dt$,
we proceed in an analogous way as in the previous set of computations.
\begin{align*}
dD\left[M^{c},M^{c}\right]_{t} & =d\left(\nu^{2}\int_{t}^{T}A^{2}(T,s)\mathbb{E}_{t}\left[\bar{\sigma}_{s}^{2}\right]ds\right)\\
 & =\nu^{2}d\left(\int_{t}^{T}A^{2}(T,s)\left[\bar{\sigma}_{t}^{2}e^{-\kappa\left(s-t\right)}+\theta\left(1-e^{-\kappa\left(s-t\right)}\right)\right]ds\right)\\
 & =\nu^{2}\left[d\left(\bar{\sigma}_{t}^{2}\int_{t}^{T}A^{2}(T,s)e^{-\kappa\left(s-t\right)}ds\right)+\theta d\left(\int_{t}^{T}A^{2}(T,s)\left(1-e^{-\kappa\left(s-t\right)}\right)ds\right)\right]\\
 & =\nu^{2}\left[d\bar{\sigma}_{t}^{2}\int_{t}^{T}A^{2}(T,s)e^{-\kappa\left(s-t\right)}ds+\bar{\sigma}_{t}^{2}d\left(\int_{t}^{T}A^{2}(T,s)e^{-\kappa\left(s-t\right)}ds\right)\right]\\
 & \qquad+\nu^{2}\theta d\left(\int_{t}^{T}A^{2}(T,s)\left(1-e^{-\kappa\left(s-t\right)}\right)ds\right)\\
 & =\nu^{3}\left(\int_{t}^{T}A^{2}(T,s)e^{-\kappa\left(s-t\right)}ds\right)\bar{\sigma}_{t}dW_{t}-\nu^{2}\bar{\sigma}_{t}^{2}A^{2}\left(T,t\right)dt.
\end{align*}
\end{itemize}
\bibliographystyle{plain}
\bibliography{Bibliography}

\end{document}